\documentclass[12pt]{article}

\usepackage{amsmath}
\usepackage{amssymb}
\usepackage{amsthm}
\usepackage{url}
\usepackage{changepage}
\usepackage{microtype}
\usepackage{tikz}
\usetikzlibrary{arrows.meta,decorations.pathmorphing,positioning}
\usepackage{centernot}
\usepackage{turnstile}
\usepackage{xcolor}

\definecolor{hardred}{RGB}{180,30,30}
\definecolor{oxblue}{RGB}{0,33,71}
\definecolor{camblue}{RGB}{163,193,173}
\definecolor{midgray}{RGB}{210,210,210}

\theoremstyle{plain}
\newtheorem{theorem}{Theorem}[section]
\newtheorem{lemma}[theorem]{Lemma}
\newtheorem{corollary}[theorem]{Corollary}
\newtheorem{conjecture}[theorem]{Conjecture}
\newtheorem{assumption}[theorem]{Assumption}

\theoremstyle{definition}
\newtheorem{definition}[theorem]{Definition}

\title{
Toward a Characterization of Simulation Between Arithmetic Theories
}
\author{Hunter Monroe}
\date{July 2026}
\begin{document}
\maketitle
\begin{abstract}
We study when a sound arithmetic theory $\mathcal S{\supseteq}S^1_2$ with polynomial-time decidable axioms efficiently proves the bounded consistency statements $Con_{\mathcal S{+}\phi}(n)$ for a true sentence $\phi$. Equivalently, we ask when $\mathcal S$, viewed as a proof system, simulates $\mathcal S{+}\phi$. The paper's unconditional results constrain possible characterizations in several complementary ways. First, for finitely axiomatized sequential $\mathcal S$, if $EA{\vdash}Con_{\mathcal S}{\rightarrow}Con_{\mathcal S{+}\phi}$, then $\mathcal S$ interprets $\mathcal S{+}\phi$, implying $\mathcal S{\sststile{}{n^{O(1)}}}Con_{\mathcal S}(p(n)){\rightarrow}Con_{\mathcal S{+}\phi}(n)$ for some polynomial $p$, and hence $\mathcal S{\sststile{}{n^{O(1)}}}Con_{\mathcal S{+}\phi}(n)$. Second, if $\mathcal S$ fails to simulate $\mathcal S{+}\phi$ for some true $\phi$, then for all sufficiently large $k$ it also fails to simulate $S^1_2{+}\phi_{BB}(k)$, where $\phi_{BB}(k)$ asserts the exact value of the $k$-state Busy Beaver function. $\mathcal B$-certified simulation of a target $\mathcal U$ yields $\mathcal B{\vdash}Con_{\mathcal S}{\rightarrow}Con_{\mathcal U}$, giving certification barriers rather than external lower bounds.

The paper's central conjectural proposal is: for sound, finitely axiomatized sequential $\mathcal S$, if $EA{\not\vdash}Con_{\mathcal S}{\rightarrow}Con_{\mathcal S{+}\phi}$, then for every constant $c{>}0$, $\mathcal S{\centernot{\sststile{}{n^c}}}Con_{\mathcal S{+}\phi}(n)$. Under this proposal, hardness follows in canonical cases where $\phi$ is $Con_{\mathcal S}$ or a Kolmogorov-randomness axiom. The latter yields further conjectural consequences and extensions.
\end{abstract}

\section{Introduction}
\emph{``\ldots the essence of the big open problems in complexity theory could be logical, rather than combinatorial''} --- Pudl\'ak \cite{PudlakLogicalFoundations}

Let $\mathcal S{\supseteq}S^1_2$ be a sound arithmetic theory with polynomial-time decidable axioms, and let $\phi$ be a true sentence independent of $\mathcal S$. We study when $\mathcal S$ efficiently proves the bounded consistency statements $Con_{\mathcal S{+}\phi}(n)$, where $Con_{\mathcal S{+}\phi}(n)$ asserts that $\mathcal S{+}\phi$ has no proof of $0{=}1$ within $n$ symbols.

From the perspective of proof complexity, this asks when $\mathcal S$, viewed as a proof system, simulates $\mathcal S{+}\phi$.\footnote{For background on proof complexity, see Kraj\'i\v{c}ek~\cite{Krajicekproof}.} Independence alone does not prevent simulation: there are true independent sentences $\phi$ such that $\mathcal S$ polynomial-time interprets $\mathcal S{+}\phi$, and hence efficiently proves $Con_{\mathcal S{+}\phi}(n)$; see Pudl\'ak~\cite[Lemma~3.6]{PudlakFiniteDomain}. On the other hand, if no optimal proof system exists, then for some true sentence $\phi$ one has for all $c$, $\mathcal S{\centernot{\sststile{}{n^c}}}Con_{\mathcal S{+}\phi}(n)$; see Kraj\'i\v{c}ek and Pudl\'ak~\cite[Theorem~2.1]{Krajicek}.\footnote{We use the notation $\mathcal S{\sststile{}{n^{O(1)}}}\phi(n)$ to mean that there exists a constant $c$ such that $\mathcal S$ has proofs of $\phi(n)$ of size at most $n^c$ for all sufficiently large $n$. Hardness statements are written explicitly in the form ``for every constant $c$, $\mathcal S\centernot{\sststile{}{n^c}}\phi(n)$.''} Thus the problem is to identify a criterion for $\phi$ that governs simulation and non-simulation.

We formulate this as follows.

\medskip
\begin{quote}
\noindent\textbf{Characterization Problem.}
For every sound theory $\mathcal S{\supseteq}S^1_2$ and every true sentence $\phi$ independent of $\mathcal S$, determine when $\mathcal S{\sststile{}{n^{O(1)}}}Con_{\mathcal S{+}\phi}(n)$, and when instead $\mathcal S{\centernot{\sststile{}{n^c}}}Con_{\mathcal S{+}\phi}(n)$ for every constant $c$. Equivalently, determine when $\mathcal S$ simulates $\mathcal S{+}\phi$ as a proof system.\footnote{We limit attention to true $\phi$ to isolate non-simulation arising from lack of access to additional \emph{correct} information, rather than trivial failure due to unsound extensions, and because this is the regime relevant for applications to tautologies, where $\phi$ expresses the truth of a family of propositional formulas.}
\end{quote}
\medskip

A benchmark case is $\phi{=}Con_{\mathcal S}$; Pudl\'ak~\cite[Problem~1]{Pudlak1986length} conjectures that $\mathcal S{\centernot{\sststile{}{n^c}}}Con_{\mathcal S{+}Con_{\mathcal S}}(n)$ for every $c$. This conjecture implies that higher absolute consistency strength is a sufficient condition for non-simulation (Theorem~\ref{thm:higher-absolute-consistency-hardness} below). The present paper asks for a more general organizing principle for arbitrary true independent sentences $\phi$, ideally a condition that is both necessary and sufficient.

The paper's first contribution identifies a sufficient condition for simulation for finitely axiomatized sequential $\mathcal S$. If $EA{\vdash}Con_{\mathcal S}{\rightarrow}Con_{\mathcal S{+}\phi}$, then $\mathcal S$ interprets $\mathcal S{+}\phi$, where EA is Elementary Arithmetic (Visser~\cite{Visser2017}). By proof translation, there exists a polynomial $p$ such that $\mathcal S{\sststile{}{n^{O(1)}}}Con_{\mathcal S}(p(n)){\rightarrow}$ $Con_{\mathcal S{+}\phi}(n)$. Thus Theorem~\ref{thm:relative-consistency-implies-simulation} shows that relative consistency ($EA{\vdash}Con_{\mathcal S}{\rightarrow}$ $Con_{\mathcal S{+}\phi}$) implies simulation ($\mathcal S{\sststile{}{n^{O(1)}}}Con_{\mathcal S{+}\phi}(n)$) for $\mathcal S$ finitely axiomatized and sequential. Taking this theorem to be tight leads to the paper's central conjectural proposal, \textbf{Higher Relative Consistency}, stated informally at the end of this introduction and developed, together with its main consequences, in Section~\ref{sec:higher-relative-consistency}. In particular, the case $\phi{=}Con_{\mathcal S}$ and the Kolmogorov-randomness axioms arise directly as natural instances of the same general obstruction.

The paper's second contribution shows that canonical incompleteness phenomena already suffice to witness non-simulation. More precisely, if $\mathcal S$ fails to simulate $\mathcal S{+}\phi$ for some true sentence $\phi$, then for all sufficiently large $k$ it also fails to simulate $S^1_2{+}\phi_{BB}(k)$, where $\phi_{BB}(k)$ asserts the exact value of the $k$-state Busy Beaver function, namely the maximum halting time of any halting $k$-state Turing machine (Theorem~\ref{thm:busy-beaver-transfer}).\footnote{See Rado~\cite{Rado1962}.} The change in base theory here is deliberate: the theorem shows that once hardness occurs above $S^1_2$, it is already witnessed by a canonical Busy Beaver extension of the weak base theory $S^1_2$. This aligns with Aaronson's observation that, for sufficiently large $k$, true Busy Beaver sentences can prove the consistency of arbitrarily strong computably axiomatized theories~\cite[Proposition~3]{AaronsonBusyBeaver}. A complementary extension-normal-form lemma shows that non-simulation of any consistent target theory $\mathcal T$ can be recoded as non-simulation of the one-sentence extension $\mathcal S{+}Con_{\mathcal T}$ (Lemma~\ref{lem:extension-normal-form}). Thus the hardness side of the Characterization Problem already appears in two canonical forms: exact Busy Beaver extensions of the weak base and one-sentence consistency extensions of the original theory.

The paper's third contribution separates external simulation from certification relative to a specified arithmetic base. For theories $\mathcal S$ and $\mathcal U$ with standard efficient proof predicates, if a base $\mathcal B$ proves the uniform arithmetized statement that $\mathcal S$ proves $Con_{\mathcal U}(n)$ for all sufficiently large $n$, then $\mathcal B$ proves $Con_{\mathcal S}{\rightarrow}Con_{\mathcal U}$ (Theorem~\ref{thm:certified-simulation-explained}). For $\mathcal B{=}EA$ and $\mathcal U{=}\mathcal S{+}\phi$, this is exactly the conclusion demanded by Feasible Reflection. Specializations yield non-certification results for the extensions by $Con_{\mathcal S}$, by exact Busy Beaver facts, and, beyond finitely many cases, by Kolmogorov-random axioms. These are certification barriers in the named bases; they do not establish that the corresponding external polynomial proof families fail to exist.

Taken together, these results constrain possible characterizations and suggest two complementary ways of thinking about hardness. First, in line with the Busy Beaver reduction and the certification theorem, simulation should be controlled by information visible to weak arithmetic; this points toward a structural criterion formulated in terms of relative-consistency transfer. Second, canonical sources of unprovable information such as Busy Beaver values and Kolmogorov-random strings suggest a more semantic or information-theoretic obstruction: simulation should fail when it would require access to true information that the base theory cannot itself recover through the public proof-theoretic description of the theories.

These considerations motivate a single structural picture of hardness. The main proposal of the paper is \textbf{Higher Relative Consistency (HRC)}, which asserts that, for a true sentence $\phi$, simulation fails exactly when weak arithmetic cannot certify that adjoining $\phi$ preserves consistency.

\medskip
\begin{quote}
\noindent\textbf{Higher Relative Consistency (informal).}
For every true sentence $\phi$: if $EA{\not\vdash}Con_{\mathcal S}{\rightarrow}Con_{\mathcal S{+}\phi}$, then for every constant $c$, $\mathcal S\centernot{\sststile{}{n^c}}Con_{\mathcal S{+}\phi}(n)$.
\end{quote}
\medskip

In other words, failure of relative-consistency should already rule out simulation. Combined with the positive simulation theorem, this yields a natural conjectural picture in which simulation occurs exactly when weak arithmetic can certify that adjoining $\phi$ preserves consistency.

The remainder of the paper is organized as follows. Section~\ref{sec:preliminaries} provides preliminaries. Section~\ref{sec:unconditional-constraints} presents unconditional constraints on any characterization. Section~\ref{sec:desiderata} develops heuristic support for the proposed criterion and states the \textbf{Feasible Reflection} and \textbf{Kolmogorov Hardness} conjectures. Section~\ref{sec:higher-relative-consistency} states \textbf{HRC} and derives its main consequences. Section~\ref{sec:certified-simulation} proves the base-relative certification theorem and its certification-barrier corollaries. Section~\ref{sec:further-extensions} develops extensions of Kolmogorov Hardness. Section~\ref{sec:provability} discusses provability and standard-model uniformity. Section~\ref{sec:conclusion} concludes.
\section{Preliminaries}\label{sec:preliminaries}

We work with sound arithmetical theories $\mathcal S{\supseteq}S^1_2$ whose axioms are decidable in polynomial time. Throughout, $\phi$ denotes a true arithmetical sentence, and $\mathcal S{+}\phi$ is the corresponding true extension of $\mathcal S$. Unless explicitly stated otherwise, implications of the form $Con_{\mathcal S}{\rightarrow}Con_{\mathcal S{+}\phi}$ are understood as formalized in $EA$. This is the level at which the interpretability criteria used later are naturally stated.

We fix a standard arithmetization of syntax and a fixed presentation of each theory under discussion. Throughout, $Con_{\mathcal S}$ denotes the corresponding arithmetized consistency predicate. Proofs are encoded as binary strings, and all proof lengths are measured in symbols under this encoding. As usual, any two reasonable codings yield polynomially equivalent proof lengths, so statements of the form $\mathcal S{\sststile{}{n^{O(1)}}}\phi(n)$ are robust under the choice of coding. When internal proof-code transformations are used, the presentations are taken in the usual explicit form in which proof checking and the displayed inclusion $S^1_2{\subseteq}\mathcal S$ are represented by elementary syntactic maps. Formalizations in $EA$ and in $S^1_2$ are treated separately in their customary languages, or through the standard definitional translations; no inclusion between those two base theories is assumed.

For a theory $\mathcal S$, let $Con_{\mathcal S}(n)$ denote the bounded consistency statement asserting that there is no $\mathcal S$-proof of $0{=}1$ of length at most $n$. We write $\mathcal S{\sststile{}{n^c}}\phi(n)$ to mean that, for all sufficiently large $n$, the sentence $\phi(n)$ has an $\mathcal S$-proof of length at most $n^c$. Likewise, $\mathcal S{\sststile{}{n^{O(1)}}}\phi(n)$ means that such proofs exist with polynomially bounded length. For a sentence $\theta$ and a numerical bound $L$, we write $\mathcal S\vdash_{\leq L}\theta$ if $\theta$ has an $\mathcal S$-proof of length at most $L$, and $\mathcal S\nvdash_{\leq L}\theta$ otherwise. This is pointwise notation for a single sentence; it is distinct from the family-level notation $\mathcal S{\sststile{}{n^c}}\phi(n)$ above.

We say that $\mathcal S$ \emph{simulates} $\mathcal S{+}\phi$ if $\mathcal S{\sststile{}{n^{O(1)}}}Con_{\mathcal S{+}\phi}(n)$. Thus the Characterization Problem asks when $\mathcal S$ admits polynomial-size proofs of the bounded consistency statements for the true extension $\mathcal S{+}\phi$.

We say that $\phi$ \emph{raises the relative-consistency strength} of $\mathcal S$ if $EA{\not\vdash}Con_{\mathcal S}{\rightarrow}$ $Con_{\mathcal S{+}\phi}$, where $EA$ is Elementary Arithmetic (Visser~\cite{Visser2017}). We introduce new terminology, saying that $\mathcal S$ has \emph{feasible relative consistency} for $\phi$ if there is a polynomial $p$ such that $\mathcal S{\sststile{}{n^{O(1)}}}Con_{\mathcal S}(p(n)){\rightarrow}Con_{\mathcal S{+}\phi}(n)$.

We distinguish throughout between external proof-length bounds and formalized internal implications. The statement $\mathcal S{\sststile{}{n^{O(1)}}}Con_{\mathcal S{+}\phi}(n)$ is an external assertion about the existence of short $\mathcal S$-proofs, whereas $EA{\vdash}Con_{\mathcal S}{\rightarrow}Con_{\mathcal S{+}\phi}$ is an internal relative-consistency implication in a weak base theory. Much of the paper is concerned with when the former should force the latter.

Finally, we use standard notions of interpretability. When additional hypotheses such as finite axiomatizability or sequentiality are needed, they will be stated explicitly. In the finitely axiomatized sequential setting used below, the relevant equivalence is the Friedman--Visser interpretability criterion: Friedman's characterization treats the appropriate consistency statement for a finitely axiomatized target as the weakest sentence whose adjunction yields an interpretation, and Visser's Interpretation Existence Lemma supplies the weak-base relative-consistency-to-interpretation direction~\cite{Visser2017}. Thus, in this regime, $EA{\vdash}Con_{\mathcal S}{\rightarrow}Con_{\mathcal S{+}\phi}$ is equivalent to an interpretation of $\mathcal S{+}\phi$ in $\mathcal S$, relative to the fixed presentations and formalized consistency predicates. Polynomial-time interpretation yields polynomial-overhead proof translation. These standard facts underlie the paper's unconditional upper-bound mechanism.

Whenever an external proof transformation is used to derive polynomial-size $\mathcal S$-proofs, we will state explicitly whether the transformation is merely true in the standard model, formalizable in $EA$, or available with polynomial-size $\mathcal S$-proofs.

These notions fix the framework for the paper's central question: whether simulation of $\mathcal S{+}\phi$ by $\mathcal S$ is governed by relative-consistency transfer, and in particular by the weak-base implication $EA{\vdash}Con_{\mathcal S}{\rightarrow}Con_{\mathcal S{+}\phi}$.
\section{Unconditional Constraints on Characterizations}\label{sec:unconditional-constraints}
This section presents unconditional constraints at the level of external simulation. The first shows in certain settings that relative consistency implies feasible relative consistency and hence simulation. The second shows that any hard true $\phi$ can, in a precise sense, be replaced by one of Busy Beaver form. We also record an extension normal form that converts hardness for an arbitrary target theory into hardness for a one-sentence consistency extension. Together, these results isolate both the positive mechanism underlying simulation and canonical forms of hardness, constraining any possible characterization.

The literature shows that simulation is possible in special cases. In particular, polynomial-time interpretability implies efficient provability of the bounded consistency statements $Con_{\mathcal S{+}\phi}(n)$; see Theorem~\ref{thm:interpretability-implies-simulation}.\footnote{Freund and Pakhomov~\cite{FreundPakhomov} prove polynomial-size finite-consistency proofs for a slow-consistency extension of Peano Arithmetic, which is not finitely axiomatized.} By contrast, Pudl\'ak~\cite{Pudlak1986length} conjectures that this fails already for $\phi{=}Con_{\mathcal S}$.\footnote{Khaniki~\cite{Khaniki} studies computable jump operators, which produce from every proof system one that it cannot simulate; whether such operators exist is open, and Pudl\'ak's Conjecture would supply one via $\mathcal S{\mapsto}\mathcal S{+}Con_{\mathcal S}$.} Unconditional lower bounds of the form $\mathcal S\centernot{\sststile{}{n^{O(1)}}}Con_{\mathcal S{+}\phi}(n)$ are not known for sound theories $\mathcal S{\supseteq}S^1_2$; such a result would imply that no optimal proof system exists, hence $\mathbf{NP}{\neq}\mathbf{coNP}$.

\begin{figure}[t]
\centering
\small
\begin{tikzpicture}[x=1cm,y=1cm,>=Latex]
  \coordinate (Lbot) at (0,0);
  \coordinate (Ltop) at (0,7.20);
  \coordinate (Mbot) at (3.25,0);
  \coordinate (Mtop) at (3.25,7.20);
  \coordinate (Rbot) at (11.55,0);
  \coordinate (Rtop) at (11.55,7.20);

  \begin{scope}
    \fill[camblue!35] (0.08,0.18) rectangle (3.15,7.02);
    \fill[black!9] (3.32,0.18) rectangle (11.25,7.02);
  \end{scope}

  \draw[very thick,oxblue] (Lbot) -- (Ltop);
  \node[anchor=east,font=\Large\bfseries] at (-0.16,3.60) {$\mathcal S$};

  \draw[very thick,oxblue] (Mbot) -- (Mtop);
  \node[font=\large,align=center,text width=3.05cm] at (1.63,6.22) {$\mathcal S$ interprets $\mathcal S{+}\phi$};

  \node[circle,draw=oxblue,very thick,fill=white,minimum size=4.85cm,align=center] (C) at (7.15,4.05) {};
  \node[align=center,font=\large\bfseries,text width=3.65cm] at (7.15,5.32) {Computable~$\phi$};
  \node[circle,draw=hardred,thick,fill=white,minimum size=1.05cm,align=center,font=\large\bfseries] (Con) at (7.15,4.05) {$Con_{\mathcal S}$};

  \draw[->,thick] (5.42,4.05) -- (Con.west);
  \draw[->,thick] (8.88,4.05) -- (Con.east);
  \draw[->,thick] (7.15,2.20) -- (Con.south);
  \draw[->,thick] (7.15,5.90) -- (Con.north);

  \node[draw=none,fill=none,inner sep=4pt,font=\bfseries\normalsize,align=center,text width=3.10cm] at (1.62,0.82) {Easy $\phi$\\[-1pt]\footnotesize $\mathcal S{\sststile{}{n^{O(1)}}}Con_{\mathcal S{+}\phi}(n)$};
  \node[draw=none,fill=none,inner sep=4pt,font=\bfseries\normalsize,align=center,text width=5.85cm] at (7.20,0.74) {Hard $\phi?$\\[-1pt]\footnotesize $\forall c{:}\mathcal S{\centernot{\sststile{}{n^c}}}Con_{\mathcal S{+}\phi}(n)$};
\end{tikzpicture}
\caption{Current knowledge about simulation by $\mathcal S$. The left region records the known easy zone: when $\mathcal S$ interprets $\mathcal S{+}\phi$, one obtains feasible bounded-consistency proofs~\cite{PudlakFiniteDomain}. The middle region records the open problem for computable true $\phi$, with $Con_{\mathcal S}$ as the central test case. Khaniki~\cite{Khaniki} shows that if a computable jump operator exists, then Pudl\'ak's conjecture holds; equivalently, computable hard-jump behavior transfers to the canonical consistency extension $Con_{\mathcal S}$.}
\label{fig:current-knowledge}
\end{figure}
The results proved in this section constrain possible answers to the Characterization Problem in complementary ways. The upper-bound theorem isolates a sufficient condition for simulation, formulated in terms of relative consistency rather than polynomial-time interpretability. The Busy Beaver transfer shows that any instance of non-simulation has a canonical Busy Beaver witness over the weak base, while the extension normal form recodes arbitrary target-theory hardness as hardness for a one-sentence consistency extension. Together, these results clarify the positive mechanism underlying simulation and two canonical forms of hardness.
\subsection{A Known Sufficient Condition for Simulation}

A proof translation argument by Je\v{r}\'abek, as presented by Pudl\'ak~\cite[Lemma~3.6]{PudlakFiniteDomain}, shows that if $\mathcal S$ polynomial-time interprets $\mathcal S{+}\phi$, then polynomial-size proofs of $Con_{\mathcal S}(n)$ yield polynomial-size proofs of $Con_{\mathcal S{+}\phi}(n)$. Moreover, for a given $\mathcal S$, there are true sentences $\phi$ independent of $\mathcal S$ such that $\mathcal S$ polynomial-time interprets $\mathcal S{+}\phi$, for example H-Rosser sentences as in H{\'a}jek--Pudl{\'a}k~\cite[Theorem~4.5(5)]{HajekPudlak}. Thus, simulation is not ruled out for every true independent sentence $\phi$---non-simulation requires more than independence. Figure \ref{fig:current-knowledge} summarizes current knowledge.

We present that argument with a theorem statement and proof in a form convenient for this paper's exposition.

\begin{theorem}\label{thm:interpretability-implies-simulation}
(Je\v{r}\'abek, via Pudl\'ak~\cite[Lemma~3.6]{PudlakFiniteDomain}) Suppose $\mathcal S{\supseteq}S^1_2$ and $\mathcal S{\sststile{}{n^{O(1)}}}Con_{\mathcal S}(n)$. If $\mathcal S$ polynomial-time interprets $\mathcal S{+}\phi$, then $\mathcal S{\sststile{}{n^{O(1)}}}Con_{\mathcal S{+}\phi}(n)$.
\end{theorem}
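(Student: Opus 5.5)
The plan is to argue by contraposition at the level of individual proofs. Fix a polynomial-time interpretation $I$ of $\mathcal S{+}\phi$ in $\mathcal S$. I would use it to turn any short $\mathcal S{+}\phi$-proof of $0{=}1$ into a short $\mathcal S$-proof of $0{=}1$. I would then verify this translation inside $\mathcal S$ with polynomial-size proofs. The resulting implication, combined with the assumed short proofs of $Con_{\mathcal S}(n)$, will give the conclusion. Concretely, I aim for a polynomial $q$ such that
\[
\mathcal S{\sststile{}{n^{O(1)}}} Con_{\mathcal S}(q(n)) \rightarrow Con_{\mathcal S{+}\phi}(n),
\]
which is exactly feasible relative consistency in the sense of Section~\ref{sec:preliminaries}.

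First, the external proof translation. Since $I$ is polynomial-time, each formula $\psi$ has translation $\psi^I$ of size polynomial in $|\psi|$. For each axiom $\alpha$ of $\mathcal S{+}\phi$, the sentence $\alpha^I$ must be provable in $\mathcal S$. The $\mathcal S$-proofs of the $\alpha^I$ must be uniformly polynomial-time constructible; I take this to be part of what ``polynomial-time interpretation'' means. The finitely many further obligations needed for first-order logic are fixed proofs:
\begin{itemize}
\item the domain is nonempty and closed under the interpreted function symbols;
\item equality is respected.
\end{itemize}
Translating a derivation $\pi$ of length $\le n$ step by step then gives an $\mathcal S$-derivation $\pi^I$ of $(0{=}1)^I$, whose length is polynomial in $n$. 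Since $(0{=}1)^I$ is refutable in $\mathcal S$ by a fixed proof (the interpretation preserves $S^1_2$, hence refutes $0{=}1$), we get an $\mathcal S$-proof of $0{=}1$ of length $\le q(n)$. The map $\pi\mapsto\pi^I$ is polynomial-time computable.

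Second, the internalization. I would formalize in $S^1_2\subseteq\mathcal S$ the statement ``for every $\pi$ of length $\le n$, if $\pi$ is an $\mathcal S{+}\phi$-proof of $0{=}1$ then $\pi^I$ is an $\mathcal S$-proof of $0{=}1$ of length $\le q(n)$.'' This is a $\forall\Pi^b_1$ statement about a polytime function, and $S^1_2$ proves it uniformly in $n$ because the correctness of each translation step is checkable by polytime induction on the proof. Instantiating it at the numeral $\underline n$ costs only polynomially many symbols in $n$; numerals are written in binary, so the instance has size $O(\log n)$ plus a fixed part. Taking the contrapositive yields the displayed implication with proofs of size $n^{O(1)}$. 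Finally, take the assumed proofs of $Con_{\mathcal S}(q(n))$, of size $q(n)^{c}=n^{O(1)}$, and apply modus ponens. This gives $\mathcal S{\sststile{}{n^{O(1)}}}Con_{\mathcal S{+}\phi}(n)$.

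The main obstacle is the internalization step. One must check that the uniform $S^1_2$-proof of translation correctness does not secretly need the interpreted axiom proofs to be verified for arbitrarily large axioms. The potential issue is that the uniformity in the definition of polynomial-time interpretation has to be provable in $S^1_2$, and not merely true.

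My first approach is to build this into the definition. I would require that $\mathcal S$ proves, via an $S^1_2$-provable polytime function, that each axiom translation is an $\mathcal S$-theorem. This follows Je\v{r}\'abek's treatment as presented in \cite[Lemma~3.6]{PudlakFiniteDomain}. Failing that, I would restrict to the finitely many axioms relevant to the argument: $\phi$ together with an axiom scheme instance generator. I would then use that $\mathcal S$ has polynomial-time decidable axioms, so that $S^1_2$ can recognize them.
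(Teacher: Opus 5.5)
Your proposal is correct and follows essentially the same route as the paper. You translate $(\mathcal S{+}\phi)$-refutations into $\mathcal S$-refutations through the interpretation, formalize that translation in $S^1_2$ to obtain the uniform implication $Con_{\mathcal S}(q(n)){\rightarrow}Con_{\mathcal S{+}\phi}(n)$ with polynomial-size proofs, and compose it with the assumed short proofs of $Con_{\mathcal S}(q(n))$; your explicit requirement that $S^1_2$ prove the uniform constructibility of the translated-axiom proofs spells out what the paper packs into its one-line claim that the translation and its bound are formalizable in $S^1_2$.
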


\begin{proof}
Let $i$ be a polynomial-time interpretation of $\mathcal S{+}\phi$ in $\mathcal S$. By the standard proof-translation argument for interpretations, as presented in Pudl\'ak~\cite[Lemma~3.6]{PudlakFiniteDomain}, there is a polynomial $p$ such that from every $(\mathcal S{+}\phi)$-proof $\pi$ of a sentence $\psi$ one can compute an $\mathcal S$-proof of the translated sentence $\psi^i$ of length at most $p(|\pi|)$.

Apply this to $\psi\equiv 0{=}1$. Then any $(\mathcal S{+}\phi)$-proof of contradiction of length at most $n$ yields an $\mathcal S$-proof of $(0{=}1)^i$ of length at most $p(n)$. Since $i$ is an interpretation, $(0{=}1)^i$ is a fixed false sentence, and there is a constant-size $\mathcal S$-proof of $(0{=}1)^i{\rightarrow}0{=}1$. After increasing $p$ if necessary, it follows that any $(\mathcal S{+}\phi)$-proof of contradiction of length at most $n$ yields an $\mathcal S$-proof of $0{=}1$ of length at most $p(n)$.

The proof translation and its polynomial bound are formalizable in $S^1_2$, hence in $\mathcal S$. Therefore $\mathcal S$ proves that if there is no $\mathcal S$-proof of contradiction of length at most $p(n)$, then there is no $(\mathcal S{+}\phi)$-proof of contradiction of length at most $n$. That is, $\mathcal S{\vdash}\forall n{:}Con_{\mathcal S}(p(n)){\rightarrow}Con_{\mathcal S{+}\phi}(n)$.

By assumption, $\mathcal S{\sststile{}{n^{O(1)}}}Con_{\mathcal S}(n)$. Substituting $p(n)$ for $n$ gives polynomial-size $\mathcal S$-proofs of $Con_{\mathcal S}(p(n))$. Composing these with this implication yields polynomial-size $\mathcal S$-proofs of $Con_{\mathcal S{+}\phi}(n)$. Hence $\mathcal S{\sststile{}{n^{O(1)}}}Con_{\mathcal S{+}\phi}(n)$.
\end{proof}

The essential content of the interpretation argument is not merely simulation, but the existence of a uniform feasible relative-consistency implication from $\mathcal S$ to $\mathcal S{+}\phi$. All currently known positive mechanisms for simulation pass through such implications.

A central question is the converse: whether every instance of simulation must admit such an internal explanation.
\subsection{Stronger Results on Simulation}
The proof of Theorem~\ref{thm:interpretability-implies-simulation} relies only on the existence of a polynomial $p$ such that $\mathcal S{\vdash}\forall n{:}Con_{\mathcal S}(p(n))$ $ {\rightarrow}Con_{\mathcal S{+}\phi}(n)$. This suggests the following sufficient condition for simulation: $\mathcal S{\sststile{}{n^{O(1)}}}Con_{\mathcal S}(p(n))$ ${\rightarrow}Con_{\mathcal S{+}\phi}(n)$. We call this \emph{feasible relative consistency}.

\begin{theorem}\label{thm:feasible-relative-consistency}
Let $\mathcal S{\supseteq}S^1_2$ be a sound theory with polynomial-time decidable axioms such that $\mathcal S{\sststile{}{n^{O(1)}}}Con_{\mathcal S}(n)$. Then the following are equivalent:
\begin{enumerate}
\item $\mathcal S{\sststile{}{n^{O(1)}}}Con_{\mathcal S}(p(n)){\rightarrow}Con_{\mathcal S{+}\phi}(n)$ for some polynomial $p$.
\item $\mathcal S{\sststile{}{n^{O(1)}}}Con_{\mathcal S{+}\phi}(n)$.
\end{enumerate}
\end{theorem}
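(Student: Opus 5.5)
We prove the two implications separately; each is a short composition of proofs, with all polynomial bounds tracked explicitly.

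\emph{(1)$\Rightarrow$(2).} This is the last step of the proof of Theorem~\ref{thm:interpretability-implies-simulation}, with the hypothesis weakened from a uniform implication to a family of short proofs. Fix $p$ and $c_1$ such that for all large $n$ there is an $\mathcal S$-proof $\pi_n$ of $Con_{\mathcal S}(p(n)){\rightarrow}Con_{\mathcal S{+}\phi}(n)$ of length at most $n^{c_1}$. By hypothesis there is $c_0$ such that for all large $m$, $Con_{\mathcal S}(m)$ has an $\mathcal S$-proof of length at most $m^{c_0}$. Take $m{=}p(n)$: this gives a proof $\sigma_n$ of $Con_{\mathcal S}(p(n))$ of length at most $p(n)^{c_0}$, which is polynomial in $n$. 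Concatenating $\sigma_n$, $\pi_n$ and one modus ponens step yields an $\mathcal S$-proof of $Con_{\mathcal S{+}\phi}(n)$ of length $O(n^{c_1}{+}p(n)^{c_0})$, hence at most $n^{c}$ for suitable $c$ and large $n$.

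Two bookkeeping points need checking. First, the numeral for $p(n)$ occurring in $\sigma_n$ must be the same term as the one in $\pi_n$. If the encodings differ (for instance, the term $p(\underline n)$ versus a binary numeral for $p(n)$), we insert an $S^1_2$-proof of their equality; such proofs have size polynomial in $\log n$ and so cost nothing significant. Second, ``for all sufficiently large $n$'' is preserved, since $p(n){\to}\infty$; if $p$ is not monotone we replace it by a larger monotone polynomial, using the short monotonicity proofs $Con_{\mathcal S}(m'){\rightarrow}Con_{\mathcal S}(m)$ for $m{\le}m'$, which are available in $S^1_2$.

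\emph{(2)$\Rightarrow$(1).} We weaken a consequent. Take $p(n){=}n$, or any polynomial at all. From a short proof $\tau_n$ of $Con_{\mathcal S{+}\phi}(n)$, the tautology $\psi{\rightarrow}(\chi{\rightarrow}\psi)$ with $\chi{=}Con_{\mathcal S}(p(n))$ plus one modus ponens step gives a proof of $Con_{\mathcal S}(p(n)){\rightarrow}Con_{\mathcal S{+}\phi}(n)$. The added length is linear in the sizes of the two formulas. These are polynomial in $n$ (indeed polylogarithmic in $n$ plus a constant), so the total length stays polynomially bounded. This direction does not use the assumption $\mathcal S{\sststile{}{n^{O(1)}}}Con_{\mathcal S}(n)$.

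The only step needing genuine care is the numeral and term-representation issue in (1)$\Rightarrow$(2). Even there the required equalities are provable in $S^1_2$ with size polynomial in $\log n$, so no real obstacle remains. Soundness and the polynomial-time decidability of the axioms play no role beyond making $Con$ and the proof-length measure well defined.
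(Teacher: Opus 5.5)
Your proposal is correct and follows the paper's proof. For (1)$\Rightarrow$(2) you compose the short proofs of $Con_{\mathcal S}(p(n))$ with the short proofs of the implication, and for (2)$\Rightarrow$(1) you weaken by the antecedent $Con_{\mathcal S}(p(n))$ for an arbitrary polynomial $p$; your extra bookkeeping on numerals and monotonicity is harmless detail that the paper leaves implicit.
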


\begin{proof}
For $(1){\rightarrow}(2)$, combine polynomial-size proofs of $Con_{\mathcal S}(p(n))$ with polynomial-size proofs of $Con_{\mathcal S}(p(n)){\rightarrow}Con_{\mathcal S{+}\phi}(n)$. This yields polynomial-size proofs of $Con_{\mathcal S{+}\phi}(n)$.

For $(2){\rightarrow}(1)$, assume $\mathcal S{\sststile{}{n^{O(1)}}}Con_{\mathcal S{+}\phi}(n)$. Fix any polynomial $p$. From a proof of $Con_{\mathcal S{+}\phi}(n)$, one can derive a proof of $Con_{\mathcal S}(p(n)){\rightarrow}Con_{\mathcal S{+}\phi}(n)$ with only polynomial overhead by prefixing a fixed implicational derivation. Since the formula $Con_{\mathcal S}(p(n))$ has size polynomial in $n$, the resulting proof length remains polynomially bounded. Hence $\mathcal S{\sststile{}{n^{O(1)}}}Con_{\mathcal S}(p(n)){\rightarrow}Con_{\mathcal S{+}\phi}(n)$.
\end{proof}

Accordingly, in what follows we treat simulation itself, namely $\mathcal S{\sststile{}{n^{O(1)}}}Con_{\mathcal S{+}\phi}(n)$, as the primary notion. When $\mathcal S{\sststile{}{n^{O(1)}}}Con_{\mathcal S}(n)$, feasible relative consistency is simply an equivalent reformulation: condition (1) expresses feasible relative consistency, and condition (2) expresses simulation. To state the contrapositive of the above theorem:

\begin{corollary}\label{cor:simulation-failure}
Let $\mathcal S{\supseteq}S^1_2$ be a sound theory with polynomial-time decidable axioms such that $\mathcal S{\sststile{}{n^{O(1)}}}Con_{\mathcal S}(n)$. Then, for any polynomial $p$, the following are equivalent:
\begin{enumerate}
\item $\mathcal S\centernot{\sststile{}{n^{O(1)}}}Con_{\mathcal S{+}\phi}(n)$.
\item $\mathcal S\centernot{\sststile{}{n^{O(1)}}}Con_{\mathcal S}(p(n)){\rightarrow}Con_{\mathcal S{+}\phi}(n)$.
\end{enumerate}
\end{corollary}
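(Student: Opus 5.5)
The corollary is the contrapositive of Theorem~\ref{thm:feasible-relative-consistency}, with one refinement: the polynomial $p$ is now arbitrary rather than existentially chosen. So the plan is to show directly, for a fixed but arbitrary polynomial $p$, that simulation holds iff the implication family $Con_{\mathcal S}(p(n)){\rightarrow}Con_{\mathcal S{+}\phi}(n)$ has polynomial-size $\mathcal S$-proofs, and then negate both sides.

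\textbf{Step 1: (2) implies (1), by contraposition.} Suppose $\mathcal S{\sststile{}{n^{O(1)}}}Con_{\mathcal S{+}\phi}(n)$. The argument for $(2){\rightarrow}(1)$ in Theorem~\ref{thm:feasible-relative-consistency} already works for \emph{any} fixed polynomial $p$. Given a proof of $Con_{\mathcal S{+}\phi}(n)$, we append the propositional weakening step $B\vdash A{\rightarrow}B$ with $A \equiv Con_{\mathcal S}(p(n))$. The cost is the size of $A$, which is polynomial in $n$ because the numeral for $p(n)$ has size polynomial in $n$ (indeed $O(\log n)$ in binary). Hence the implication family is polynomially provable. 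Contrapositively, failure of (2) for this $p$ gives failure of (1).

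\textbf{Step 2: (1) implies (2), by contraposition.} Suppose $\mathcal S{\sststile{}{n^{O(1)}}}Con_{\mathcal S}(p(n)){\rightarrow}Con_{\mathcal S{+}\phi}(n)$ for this particular $p$. By hypothesis, $\mathcal S$ has proofs of $Con_{\mathcal S}(m)$ of size $m^{c}$ for large $m$. Instantiating $m=p(n)$ gives proofs of size $p(n)^c$, which is polynomial in $n$. One modus ponens step then yields polynomial-size proofs of $Con_{\mathcal S{+}\phi}(n)$. Contrapositively, failure of (1) gives failure of (2).

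\textbf{Main obstacle.} The only point needing care is the substitution in Step 2. The hypothesis $\mathcal S{\sststile{}{n^{O(1)}}}Con_{\mathcal S}(n)$ is stated pointwise, for sufficiently large arguments. So I would check two things. First, $p(n)\ge$ the threshold for all large $n$, which is fine after harmlessly assuming $p$ is nondecreasing and unbounded (if $p$ is bounded, $Con_{\mathcal S}(p(n))$ ranges over finitely many true sentences with fixed-size proofs). Second, the sentence $Con_{\mathcal S}(p(n))$ is syntactically the same formula as $Con_{\mathcal S}(m)$ with the numeral for $m=p(n)$ substituted. If the paper intends the term $p(\underline n)$ rather than the numeral $\underline{p(n)}$, I would add an $S^1_2$-proof of $\underline{p(n)}=p(\underline n)$, which has size polynomial in $|n|$. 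With that checked, both directions are immediate, and the corollary follows as the contrapositive of Theorem~\ref{thm:feasible-relative-consistency}, strengthened to hold uniformly for every polynomial $p$.
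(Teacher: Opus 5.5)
Your proposal is correct and follows the paper's route. The paper's proof is just ``contraposition in both directions'' of Theorem~\ref{thm:feasible-relative-consistency}. Your remark that the per-$p$ form needs the theorem's $(2){\rightarrow}(1)$ argument to work for every fixed $p$, not merely the theorem's existentially quantified statement, is exactly what that one-line proof tacitly relies on. One wording slip: the corollary's items are already negated statements, so your closing phrases ``failure of (2) gives failure of (1)'' and ``failure of (1) gives failure of (2)'' should read ``(2) gives (1)'' and ``(1) gives (2)'', as your step headings correctly say.
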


\begin{proof}
This follows immediately from Theorem~\ref{thm:feasible-relative-consistency} by contraposition in both directions.
\end{proof}

The interpretation theorem yields the following stronger positive result in the finitely axiomatized sequential setting.

\begin{theorem}\label{thm:relative-consistency-implies-simulation}
Suppose $\mathcal S{\supseteq}S^1_2$ is finitely axiomatized and sequential. If $EA{\vdash}Con_{\mathcal S}{\rightarrow}Con_{\mathcal S{+}\phi}$, then there exists a polynomial $p$ such that $\mathcal S{\sststile{}{n^{O(1)}}}Con_{\mathcal S}(p(n)){\rightarrow}Con_{\mathcal S{+}\phi}(n)$. In particular, $\mathcal S{\sststile{}{n^{O(1)}}}Con_{\mathcal S{+}\phi}(n)$.
\end{theorem}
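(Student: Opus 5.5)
The plan is to route the hypothesis through interpretability and then reuse the proof-translation argument of Theorem~\ref{thm:interpretability-implies-simulation}.

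\emph{Step 1.} Since $\mathcal S$ is finitely axiomatized and sequential, I will apply the Friedman--Visser criterion recalled in Section~\ref{sec:preliminaries}. In particular, Visser's Interpretation Existence Lemma takes the $EA$-verified implication $Con_{\mathcal S}{\rightarrow}Con_{\mathcal S{+}\phi}$ to an interpretation $i$ of $\mathcal S{+}\phi$ in $\mathcal S$.

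\emph{Step 2.} I must check that $i$ can be taken polynomial-time, meaning that the translation $\psi\mapsto\psi^i$ and the induced proof translation have polynomial overhead. This is where I expect the main obstacle to lie. The interpretation produced by the Interpretation Existence Lemma is a Henkin-style construction carried out inside $\mathcal S$ on a definable cut. Its ingredients are the cut, the interpreted domain, and the translated finitely many axioms of $\mathcal S$ together with $\phi$. Each of these is a single fixed formula. So the translation of an arbitrary formula is the usual relativize-and-substitute map, which is linear in the formula size. For proofs, each logical step translates with constant-size overhead. Each use of an axiom of $\mathcal S{+}\phi$ is replaced by a fixed $\mathcal S$-proof of its translation, and there are only finitely many such axioms because $\mathcal S$ is finitely axiomatized and $\phi$ is a single sentence. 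My first move is therefore to argue that finite axiomatizability trivializes the axiom-translation cost, which yields a polynomial $p$ with the translation property stated in the proof of Theorem~\ref{thm:interpretability-implies-simulation}. If the quantifier relativization to the cut causes size blowup, I will use the standard fact that for sequential theories the relevant cut-relativized equality and quantifier clauses are fixed formulas, so the overhead remains linear.

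\emph{Step 3.} I will then rerun the middle paragraphs of the proof of Theorem~\ref{thm:interpretability-implies-simulation}. The translation is formalizable in $S^1_2\subseteq\mathcal S$, and $(0{=}1)^i{\rightarrow}0{=}1$ has a constant-size proof. Together these give $\mathcal S\vdash\forall n\,(Con_{\mathcal S}(p(n)){\rightarrow}Con_{\mathcal S{+}\phi}(n))$. Instantiating this at the numeral $n$ costs only $O(\log n)$ symbols plus the fixed proof, so $\mathcal S{\sststile{}{n^{O(1)}}}Con_{\mathcal S}(p(n)){\rightarrow}Con_{\mathcal S{+}\phi}(n)$. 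That is the first claim of the theorem.

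\emph{Step 4.} For the ``in particular'' clause, I need $\mathcal S{\sststile{}{n^{O(1)}}}Con_{\mathcal S}(n)$. This holds because $\mathcal S\supseteq S^1_2$ is finitely axiomatized, which is Pudl\'ak's theorem that finitely axiomatized sequential theories have polynomial-size proofs of their own finite consistency statements. Given that, the direction $(1){\rightarrow}(2)$ of Theorem~\ref{thm:feasible-relative-consistency} yields $\mathcal S{\sststile{}{n^{O(1)}}}Con_{\mathcal S{+}\phi}(n)$.
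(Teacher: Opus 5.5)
Your proposal is correct and uses the same ingredients as the paper: the Friedman--Visser criterion via Visser's Interpretation Existence Lemma, the Je\v{r}\'abek--Pudl\'ak proof translation, and Pudl\'ak's polynomial bound on proofs of $Con_{\mathcal S}(n)$. The differences are only in ordering and explicitness. The paper first obtains simulation from Theorem~\ref{thm:interpretability-implies-simulation} and then reads off the implication through the essentially trivial weakening direction (2)$\rightarrow$(1) of Theorem~\ref{thm:feasible-relative-consistency}. You instead extract the uniform statement $\mathcal S\vdash\forall n\,(Con_{\mathcal S}(p(n))\rightarrow Con_{\mathcal S{+}\phi}(n))$ directly from the interpretation, so your first claim does not depend on Pudl\'ak's theorem. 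You also justify that the interpretation is polynomial-time, which the paper leaves implicit when it invokes Theorem~\ref{thm:interpretability-implies-simulation}.
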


\begin{proof}
Pudl\'ak~\cite{Pudlak1986length} shows $\mathcal S{\sststile{}{n^{O(1)}}}Con_{\mathcal S}(n)$. By the Friedman--Visser interpretability criterion for finitely axiomatized sequential theories, with the relative-consistency-to-interpretation direction supplied by Visser's Interpretation Existence Lemma~\cite{Visser2017}, the hypothesis $EA{\vdash}Con_{\mathcal S}{\rightarrow}Con_{\mathcal S{+}\phi}$ implies that $\mathcal S$ interprets $\mathcal S{+}\phi$.

By Theorem~\ref{thm:interpretability-implies-simulation}, it follows that $\mathcal S{\sststile{}{n^{O(1)}}}Con_{\mathcal S{+}\phi}(n)$. The existence of a polynomial $p$ such that $\mathcal S{\sststile{}{n^{O(1)}}}Con_{\mathcal S}(p(n)){\rightarrow}Con_{\mathcal S{+}\phi}(n)$ then follows from Theorem~\ref{thm:feasible-relative-consistency}.
\end{proof}

By contrast, Pudl\'ak~\cite{Pudlak1986length} conjectures that even the benchmark case $\phi{=}Con_{\mathcal S}$ already yields non-simulation:

\begin{conjecture}\label{conj:pudlak}
(Pudl\'ak's Conjecture) For every constant $c$, $\mathcal S{\centernot{\sststile{}{n^c}}}$ $Con_{\mathcal S{+}Con_{\mathcal S}}(n)$.\footnote{Another natural candidate for a hard extension is given by the Buss jump \cite{Buss1986bounded}, that is, a sentence or schema asserting the soundness of the next level of bounded reasoning over $\mathcal S$. If a chosen formalization yields a true sentence $\phi_{\mathrm{Jump}}$ such that $\mathcal S{+}\phi_{\mathrm{Jump}}\vdash Con_{\mathcal S}$, then Pudl\'ak's Conjecture already implies that $\mathcal S$ does not simulate $\mathcal S{+}\phi_{\mathrm{Jump}}$. The stronger \textbf{HRC} assumption yields the same conclusion directly.}
\end{conjecture}

Under Pudl\'ak's Conjecture, any true extension that raises absolute consistency (proves $Con_{\mathcal S}$) is hard:

\begin{theorem}\label{thm:higher-absolute-consistency-hardness}
Assume Pudl\'ak's Conjecture, and suppose $\mathcal S{+}\phi{\vdash}Con_{\mathcal S}$. Then for every constant $c$, $\mathcal S\centernot{\sststile{}{n^c}}Con_{\mathcal S{+}\phi}(n)$.
\end{theorem}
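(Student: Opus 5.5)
The plan is to argue by contraposition, reducing any simulation of $\mathcal S{+}\phi$ to a simulation of $\mathcal S{+}Con_{\mathcal S}$ with only polynomial overhead. Suppose, for contradiction, that for some constant $c$ we have $\mathcal S{\sststile{}{n^c}}Con_{\mathcal S{+}\phi}(n)$. The key observation is that $\mathcal S{+}Con_{\mathcal S}$ is a subtheory of $\mathcal S{+}\phi$ up to a fixed proof. Since $\mathcal S{+}\phi{\vdash}Con_{\mathcal S}$, there is a fixed $(\mathcal S{+}\phi)$-derivation $d$ of $Con_{\mathcal S}$ of some constant length $\ell$. Any $(\mathcal S{+}Con_{\mathcal S})$-proof of $0{=}1$ of length at most $n$ can therefore be converted into an $(\mathcal S{+}\phi)$-proof of $0{=}1$ of length at most $n{+}\ell$, or at most $Kn$ for a constant $K$. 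The conversion replaces each use of the axiom $Con_{\mathcal S}$ by the derivation $d$, or more cheaply, prefixes $d$ once and treats $Con_{\mathcal S}$ as a derived line.

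Next I formalize this as an implication provable in $\mathcal S$ with short proofs:
\[
\mathcal S\vdash \forall n\,\bigl(Con_{\mathcal S{+}\phi}(n{+}\ell)\rightarrow Con_{\mathcal S{+}Con_{\mathcal S}}(n)\bigr).
\]
The transformation is a simple syntactic map (insert the fixed block $d$ and reindex lines). It is polynomial-time, and its correctness is provable in $S^1_2\subseteq\mathcal S$, in the same way the proof translation in Theorem~\ref{thm:interpretability-implies-simulation} was claimed formalizable. Instantiating this universal sentence at the numeral $n$ costs only $O(\log n)$ or at worst polynomially many symbols.

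Composing the pieces finishes the argument. The assumed proofs of $Con_{\mathcal S{+}\phi}(n{+}\ell)$ have size at most $(n{+}\ell)^c$. The instantiated implication adds a polynomial amount, followed by one modus ponens step. This yields $\mathcal S$-proofs of $Con_{\mathcal S{+}Con_{\mathcal S}}(n)$ of size at most $n^{c'}$ for some constant $c'$ and all large $n$, which contradicts Pudl\'ak's Conjecture (Conjecture~\ref{conj:pudlak}). Hence no such $c$ exists.

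The main obstacle is the formalization step: verifying inside $S^1_2$ that the spliced object is a genuine $(\mathcal S{+}\phi)$-proof, for an arbitrary and possibly nonstandard proof of bounded length. Two facts make this work. First, $d$ is a fixed standard object whose correctness as an $(\mathcal S{+}\phi)$-proof is verified by a fixed computation. Second, the proof predicate for $\mathcal S{+}\phi$ differs from that for $\mathcal S{+}Con_{\mathcal S}$ only by which extra axiom is allowed. Both are polynomial-time checkable, so $S^1_2$ proves that the combined sequence satisfies the proof predicate.
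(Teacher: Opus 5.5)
Your proposal is correct and follows essentially the same route as the paper. You fix an $(\mathcal S{+}\phi)$-proof of $Con_{\mathcal S}$, splice it into $(\mathcal S{+}Con_{\mathcal S})$-refutations to obtain an $S^1_2$-formalizable implication $Con_{\mathcal S{+}\phi}(p(n)){\rightarrow}Con_{\mathcal S{+}Con_{\mathcal S}}(n)$ with $p$ linear, and compose this with the assumed short proofs to contradict Pudl\'ak's Conjecture. Whether the overhead is exactly $n{+}\ell$ or only $Kn$ depends on the proof format, but any linear $p$, as in the paper, suffices.
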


\begin{proof}
Fix a proof $\pi$ of $Con_{\mathcal S}$ in $\mathcal S{+}\phi$.

There is a uniform proof transformation sending any $(\mathcal S{+}Con_{\mathcal S})$-proof to an $(\mathcal S{+}\phi)$-proof by replacing each use of the extra axiom $Con_{\mathcal S}$ by the fixed derivation $\pi$. This increases proof length by at most a constant multiplicative and additive factor. Hence there exist a linear polynomial $p$ and a constant $d$ such that $\mathcal S{\sststile{}{n^d}}Con_{\mathcal S{+}\phi}(p(n)){\rightarrow}Con_{\mathcal S{+}Con_{\mathcal S}}(n)$.

Suppose toward a contradiction that $\mathcal S{\sststile{}{n^{O(1)}}}Con_{\mathcal S{+}\phi}(n)$. Since $p$ is linear, substituting $p(n)$ for $n$ still gives $\mathcal S{\sststile{}{n^{O(1)}}}Con_{\mathcal S{+}\phi}(p(n))$. Composing these proofs with this implication yields $\mathcal S{\sststile{}{n^{O(1)}}}Con_{\mathcal S{+}Con_{\mathcal S}}(n)$, contrary to Pudl\'ak's Conjecture.

Therefore for every constant $c$, $\mathcal S{\centernot{\sststile{}{n^c}}}Con_{\mathcal S{+}\phi}(n)$.
\end{proof}

Then there is a gap between Pudl\'ak's Conjecture and the strongest known sufficient condition for simulation, in Theorem~\ref{thm:relative-consistency-implies-simulation}:

\begin{corollary}\label{cor:pudlak-not-tight}
Pudl\'ak's Conjecture is not tight as a converse to Theorem~\ref{thm:relative-consistency-implies-simulation}.
\end{corollary}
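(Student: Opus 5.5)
The plan is to exhibit a gap between the two regimes. Theorem~\ref{thm:relative-consistency-implies-simulation} shows that simulation holds whenever $EA{\vdash}Con_{\mathcal S}{\rightarrow}Con_{\mathcal S{+}\phi}$. Theorem~\ref{thm:higher-absolute-consistency-hardness} shows that, under Pudl\'ak's Conjecture, non-simulation holds whenever $\mathcal S{+}\phi{\vdash}Con_{\mathcal S}$. For Pudl\'ak's Conjecture to be a tight converse, these two conditions would have to exhaust all true $\phi$: every true $\phi$ with $EA{\not\vdash}Con_{\mathcal S}{\rightarrow}Con_{\mathcal S{+}\phi}$ would have to satisfy $\mathcal S{+}\phi{\vdash}Con_{\mathcal S}$. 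So I will produce, for a sound finitely axiomatized sequential $\mathcal S$, a true sentence $\phi$ that raises relative-consistency strength but does not prove $Con_{\mathcal S}$ over $\mathcal S$. On such a $\phi$, neither theorem decides simulation. Pudl\'ak's Conjecture, through Theorem~\ref{thm:higher-absolute-consistency-hardness}, then says nothing about it, while Theorem~\ref{thm:relative-consistency-implies-simulation} does not apply.

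The first candidate is a disjunction or a partial consistency statement. A natural choice is $\phi\equiv Con_{\mathcal S}\vee\psi$ with $\psi$ chosen so that $\phi$ is strictly weaker than $Con_{\mathcal S}$. A cleaner alternative uses a true $\Pi_1$ sentence whose strength lies strictly between $\mathcal S$ and $\mathcal S{+}Con_{\mathcal S}$ in the interpretability degrees. Here I would invoke the density of the degrees of interpretability of finite sequential theories (Shavrukov/Visser style results, or Lindstr\"om's constructions).

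\begin{enumerate}
\item Find a true $\Pi_1$ sentence $\phi$ such that $\mathcal S{+}\phi$ is not interpretable in $\mathcal S$, while $\mathcal S{+}Con_{\mathcal S}$ is not interpretable in $\mathcal S{+}\phi$. This can be done by a Rosser/Orey-style self-referential construction below $Con_{\mathcal S}$.
\item Non-interpretability of $\mathcal S{+}\phi$ in $\mathcal S$ gives $EA{\not\vdash}Con_{\mathcal S}{\rightarrow}Con_{\mathcal S{+}\phi}$. This is the contrapositive of the Friedman--Visser criterion used in the proof of Theorem~\ref{thm:relative-consistency-implies-simulation}.
\item If $\mathcal S{+}\phi{\vdash}Con_{\mathcal S}$ held, then $\mathcal S{+}\phi$ would contain, and hence interpret, $\mathcal S{+}Con_{\mathcal S}$, contradicting the second property in step~1. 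So $\mathcal S{+}\phi{\not\vdash}Con_{\mathcal S}$.
\end{enumerate}

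Alternatively, one can argue more simply by truth and $\Sigma_1$-completeness considerations, ensuring that the chosen $\phi$ is true and so keeps $\mathcal S{+}\phi$ sound.

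Finally, I will conclude that Pudl\'ak's Conjecture, even combined with Theorem~\ref{thm:higher-absolute-consistency-hardness}, classifies only the sentences implying $Con_{\mathcal S}$. The $\phi$ from step~1 falls in neither the easy region nor the region covered by the conjecture, so the conjecture is not a tight converse.

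The main obstacle is step~1: producing a \emph{true} sentence strictly between $\mathcal S$ and $\mathcal S{+}Con_{\mathcal S}$ in the required sense. Truth is not automatic for Rosser-type constructions. My first attempt would take $\phi$ to be a Rosser-style sentence relative to $\mathcal S{+}\neg Con_{\mathcal S}$, or would use Lindstr\"om's result that the $\Pi_1$ interpretability degrees over a finite sequential theory are dense. From density I would extract a true witness by a standard check that every intermediate $\Pi_1$ sentence below the true $Con_{\mathcal S}$ can be chosen true.
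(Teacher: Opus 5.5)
Your route differs from the paper's. You read the corollary as asserting a genuine gap: a true $\phi$ with $EA\nvdash Con_{\mathcal S}\rightarrow Con_{\mathcal S{+}\phi}$ but $\mathcal S{+}\phi\nvdash Con_{\mathcal S}$, so that neither Theorem~\ref{thm:relative-consistency-implies-simulation} nor Theorem~\ref{thm:higher-absolute-consistency-hardness} applies. The paper's proof is more modest. It takes $\phi:=Con_{\mathcal S}\wedge\theta$ with $\theta$ true and $\mathcal S{+}Con_{\mathcal S}\nvdash\theta$, and observes that Theorem~\ref{thm:higher-absolute-consistency-hardness} makes this strictly stronger extension hard. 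It then infers non-tightness from the fact that the hardness is not confined to $Con_{\mathcal S}$ itself, and never exhibits a sentence outside both regions. Your target is the more meaningful one; it is what the remark after the corollary calls a gap. Your steps 2 and 3 are correct. One caveat: such a $\phi$ only shows that Theorem~\ref{thm:higher-absolute-consistency-hardness} does not reach it. It does not show that Pudl\'ak's Conjecture cannot imply its hardness by some other route, so your closing sentence overstates slightly.

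The genuine gap is step 1, which carries all the content and is only gestured at. You invoke density by attribution, but you need a specific finite extension of $\mathcal S$ strictly between $\mathcal S$ and $\mathcal S{+}Con_{\mathcal S}$. The ``standard check'' that it can be taken true is also not standard, since interpretability degrees ignore truth. Truth is easy to repair, though. If $\phi$ is intermediate, then $EA\vdash Con_{\mathcal S{+}Con_{\mathcal S}}\rightarrow Con_{\mathcal S{+}\phi}$. Put $\phi':=Con_{\mathcal S}\vee\phi$, which is true. Then $Con_{\mathcal S{+}\phi'}$ is $EA$-equivalent to $Con_{\mathcal S{+}Con_{\mathcal S}}\vee Con_{\mathcal S{+}\phi}$, hence to $Con_{\mathcal S{+}\phi}$. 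Moreover $\mathcal S{+}\phi'\vdash Con_{\mathcal S}$ iff $\mathcal S{+}\phi\vdash Con_{\mathcal S}$.

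Existence becomes much easier if you drop the unneeded requirement that $\mathcal S{+}\phi$ fail to interpret $\mathcal S{+}Con_{\mathcal S}$. Steps 2 and 3 only use $\mathcal S{+}\phi\nvdash Con_{\mathcal S}$, and the paper already supplies a witness: $\phi:\equiv x{\in}R$ for long $x$.

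Lemma~\ref{lem:no-access-no-relative-consistency} together with Chaitin's theorem for $EA{+}Con_{\mathcal S}$ gives $EA\nvdash Con_{\mathcal S}\rightarrow Con_{\mathcal S{+}(x\in R)}$ for all sufficiently long true $x{\in}R$.

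For the other half, let $T:=\mathcal S{+}\neg Con_{\mathcal S}$. It is consistent, and by $\Sigma_1$-completeness it proves $\neg(x{\in}R)$ for every nonrandom $x$. Suppose it also refuted every true $x{\in}R$ of length $m$. Then it would prove that each of the $2^m$ strings of length $m$ is output by one of fewer than $2^{m-c_U}$ programs. A finite pigeonhole argument using determinism of $U$ would then make $T$ inconsistent. So every length $m$ contains a true $x{\in}R$ with $\mathcal S{+}(x{\in}R)\nvdash Con_{\mathcal S}$, and for large $m$ this $\phi$ lies in the gap you want.
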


\begin{proof}
Let $\theta$ be any true sentence such that $\mathcal S{+}Con_{\mathcal S}{\centernot{\vdash}}\theta$, and put $\phi{:=}Con_{\mathcal S}{\land}\theta$. Then $\mathcal S{+}\phi{\vdash}Con_{\mathcal S}$, so by Theorem~\ref{thm:higher-absolute-consistency-hardness}, one has, for every constant $c$, $\mathcal S\centernot{\sststile{}{n^c}}Con_{\mathcal S{+}\phi}(n)$.

However, $\phi$ is strictly stronger than $Con_{\mathcal S}$ over $\mathcal S$. Indeed, $\mathcal S{\vdash}\phi{\rightarrow}Con_{\mathcal S}$ is immediate. If also $\mathcal S{\vdash}Con_{\mathcal S}{\rightarrow}\phi$, then $\mathcal S{+}Con_{\mathcal S}{\vdash}\phi$, hence $\mathcal S{+}Con_{\mathcal S}{\vdash}\theta$, contrary to the choice of $\theta$.

Thus, the same hardness conclusion holds not only for the extension $\mathcal S{+}Con_{\mathcal S}$, but also for strictly stronger true extensions $\mathcal S{+}\phi$. Therefore Pudl\'ak's Conjecture is not tight as a converse to Theorem~\ref{thm:relative-consistency-implies-simulation}.
\end{proof}

Thus, there is a gap between Theorem~\ref{thm:relative-consistency-implies-simulation} and Pudl\'ak's Conjecture. The former is governed by feasible relative consistency, while the latter already yields hardness for extensions $\mathcal S{+}\phi$ satisfying $\mathcal S{+}\phi{\vdash}Con_{\mathcal S}$, that raise absolute consistency. A central aim of the paper is to isolate principles that close this gap.
\subsection{Non-Simulation Implies Non-Simulation for Busy Beavers}
A theory $\mathcal S$ for which, for every true sentence $\phi$, there exists a polynomial bound on $\mathcal S$-proofs of $Con_{\mathcal S{+}\phi}(n)$ would seem to support simulations that cannot be explained solely by information provable in $\mathcal S$, since $\mathcal S$ does not prove all true sentences. We show that this phenomenon is already witnessed by a canonical family of true sentences, namely exact Busy Beaver value statements.

For each $k$, let $t_k{=}BB(k)$ be the true $k$-state Busy Beaver value, and let $\phi_{BB}(k)$ denote the true sentence asserting the exact value $BB(k){=}t_k$. Since $t_k$ is the maximum halting time of any halting $k$-state machine on blank input, $S^1_2$ proves that $\phi_{BB}(k)$ implies that every halting $k$-state machine on blank input halts within $t_k$ steps. We will use this bounded-halting consequence in the proof of Lemma~\ref{lem:busy-beaver-proves-consistency}.

We begin by proving that, for any fixed true computably axiomatized theory, sufficiently large Busy Beaver axioms already imply its consistency (Aaronson~\cite[Proposition~3]{AaronsonBusyBeaver}). The relevant threshold is the size needed to realize the contradiction-search machine for the theory. Once this is in place, any hard true extension of a sound theory $\mathcal S{\supseteq}S^1_2$ yields eventual hardness throughout the Busy Beaver family.

\begin{definition}
(\textbf{Contradiction Search Threshold}) Let $\mathcal T$ be a computably axiomatized theory, and let $E_{\mathcal T}$ be a Turing machine which enumerates $\mathcal T$-proofs and halts exactly when it finds a proof of contradiction. Define $k_{\mathcal T}^{\mathrm{cs}}$ to be any threshold such that for every $k{\ge}k_{\mathcal T}^{\mathrm{cs}}$, there is a $k$-state Turing machine computing the same partial function as $E_{\mathcal T}$.
\end{definition}

\begin{lemma}\label{lem:busy-beaver-proves-consistency}
Let $\mathcal T$ be a fixed computably axiomatized true theory extending $S^1_2$. Then for every $k{\ge}k_{\mathcal T}^{\mathrm{cs}}$, one has $S^1_2{+}\phi_{BB}(k){\vdash}Con_{\mathcal T}$.
\end{lemma}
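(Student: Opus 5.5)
The plan is to argue inside $S^1_2{+}\phi_{BB}(k)$ by a bounded-search argument. Fix $k{\ge}k_{\mathcal T}^{\mathrm{cs}}$ and, by the definition of the Contradiction Search Threshold, fix a specific $k$-state machine $M$ computing the same partial function as $E_{\mathcal T}$. Since $\mathcal T$ is true, hence consistent, $E_{\mathcal T}$ never halts, so $M$ does not halt on blank input. The aim is to turn this external fact into an internal proof of $Con_{\mathcal T}$, using $\phi_{BB}(k)$ to cap the search.

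\textbf{Step 1 (cap the search).} Let $t_k{=}BB(k)$. Using the bounded-halting consequence of $\phi_{BB}(k)$ recorded before the definition, $S^1_2{+}\phi_{BB}(k)$ proves that if $M$ halts on blank input, then it halts within $t_k$ steps.

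\textbf{Step 2 (refute bounded halting).} The statement ``$M$ does not halt within $t_k$ steps'' is a true $\Delta_0$ (indeed decidable, with fixed numerals) sentence about a concrete computation. Because $t_k$ is a fixed standard number, this sentence is provable in $S^1_2$ (by $\Sigma_1$-completeness applied to its negation's refutation, or by directly verifying the finite run of length $t_k$). Combining this with Step 1, $S^1_2{+}\phi_{BB}(k)$ proves that $M$ never halts.

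\textbf{Step 3 (transfer to $E_{\mathcal T}$).} We must pass from ``$M$ never halts'' to ``$E_{\mathcal T}$ never halts'', that is, to $Con_{\mathcal T}$. This is the main obstacle. The definition only gives \emph{external} equivalence of partial functions, whereas we need the implication $\neg Con_{\mathcal T}{\rightarrow}M$ halts to be provable in $S^1_2$. I would first try to choose $M$ canonically: take $M$ to be $E_{\mathcal T}$ itself padded with unreachable dummy states, or a fixed explicit compilation of $E_{\mathcal T}$. For such an $M$, the step-by-step correspondence of runs is a simple syntactic fact, and $S^1_2$ proves that if there is a $\mathcal T$-proof of $0{=}1$, then the run of $M$ reaches its halting state at some finite stage. (This uses that the proof predicate of $\mathcal T$ is formalized so that $E_{\mathcal T}$'s enumeration provably reaches every proof.) I would therefore read the threshold definition as permitting this canonical choice. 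If it did not, I would modify the threshold accordingly, noting that $k_{\mathcal T}^{\mathrm{cs}}$ is ``any threshold'' and that padding preserves the needed provable equivalence.

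With Step 3 in place, contraposition of ``$\neg Con_{\mathcal T}{\rightarrow}M$ halts'' together with Step 2 yields $S^1_2{+}\phi_{BB}(k){\vdash}Con_{\mathcal T}$. This holds for every $k{\ge}k_{\mathcal T}^{\mathrm{cs}}$, since each step depends only on having some suitable $k$-state $M$.
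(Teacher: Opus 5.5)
Your Steps 1 and 2 are the paper's argument: the $\Pi_1$ half of $\phi_{BB}(k)$ caps the halting time of every $k$-state machine at $t_k$, and the standard-length run up to $t_k$ is checked directly. In Step 3 you are right to insist on a padded (or explicitly compiled) copy of $E_{\mathcal T}$; the paper's proof silently identifies ``the $k$-state realization'' with $E_{\mathcal T}$ itself. The repair is necessary. For consistent $\mathcal T$, $E_{\mathcal T}$ computes the everywhere-undefined partial function, so under the literal extensional definition a padded one-state looping machine already gives $k^{\mathrm{cs}}_{\mathcal T}=1$. The lemma would then give $\mathcal T\vdash Con_{\mathcal T}$ for any such $\mathcal T$ proving a small instance like $\phi_{BB}(2)$, contradicting G\"odel's second incompleteness theorem. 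With padding, $k^{\mathrm{cs}}_{\mathcal T}$ can be taken to be the number of states of $E_{\mathcal T}$.

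The genuine gap is the other half of Step 3: the claim that $S^1_2$ proves ``if some $\mathcal T$-proof of $0{=}1$ exists, then $M$ halts.'' Halting, as in $\phi_{BB}(k)$, means existence of a halting computation history, and with that reading the claim fails in $S^1_2$. A search that reaches a proof $p$ first examines roughly $2^{|p|}$ shorter candidates, so any halting history has bit-length about $2^{|p|}$, while $S^1_2$-terms in $p$ have bit-length polynomial in $|p|$. Concretely:
\begin{enumerate}
\item Let $N_0\models EA{+}\neg Con_{\mathcal T}$; such models exist, e.g., whenever $\mathcal T\supseteq EA$.
\item Let $c$ be the least $\mathcal T$-proof of $0{=}1$ in $N_0$; it is nonstandard because $\mathcal T$ is consistent.
\item Let $I=\{x\in N_0: x\le 2^{|c|^j}\text{ for some standard }j\}$. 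This is an initial segment closed under $+$, $\cdot$, and $\#$, hence a model of $S^1_2$, and $I\models\neg Con_{\mathcal T}$ since $c\in I$.
\item By $\Delta_0$-absoluteness and determinism, a halting history of $M$ in $I$ would be the actual run in $N_0$. That run has bit-length exceeding every $|c|^j$, so no such history lies in $I$.
\end{enumerate}
Thus ``$M$ never halts,'' which you do derive in $S^1_2{+}\phi_{BB}(k)$, does not yield $Con_{\mathcal T}$ over $S^1_2$. Your argument does go through verbatim over $EA$, where the run can be built, and so establishes $EA{+}\phi_{BB}(k)\vdash Con_{\mathcal T}$. The paper's proof makes the same move (``by the definition of $E_{\mathcal T}$, this is exactly $Con_{\mathcal T}$''), so this gap is shared with the paper rather than a divergence from it. Reaching the base $S^1_2$ would need an additional idea, or a nonstandard formalization of $Con_{\mathcal T}$ as non-halting of $E_{\mathcal T}$.
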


\begin{proof}
Fix $k{\ge}k_{\mathcal T}^{\mathrm{cs}}$, and let $U{:=}S^1_2{+}\phi_{BB}(k)$. By the choice of $k_{\mathcal T}^{\mathrm{cs}}$, there is a $k$-state Turing machine computing the same partial function as $E_{\mathcal T}$, so we may reason in $U$ about that $k$-state realization of contradiction search for $\mathcal T$.

Since $\phi_{BB}(k)$ asserts the exact value $BB(k){=}t_k$, the theory $U$ proves that every halting $k$-state Turing machine on blank input halts within at most $t_k$ steps.

Because $\mathcal T$ is true, the actual computation of $E_{\mathcal T}$ does not find a contradiction within the first $t_k$ steps. This is a fixed finite computation, so by standard bounded-arithmetic formalization of finite computations, $S^1_2$ proves the corresponding bounded halting fact, and hence so does $U$. Therefore $U{\vdash}\neg\exists s{\le}t_k\,\bigl(E_{\mathcal T}\text{ halts in exactly }s\text{ steps}\bigr)$.

Combining these two facts, $U$ proves that the contradiction-search machine for $\mathcal T$ never halts at all, that is, $U{\vdash}\neg\exists s\,\bigl(E_{\mathcal T}\text{ halts in exactly }s\text{ steps}\bigr)$. By the definition of $E_{\mathcal T}$, this is exactly $Con_{\mathcal T}$. Therefore $S^1_2{+}\phi_{BB}(k){\vdash}Con_{\mathcal T}$.
\end{proof}

In particular, if $\mathcal T$ is consistent, then for every $k{\ge}k_{\mathcal T}^{\mathrm{cs}}$, the true Busy Beaver sentence $\phi_{BB}(k)$ is unprovable in $\mathcal T$, since otherwise $\mathcal T$ would prove $Con_{\mathcal T}$.

With this result, we can show that the existence of some hard true extension $\mathcal S{+}\phi$ is equivalent to eventual hardness throughout the Busy Beaver family. Figure \ref{fig:busy-beaver-transfer} updates Figure \ref{fig:current-knowledge} to reflect Theorems \ref{thm:relative-consistency-implies-simulation} and \ref{thm:busy-beaver-transfer}.

\begin{theorem}\label{thm:busy-beaver-transfer}
Let $\mathcal S{\supseteq}S^1_2$ be sound with polynomial-time decidable axioms. The following are equivalent:
\begin{enumerate}
\item There exists a true sentence $\phi$ such that for every constant $c$, $\mathcal S\centernot{\sststile{}{n^c}}Con_{\mathcal S{+}\phi}(n)$.
\item For all sufficiently large $k$ and every constant $c$, $\mathcal S\centernot{\sststile{}{n^c}}Con_{S^1_2{+}\phi_{BB}(k)}(n)$.
\end{enumerate}
\end{theorem}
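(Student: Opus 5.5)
For $(2){\rightarrow}(1)$ we need a witness for (1), which quantifies over true sentences $\phi$ but uses the base $\mathcal S$ rather than $S^1_2$. Fix a sufficiently large $k$ and take $\phi{:=}\phi_{BB}(k)$, which is true. Since $\mathcal S{\supseteq}S^1_2$, every $(S^1_2{+}\phi_{BB}(k))$-proof is an $(\mathcal S{+}\phi_{BB}(k))$-proof, up to the elementary syntactic inclusion map fixed in the preliminaries. That map is linear and its correctness is provable in $S^1_2$. Hence $\mathcal S$ has polynomial-size proofs of $Con_{\mathcal S{+}\phi}(p(n)){\rightarrow}Con_{S^1_2{+}\phi_{BB}(k)}(n)$ for a linear $p$. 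Suppose $\mathcal S$ simulated $\mathcal S{+}\phi$. Composing as in the proof of Theorem~\ref{thm:higher-absolute-consistency-hardness}, we would get $\mathcal S{\sststile{}{n^{O(1)}}}Con_{S^1_2{+}\phi_{BB}(k)}(n)$, contradicting (2). So $\phi_{BB}(k)$ witnesses (1).

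For $(1){\rightarrow}(2)$, fix a true $\phi$ with $\mathcal S$ failing to simulate $\mathcal S{+}\phi$, and set $\mathcal T{:=}\mathcal S{+}\phi$. This is a computably axiomatized true theory extending $S^1_2$, since $\mathcal S$ is sound with polynomial-time axioms and $\phi$ is true. Lemma~\ref{lem:busy-beaver-proves-consistency} then gives, for every $k{\ge}k^{\mathrm{cs}}_{\mathcal T}$, a fixed $S^1_2{+}\phi_{BB}(k)$-proof $\pi_k$ of $Con_{\mathcal T}$. The plan is to reduce simulation of $\mathcal S{+}\phi$ to simulation of $S^1_2{+}\phi_{BB}(k)$ for each such $k$. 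Suppose $\mathcal S{\sststile{}{n^{O(1)}}}Con_{S^1_2{+}\phi_{BB}(k)}(n)$. We want polynomial-size $\mathcal S$-proofs of $Con_{\mathcal T}(n)$. The bridge is a feasible implication $Con_{S^1_2{+}\phi_{BB}(k)}(q(n)){\rightarrow}Con_{\mathcal T}(n)$ with polynomial-size $\mathcal S$-proofs. To obtain it, argue inside $\mathcal S$ by contraposition: a $\mathcal T$-proof of $0{=}1$ of length $\le n$ gives a halting computation of $E_{\mathcal T}$ of length polynomial in $n$. This is the length-$n$ refutation of the form $\neg Con_{\mathcal T}$, witnessed by a concrete short computation. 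Append it to $\pi_k$ to get an $(S^1_2{+}\phi_{BB}(k))$-proof of $0{=}1$. That proof has length $|\pi_k|+\mathrm{poly}(n)$, since proving the true $\Sigma_1$ witness statement in $S^1_2$ costs polynomially in the witness size. The construction is polynomial time and $S^1_2$-formalizable, so $\mathcal S$ proves $\forall n\,(Con_{S^1_2{+}\phi_{BB}(k)}(q(n)){\rightarrow}Con_{\mathcal T}(n))$ for a polynomial $q$ with constant depending on $k$. Instantiate at $n$ and compose with the assumed short proofs at $q(n)$. This yields $\mathcal S{\sststile{}{n^{O(1)}}}Con_{\mathcal T}(n)$, contradicting (1). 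Since the argument works for each $k{\ge}k^{\mathrm{cs}}_{\mathcal T}$, (2) follows with ``sufficiently large'' meaning $k{\ge}k^{\mathrm{cs}}_{\mathcal T}$.

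The main obstacle is the bridge step. Lemma~\ref{lem:busy-beaver-proves-consistency} only provides a fixed proof of the unbounded $Con_{\mathcal T}$. We need a uniform, $S^1_2$-verifiable, polynomial-overhead map from short $\mathcal T$-refutations to short $(S^1_2{+}\phi_{BB}(k))$-refutations, with the correctness of that map itself provable in $\mathcal S$. I would handle this by formalizing in $S^1_2$ the statement ``if $p$ is a $\mathcal T$-proof of $0{=}1$, then $\mathrm{Prf}_{\mathcal T}(p,\ulcorner 0{=}1\urcorner)$ has an $S^1_2$-proof of size polynomial in $|p|$.'' This is the standard $\Sigma^b_1$-completeness with feasible proof sizes. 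I would then check that $\pi_k$ refutes $\exists p\,\mathrm{Prf}_{\mathcal T}(p,\ulcorner 0{=}1\urcorner)$ in the same coding used by $Con_{\mathcal T}(n)$, adjusting by a constant-size $S^1_2$-proof of equivalence between the $E_{\mathcal T}$-halting formulation and the proof-predicate formulation if they differ.
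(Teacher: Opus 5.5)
Your proposal is correct and follows the paper's proof essentially step for step: the verbatim inclusion $S^1_2{+}\phi_{BB}(k)\subseteq\mathcal S{+}\phi_{BB}(k)$ gives $(2){\rightarrow}(1)$, and for $(1){\rightarrow}(2)$ you apply Lemma~\ref{lem:busy-beaver-proves-consistency} with $\mathcal T{=}\mathcal S{+}\phi$, append the fixed proof $\pi_k$ of $Con_{\mathcal T}$ to a feasible $\Sigma^b_1$-completeness verification of the given refutation, and compose (your polynomial overhead $q$, rather than the paper's linear one, is the more accurate bound and suffices). One slip: a length-$n$ refutation does not give a halting computation of the enumerating machine $E_{\mathcal T}$ of length polynomial in $n$, since the search in general needs exponentially many steps to reach it; the proof-predicate version in your last paragraph, which is what the paper uses, is the right bridge and avoids this.
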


\begin{proof}
$(1){\rightarrow}(2)$ Assume there exists a true sentence $\phi$ such that for every constant $c$, $\mathcal S\centernot{\sststile{}{n^c}}Con_{\mathcal S{+}\phi}(n)$. Apply Lemma~\ref{lem:busy-beaver-proves-consistency} with $\mathcal T{=}\mathcal S{+}\phi$. Fix sufficiently large $k$, and write $U_k:=S^1_2{+}\phi_{BB}(k)$. Then $U_k{\vdash}Con_{\mathcal S{+}\phi}$.

Fix once and for all a proof $\pi_k$ of $Con_{\mathcal S{+}\phi}$ in $U_k$. There is a uniform proof transformation sending any $(\mathcal S{+}\phi)$-proof of contradiction to a $U_k$-proof of contradiction: given a code of an $(\mathcal S{+}\phi)$-proof of $0{=}1$ of length at most $n$, one appends the fixed derivation $\pi_k$ of $Con_{\mathcal S{+}\phi}$ and the standard verification that the coded object is such a proof. This increases proof length by at most a linear factor. Hence there exist a linear polynomial $p_k$ and a constant $d_k$ such that $\mathcal S{\sststile{}{n^{d_k}}}Con_{U_k}(p_k(n)){\rightarrow}Con_{\mathcal S{+}\phi}(n)$.

Suppose toward a contradiction that $\mathcal S{\sststile{}{n^{O(1)}}}Con_{U_k}(n)$ for some sufficiently large $k$. Since $p_k$ is linear, substituting $p_k(n)$ for $n$ still gives $\mathcal S{\sststile{}{n^{O(1)}}}Con_{U_k}(p_k(n))$. Composing these proofs with this implication yields $\mathcal S{\sststile{}{n^{O(1)}}}Con_{\mathcal S{+}\phi}(n)$, contrary to the hypothesis.

Therefore for all sufficiently large $k$ and every constant $c$, $\mathcal S\centernot{\sststile{}{n^c}}Con_{S^1_2{+}\phi_{BB}(k)}(n)$.

$(2){\rightarrow}(1)$ The sentence $\phi_{BB}(k)$ is true for every $k$, so take $\phi{:=}\phi_{BB}(k)$ for any sufficiently large $k$ supplied by (2). Since every axiom of $S^1_2$ is an axiom of $\mathcal S$, any proof from $S^1_2{+}\phi$ is verbatim a proof from $\mathcal S{+}\phi$, and $\mathcal S$ verifies this inclusion with polynomial overhead; hence polynomial-size $\mathcal S$-proofs of $Con_{\mathcal S{+}\phi}(n)$ would yield polynomial-size $\mathcal S$-proofs of $Con_{S^1_2{+}\phi}(n)$, contradicting (2).
\end{proof}

\begin{figure}[t]
\centering
\small
\begin{tikzpicture}[
x=1cm,
y=1cm,
>=Latex,
chartlabel/.style={font=\bfseries\small,align=center},
box/.style={draw,rounded corners,fill=white,align=center,inner sep=4pt},
redbox/.style={draw=hardred,rounded corners,fill=white,align=center,inner sep=4pt,text=hardred}
]
  \fill[camblue!35] (0.1,0.25) rectangle (3.1,6.65);
  \fill[midgray!55] (3.1,0.25) rectangle (7.75,6.65);
  \fill[hardred!10] (7.75,0.25) rectangle (11.4,6.65);

  \draw[very thick,oxblue] (0,0) -- (0,6.9);
  \draw[very thick,oxblue] (3.1,0) -- (3.1,6.9);
  \draw[very thick,decorate,decoration={zigzag,segment length=6pt,amplitude=3pt},hardred] (7.75,0) -- (7.75,6.9);
  \draw[-{Latex[length=2.6mm]},thick,black!60] (0.1,0.05) -- (11.55,0.05);

  \node[anchor=east,font=\Large\bfseries] at (-0.18,3.45) {$\mathcal S$};
  \node[chartlabel,text width=2.6cm] at (1.55,6.1) {Proved easy mechanism};
  \node[chartlabel,text width=3.0cm] at (9.65,6.1) {Canonical hard information};

  \node[box,text width=2.55cm,font=\scriptsize] at (1.55,4.7) {$EA{\vdash}Con_{\mathcal S}{\to}$ $Con_{\mathcal S{+}\phi}$};
  \node[font=\scriptsize\bfseries,align=center,text width=2.5cm] at (1.55,3.28) {$\Downarrow$\\simulation};

  \node[redbox,text width=2.6cm,font=\scriptsize] at (9.65,4.75) {Proved canonical tail:\\$\phi_{\mathrm{BB}}(k)$ for sufficiently large $k$};

  \node[font=\scriptsize\bfseries,align=center,text width=2.95cm] at (1.55,0.72) {Easy $\phi$};
  \node[font=\scriptsize\bfseries,align=center,text width=3.55cm] at (5.4,0.72) {Hard $\phi$?};
  \node[font=\scriptsize\bfseries,align=center,text width=3.05cm] at (9.65,0.72) {Canonical hard $\phi$\\[-1pt](if hard $\phi$ exist)};
\end{tikzpicture}
\caption{The figure updates Figure \ref{fig:current-knowledge} to give an equivalent sufficient condition for easy $\phi$ (Theorem  \ref{thm:relative-consistency-implies-simulation}) and to show that Busy Beavers are a canonical hard-information frontier (Theorem \ref{thm:busy-beaver-transfer}).}
\label{fig:busy-beaver-transfer}
\end{figure}

Theorem~\ref{thm:busy-beaver-transfer} also yields a finite-exception Busy Beaver provability-reflection principle. If no optimal proof system exists, then, for all but finitely many $k$, $\mathcal S$ simulates $\mathcal S{+}\phi_{BB}(k)$ if and only if $\mathcal S{\vdash}\phi_{BB}(k)$. This is an external finite-exception equivalence, not a uniform procedure extracting a proof of $\phi_{BB}(k)$ from the bounded-consistency proofs.

\begin{theorem}\label{thm:busy-beaver-reflection}
Let $\mathcal S{\supseteq}S^1_2$ be a sound, finitely axiomatized sequential theory. For every $k$, if $\mathcal S{\vdash}\phi_{BB}(k)$, then $\mathcal S{\sststile{}{n^{O(1)}}}Con_{\mathcal S{+}\phi_{BB}(k)}(n)$. If no optimal Cook--Reckhow proof system exists, then, for all but finitely many $k$, one has $\mathcal S{\sststile{}{n^{O(1)}}}Con_{\mathcal S{+}\phi_{BB}(k)}(n)$ if and only if $\mathcal S{\vdash}\phi_{BB}(k)$.
\end{theorem}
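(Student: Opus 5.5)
The first claim is a direct application of Theorem~\ref{thm:relative-consistency-implies-simulation}. If $\mathcal S{\vdash}\phi_{BB}(k)$, fix an $\mathcal S$-proof $\sigma$ of $\phi_{BB}(k)$. Then every $(\mathcal S{+}\phi_{BB}(k))$-proof becomes an $\mathcal S$-proof once each use of the extra axiom is replaced by $\sigma$. This replacement is an elementary syntactic map, and $EA$ verifies that it sends proofs of $0{=}1$ to proofs of $0{=}1$. Hence $EA{\vdash}Con_{\mathcal S}{\rightarrow}Con_{\mathcal S{+}\phi_{BB}(k)}$. Since $\mathcal S$ is finitely axiomatized and sequential, Theorem~\ref{thm:relative-consistency-implies-simulation} yields $\mathcal S{\sststile{}{n^{O(1)}}}Con_{\mathcal S{+}\phi_{BB}(k)}(n)$.

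Alternatively, one can bypass interpretability. By the same reasoning as in the proof of Theorem~\ref{thm:higher-absolute-consistency-hardness}, the replacement gives $\mathcal S{\vdash}\forall n\,Con_{\mathcal S}(n+c_k){\rightarrow}Con_{\mathcal S{+}\phi_{BB}(k)}(n)$ for a constant $c_k$. Combining this with Pudl\'ak's polynomial-size proofs of $Con_{\mathcal S}(n)$ then gives the bound. This direction holds for every $k$ and needs no hypothesis on optimal proof systems.

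For the second claim, the ``if'' direction is the first claim. For ``only if'', assume no optimal Cook--Reckhow proof system exists. By Kraj\'i\v{c}ek--Pudl\'ak~\cite[Theorem~2.1]{Krajicek}, as recalled in the introduction, there is then a true sentence $\phi$ with $\mathcal S\centernot{\sststile{}{n^c}}Con_{\mathcal S{+}\phi}(n)$ for every $c$. This is condition (1) of Theorem~\ref{thm:busy-beaver-transfer}, so by the implication $(1){\rightarrow}(2)$ there is $k_0$ such that for all $k{\ge}k_0$ and every $c$, $\mathcal S\centernot{\sststile{}{n^c}}Con_{S^1_2{+}\phi_{BB}(k)}(n)$.

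Next transfer this hardness to $\mathcal S{+}\phi_{BB}(k)$, using the argument of $(2){\rightarrow}(1)$ in Theorem~\ref{thm:busy-beaver-transfer}. Every $(S^1_2{+}\phi_{BB}(k))$-proof is verbatim an $(\mathcal S{+}\phi_{BB}(k))$-proof, and $\mathcal S$ verifies this. So polynomial-size $\mathcal S$-proofs of $Con_{\mathcal S{+}\phi_{BB}(k)}(n)$ would give polynomial-size proofs of $Con_{S^1_2{+}\phi_{BB}(k)}(n)$. Therefore for $k{\ge}k_0$, $\mathcal S$ does not simulate $\mathcal S{+}\phi_{BB}(k)$.

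It remains to show $\mathcal S{\nvdash}\phi_{BB}(k)$ for all large $k$, so that both sides of the equivalence are false. For this, apply Lemma~\ref{lem:busy-beaver-proves-consistency} with $\mathcal T{=}\mathcal S$, which is true and computably axiomatized. For $k{\ge}k_{\mathcal S}^{\mathrm{cs}}$ we get $S^1_2{+}\phi_{BB}(k){\vdash}Con_{\mathcal S}$. So if $\mathcal S{\vdash}\phi_{BB}(k)$, then $\mathcal S{\vdash}Con_{\mathcal S}$, contradicting G\"odel's second incompleteness theorem for the sound theory $\mathcal S$. This is the remark following Lemma~\ref{lem:busy-beaver-proves-consistency}. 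For $k{\ge}\max(k_0,k_{\mathcal S}^{\mathrm{cs}})$ both sides of the biconditional fail, and the finitely many remaining $k$ are the allowed exceptions.

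The main obstacle is invoking the Kraj\'i\v{c}ek--Pudl\'ak theorem in the right form. It must produce a hard true extension of this particular $\mathcal S$ in the sense of condition (1), with the quantifier ``for every $c$'', rather than only non-optimality of some proof system. I would rely on the introduction's citation giving exactly this. If needed, the hard $\phi$ can be taken to be a sentence asserting reflection for a stronger true system; the other steps are routine.
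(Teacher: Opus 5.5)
Your proposal is correct and matches the paper's proof. Your ``alternative'' route for the first claim is exactly what the paper does: replace the extra axiom by a fixed $\mathcal S$-proof, then compose with Pudl\'ak's polynomial-size proofs of $Con_{\mathcal S}(n)$. The paper uses a polynomial $p$ rather than your $n{+}c_k$; the linear bound is the safer statement, since literal per-use replacement can blow up proof length multiplicatively. Your second claim follows the paper's chain exactly: Kraj\'i\v{c}ek--Pudl\'ak, then Theorem~\ref{thm:busy-beaver-transfer}, then inclusion of proof predicates, plus Lemma~\ref{lem:busy-beaver-proves-consistency} with $\mathcal T{=}\mathcal S$ and G\"odel's second theorem. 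Your primary route through Theorem~\ref{thm:relative-consistency-implies-simulation} is also valid, but it calls on the Friedman--Visser interpretability machinery where the direct translation already suffices.
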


\begin{proof}
Fix $k$ and suppose that $\mathcal S{\vdash}\phi_{BB}(k)$. Replacing every use of the additional axiom $\phi_{BB}(k)$ by a fixed $\mathcal S$-proof of that sentence gives a polynomial-time transformation of $\mathcal S{+}\phi_{BB}(k)$-proofs into $\mathcal S$-proofs. Formalizing this transformation in $S^1_2{\subseteq}\mathcal S$ yields a polynomial $p$ such that $\mathcal S{\sststile{}{n^{O(1)}}}Con_{\mathcal S}(p(n)){\rightarrow}Con_{\mathcal S{+}\phi_{BB}(k)}(n)$. Since $\mathcal S$ is finitely axiomatized and sequential, Pudl\'ak's theorem~\cite{Pudlak1986length} gives $\mathcal S{\sststile{}{n^{O(1)}}}Con_{\mathcal S}(n)$; composing yields $\mathcal S{\sststile{}{n^{O(1)}}}Con_{\mathcal S{+}\phi_{BB}(k)}(n)$, that is, $\mathcal S$ simulates $\mathcal S{+}\phi_{BB}(k)$.

Now assume that no optimal Cook--Reckhow proof system exists. By the Kraj\'{\i}\v{c}ek--Pudl\'ak correspondence, there is a true sentence $\psi$ such that, for every constant $c$, $\mathcal S\centernot{\sststile{}{n^c}}Con_{\mathcal S{+}\psi}(n)$. Theorem~\ref{thm:busy-beaver-transfer} therefore supplies a threshold $K_0$ such that, for every $k{\geq}K_0$ and every constant $c$, $\mathcal S\centernot{\sststile{}{n^c}}Con_{S^1_2{+}\phi_{BB}(k)}(n)$.

Fix $k{\geq}K_0$. If $\mathcal S{\sststile{}{n^{O(1)}}}Con_{\mathcal S{+}\phi_{BB}(k)}(n)$, then, since $S^1_2{+}\phi_{BB}(k)$ is a subtheory of $\mathcal S{+}\phi_{BB}(k)$, the formalized inclusion of proof predicates would yield $\mathcal S{\sststile{}{n^{O(1)}}}Con_{S^1_2{+}\phi_{BB}(k)}(n)$. This contradicts the choice of $K_0$. Thus $\mathcal S$ does not simulate $\mathcal S{+}\phi_{BB}(k)$ for any $k{\geq}K_0$.

Moreover, Lemma~\ref{lem:busy-beaver-proves-consistency}, applied to $\mathcal T{=}\mathcal S$, supplies a threshold $K_1$ such that $S^1_2{+}\phi_{BB}(k){\vdash}Con_{\mathcal S}$ whenever $k{\geq}K_1$. If $\mathcal S{\vdash}\phi_{BB}(k)$ for such a $k$, then $\mathcal S{\vdash}Con_{\mathcal S}$, contradicting G\"odel's second incompleteness theorem. Hence $\mathcal S{\not\vdash}\phi_{BB}(k)$ for every $k{\geq}K_1$.

It follows that, for every $k{\geq}\max\{K_0,K_1\}$, both $\mathcal S{\sststile{}{n^{O(1)}}}Con_{\mathcal S{+}\phi_{BB}(k)}(n)$ and $\mathcal S{\vdash}\phi_{BB}(k)$ are false. Together with the unconditional forward implication, this proves the finite-exception equivalence.
\end{proof}
Theorem~\ref{thm:busy-beaver-transfer} provides a route from the nonoptimality of individual proof systems to a single family of tautologies hard for every nonoptimal proof system in the class. If a fixed sound theory $\mathcal S$ fails to simulate some true extension $\mathcal S{+}\phi$, then, for every sufficiently large $k$, the fixed Busy Beaver family $(Con_{S^1_2{+}\phi_{BB}(k)}(n))_n$ is already hard for $\mathcal S$. By itself, this gives for each $\mathcal S$ only a theory-dependent family. Under the global hypothesis that every sound theory in the class has some hard true extension, however, one can diagonalize simultaneously over all theory/exponent pairs and obtain a single nonconstructively chosen function $k(n)$ such that the family $(Con_{S^1_2{+}\phi_{BB}(k(n))}(n))_n$ is hard for every theory in the class. Via the usual passage from bounded-consistency hardness to propositional hardness, this yields a single family of tautologies hard for every nonoptimal proof system arising from the theory class.

\begin{theorem}\label{thm:universal-busy-beaver-family}
Assume that for every sound theory $\mathcal S{\supseteq}S^1_2$ with polynomial-time decidable axioms there exists a true sentence $\psi$ such that for every constant $c$, $\mathcal S\centernot{\sststile{}{n^c}}Con_{\mathcal S{+}\psi}(n)$. Then there exists a nonconstructively chosen function $k{:}\mathbb N{\to}\mathbb N$ with unbounded range such that for every sound theory $\mathcal S{\supseteq}S^1_2$ with polynomial-time decidable axioms, the family $(Con_{S^1_2{+}\phi_{BB}(k(n))}(n))_n$ is hard for $\mathcal S$. Equivalently, for every such $\mathcal S$ and every constant $c$, $\mathcal S\centernot{\sststile{}{n^c}}Con_{S^1_2{+}\phi_{BB}(k(n))}(n)$.
\end{theorem}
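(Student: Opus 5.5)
The plan is a diagonalization over all pairs (theory, exponent), with Theorem~\ref{thm:busy-beaver-transfer} supplying hardness for each fixed theory. First, the class of sound theories $\mathcal S{\supseteq}S^1_2$ with polynomial-time decidable axioms is countable. Each is presented by a polynomial-time clocked decision procedure for its axiom set, though soundness itself is not decidable. So we may fix a nonconstructive enumeration $\mathcal S_1,\mathcal S_2,\dots$ of exactly the sound members of the class. We also enumerate all pairs $(i,c)$ with $i,c\in\mathbb N$ in a sequence in which each pair occurs infinitely often.

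For each $i$, apply the hypothesis to get a hard true $\psi_i$ for $\mathcal S_i$. Then Theorem~\ref{thm:busy-beaver-transfer} $(1){\rightarrow}(2)$ gives a threshold $K_i$ such that, for every $k{\ge}K_i$ and every $c$, $\mathcal S_i\centernot{\sststile{}{n^c}}Con_{S^1_2{+}\phi_{BB}(k)}(n)$. Unfolding the definition of $\sststile{}{n^c}$, this means: for every $k{\ge}K_i$ and every $c$ there are infinitely many $n$ at which $Con_{S^1_2{+}\phi_{BB}(k)}(n)$ has no $\mathcal S_i$-proof of length $\le n^c$.

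Next build $k(n)$ by stages with increasing breakpoints $n_0<n_1<\cdots$. At stage $s$, with pair $(i_s,c_s)$, let $m_s=\max_{j\le s}K_{i_j}$ and set $k(n):=\max(m_s,s)$ for $n\in[n_{s},n_{s+1})$. Choose $n_{s+1}$ so that the interval contains some $n$ witnessing that $Con_{S^1_2{+}\phi_{BB}(k)}(n)$ has no $\mathcal S_{i_s}$-proof of length $\le n^{c_s}$, where $k$ is the constant value used on this interval. Such an $n$ exists beyond $n_s$ because $k\ge K_{i_s}$ and the failure set is infinite. Since $k(n)\ge s$ on stage $s$, the range of $k$ is unbounded.

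To verify hardness, fix a sound $\mathcal S=\mathcal S_i$ and a constant $c$. The pair $(i,c)$ recurs at infinitely many stages, each producing a witness $n$ with no $\mathcal S$-proof of $Con_{S^1_2{+}\phi_{BB}(k(n))}(n)$ of size $\le n^c$. Hence the proofs fail for infinitely many $n$, so $\mathcal S\centernot{\sststile{}{n^c}}Con_{S^1_2{+}\phi_{BB}(k(n))}(n)$.

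\emph{The main obstacle} is the bookkeeping that keeps the per-stage choice of $k$ from interfering with other theories. This is handled because the constraint at each stage concerns only one pair and a single interval, and the definition of $\sststile{}{n^c}$ (``for all sufficiently large $n$'') only requires infinitely many failures. A secondary subtlety is that $k(n)$ must be above $K_i$ at the witness points for theory $i$; taking the running maximum $m_s$ ensures this. All choices are noncomputable (they use $K_i$ and the truth of $\phi_{BB}$), matching the ``nonconstructively chosen'' clause.
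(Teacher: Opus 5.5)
Your proposal is correct and follows essentially the same diagonalization as the paper: enumerate (theory, exponent) pairs with each pair repeating infinitely often, take thresholds from Theorem~\ref{thm:busy-beaver-transfer}, and fix $k$ at witness lengths where short proofs fail. The only difference is bookkeeping. The paper sets $k$ only at isolated points $n_e$ and builds in slack (exponent $d_e{+}1$ and $n_e{>}e$) so that a single stage refutes any bound $An^c$. You instead make $k$ a monotone step function and obtain infinitely many failures from the recurring pairs. Either works.
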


\begin{proof}
Let $(\mathcal S_e,d_e)_{e\in\mathbb N}$ be an enumeration of all pairs consisting of a sound theory $\mathcal S_e{\supseteq}S^1_2$ with polynomial-time decidable axioms and a positive integer exponent $d_e{\ge}1$, with each pair appearing infinitely often.

For each $e$, by hypothesis there exists a true sentence $\psi_e$ such that, for every constant $c$, $\mathcal S_e\centernot{\sststile{}{n^c}}Con_{\mathcal S_e{+}\psi_e}(n)$. Therefore, by Theorem~\ref{thm:busy-beaver-transfer}, there exists $K_e$ such that, for every fixed $m{\geq}K_e$ and every constant $a{>}0$, $\mathcal S_e\centernot{\sststile{}{n^a}}Con_{S^1_2{+}\phi_{BB}(m)}(n)$.

Set $m_0{=}n_0{=}0$. At stage $e{\geq}1$, choose $m_e{\geq}K_e$ with $m_e{>}m_{e-1}$. Applying the preceding conclusion with $a{=}d_e{+}1$, choose $n_e{>}\max\{n_{e-1},e\}$ such that every $\mathcal S_e$-proof of $Con_{S^1_2{+}\phi_{BB}(m_e)}(n_e)$ has size greater than $n_e^{d_e+1}$. Since $n_e{>}e$, every such proof has size greater than $e n_e^{d_e}$.

Define $k(n_e):=m_e$ for each $e{\ge}1$, and define $k(n)$ arbitrarily on all other inputs. Since $(m_e)$ is strictly increasing, the range of $k$ is unbounded.

Fix a sound theory $\mathcal S{\supseteq}S^1_2$ with polynomial-time decidable axioms, and suppose toward a contradiction that for some constant $c$, $\mathcal S{\sststile{}{n^c}}Con_{S^1_2{+}\phi_{BB}(k(n))}(n)$. Then there exist constants $A,N$ such that for all $n{\ge}N$, there is an $\mathcal S$-proof of $Con_{S^1_2{+}\phi_{BB}(k(n))}(n)$ of size at most $A n^c$.

Choose $e$ such that $\mathcal S_e{=}\mathcal S$, $d_e{\ge}c$, $e{\ge}A$, and $n_e{\ge}N$; this is possible because every theory appears infinitely often paired with arbitrarily large integers. At $n{=}n_e$ we have $k(n_e){=}m_e$, so $Con_{S^1_2{+}\phi_{BB}(k(n_e))}(n_e)=Con_{S^1_2{+}\phi_{BB}(m_e)}(n_e)$. By construction every $\mathcal S$-proof of this sentence has size greater than $e n_e^{d_e}$, while $e n_e^{d_e}{\ge}A n_e^c$. This contradicts the assumed upper bound.

Therefore, for every sound theory $\mathcal S{\supseteq}S^1_2$ with polynomial-time decidable axioms and every constant $c$, $\mathcal S\centernot{\sststile{}{n^c}}Con_{S^1_2{+}\phi_{BB}(k(n))}(n)$.
\end{proof}

Under Pudl\'ak's Conjecture, one can give a stronger statement indicating just how large $k$ must be:

\begin{theorem}\label{thm:pudlak-implies-busy-beaver-hardness}
Assume Pudl\'ak's Conjecture. Then for every $k{\ge}k_{\mathcal S}^{\mathrm{cs}}$, for every constant $c$, $\mathcal S\centernot{\sststile{}{n^c}}Con_{\mathcal S{+}\phi_{BB}(k)}(n)$.
\end{theorem}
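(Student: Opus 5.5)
The plan is to reduce the statement to Theorem~\ref{thm:higher-absolute-consistency-hardness}, taking $\phi{:=}\phi_{BB}(k)$. That theorem already says that, under Pudl\'ak's Conjecture, any true $\phi$ with $\mathcal S{+}\phi{\vdash}Con_{\mathcal S}$ gives, for every constant $c$, $\mathcal S\centernot{\sststile{}{n^c}}Con_{\mathcal S{+}\phi}(n)$. The only thing left to verify is therefore the hypothesis $\mathcal S{+}\phi_{BB}(k){\vdash}Con_{\mathcal S}$ for every $k{\ge}k_{\mathcal S}^{\mathrm{cs}}$.

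That hypothesis comes from Lemma~\ref{lem:busy-beaver-proves-consistency} with $\mathcal T{=}\mathcal S$. This is legitimate because $\mathcal S$ is sound, hence true, and is computably axiomatized: its axioms are polynomial-time decidable. The lemma gives $S^1_2{+}\phi_{BB}(k){\vdash}Con_{\mathcal S}$ for every $k{\ge}k_{\mathcal S}^{\mathrm{cs}}$. Since $S^1_2{\subseteq}\mathcal S$, monotonicity yields $\mathcal S{+}\phi_{BB}(k){\vdash}Con_{\mathcal S}$. The sentence $\phi_{BB}(k)$ is true by definition. Theorem~\ref{thm:higher-absolute-consistency-hardness} then applies directly and gives the conclusion for each such $k$ and every $c$.

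The main point needing care is that the threshold is exactly $k_{\mathcal S}^{\mathrm{cs}}$ and not just ``sufficiently large'' $k$. This is a matter of bookkeeping, since the lemma is stated with exactly that threshold. A second, minor point is that Theorem~\ref{thm:higher-absolute-consistency-hardness} is stated for the same ambient $\mathcal S$ to which Pudl\'ak's Conjecture is applied. The conjecture must therefore be read for this $\mathcal S$, which is the implicit convention of Conjecture~\ref{conj:pudlak}.

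No appeal to finite axiomatizability, sequentiality, or Theorem~\ref{thm:busy-beaver-transfer} is needed. The argument is a two-line composition of the lemma with the absolute-consistency hardness theorem.
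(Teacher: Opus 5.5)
Your proposal is correct and matches the paper's argument. The paper also gets $\mathcal S{+}\phi_{BB}(k){\vdash}Con_{\mathcal S}$ from Lemma~\ref{lem:busy-beaver-proves-consistency} via $S^1_2{\subseteq}\mathcal S$, and then re-runs inline the proof-substitution and composition argument of Theorem~\ref{thm:higher-absolute-consistency-hardness}, replacing the axiom $Con_{\mathcal S}$ by a fixed proof $\pi_k$; citing that theorem directly, as you do, is simply a cleaner packaging of the same proof.
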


\begin{proof}
Fix $k{\geq}k_{\mathcal S}^{\mathrm{cs}}$. Since $\mathcal S{+}\phi_{BB}(k)$ extends $S^1_2{+}\phi_{BB}(k)$, Lemma~\ref{lem:busy-beaver-proves-consistency} implies $\mathcal S{+}\phi_{BB}(k){\vdash}Con_{\mathcal S}$.

Fix once and for all a proof $\pi_k$ of $Con_{\mathcal S}$ in $\mathcal S{+}\phi_{BB}(k)$. Since $k$ is fixed, the size of $\pi_k$ is a constant independent of $n$.

There is a uniform proof transformation sending any $(\mathcal S{+}Con_{\mathcal S})$-proof to an $(\mathcal S{+}\phi_{BB}(k))$-proof by replacing each use of the extra axiom $Con_{\mathcal S}$ by the fixed derivation $\pi_k$. This increases proof length by at most a constant multiplicative and additive factor. Hence there exist a linear polynomial $p_k$ and a constant $d_k$ such that $\mathcal S{\sststile{}{n^{d_k}}}Con_{\mathcal S{+}\phi_{BB}(k)}(p_k(n)){\rightarrow}Con_{\mathcal S{+}Con_{\mathcal S}}(n)$.

Suppose toward a contradiction that $\mathcal S{\sststile{}{n^{O(1)}}}Con_{\mathcal S{+}\phi_{BB}(k)}(n)$. Since $p_k$ is linear, substituting $p_k(n)$ for $n$ still gives $\mathcal S{\sststile{}{n^{O(1)}}}Con_{\mathcal S{+}\phi_{BB}(k)}(p_k(n))$. Composing these proofs with the displayed implication yields $\mathcal S{\sststile{}{n^{O(1)}}}Con_{\mathcal S{+}Con_{\mathcal S}}(n)$, contrary to the hypothesis.

Therefore $\mathcal S\centernot{\sststile{}{n^{O(1)}}}Con_{\mathcal S{+}\phi_{BB}(k)}(n)$. Since $k{\geq}k_{\mathcal S}^{\mathrm{cs}}$ was arbitrary, this holds for every such $k$.
\end{proof}

These results show that Busy Beaver hardness is not merely a convenient encoding of difficult extensions, but reflects a structural barrier to simulation. In each case, non-simulation arises because $\mathcal S{+}\phi$ carries consistency-strength information whose transfer is not already visible over weak arithmetic.

Informally, any argument showing that a sound theory $\mathcal S$ fails to simulate some true extension $\mathcal S{+}\phi$ also yields, for that same $\mathcal S$, some Busy Beaver sentence $\phi_{BB}(k)$ that $\mathcal S$ cannot prove witnessing the same obstruction. In this sense, the source of non-simulation already appears among canonical incompleteness statements, and reflects a limit on the ability of $\mathcal S$ to exploit true information it cannot itself prove. This is exactly the pattern predicted by \textbf{HRC}: if $EA{\not\vdash}Con_{\mathcal S}{\rightarrow}Con_{\mathcal S{+}\phi}$, then $Con_{\mathcal S{+}\phi}(n)$ should already be hard for $\mathcal S$.
\subsection{An Extension Normal Form for Non-Simulation}

We note one further unconditional lemma, which makes the passage from the Kraj\'{\i}\v{c}ek--Pudl\'ak correspondence to extensions of the form $\mathcal S{+}\phi$ self-contained, and gives the Busy Beaver reduction a companion in pure consistency-statement form.

\begin{lemma}[Extension normal form]\label{lem:extension-normal-form} Let $\mathcal S{\supseteq}S^1_2$ and let $\mathcal T$ be any consistent theory with polynomial-time decidable axioms. If, for every constant $c$,
$\mathcal S{\centernot{\sststile{}{n^c}}}Con_{\mathcal T}(n)$, then, for every constant $c$, $\mathcal S{\centernot{\sststile{}{n^c}}} Con_{\mathcal S{+}Con_{\mathcal T}}(n)$.
\end{lemma}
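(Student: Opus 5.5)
The plan is to argue by contraposition, following the same reduction pattern as Theorems~\ref{thm:higher-absolute-consistency-hardness} and~\ref{thm:busy-beaver-transfer}. Suppose that for some constant $c$, $\mathcal S{\sststile{}{n^c}}Con_{\mathcal S{+}Con_{\mathcal T}}(n)$. From this I will derive polynomial-size $\mathcal S$-proofs of $Con_{\mathcal T}(n)$. The key is a feasible implication of the form $Con_{\mathcal S{+}Con_{\mathcal T}}(q(n)){\rightarrow}Con_{\mathcal T}(n)$, with $q$ a polynomial, that has polynomial-size $\mathcal S$-proofs.

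The construction of that implication is the heart of the argument. Work inside $\mathcal S$ (indeed inside $S^1_2$) with a free $n$. Suppose $w$ is a $\mathcal T$-proof of $0{=}1$ with $|w|\le n$. Since $\mathcal T$ has polynomial-time decidable axioms, $S^1_2$ formalizes proof checking. So there is an $S^1_2$-proof, hence an $\mathcal S$-proof, of the true $\Sigma^b_1$ sentence $Prf_{\mathcal T}(\ulcorner w\urcorner,\ulcorner 0{=}1\urcorner)$. By feasible $\Sigma^b_1$-completeness of $S^1_2$, this proof has size polynomial in $|w|$, and the map $w\mapsto$ (that proof) is polynomial-time and provably total in $S^1_2$.

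Next, append the axiom $Con_{\mathcal T}$, which is $\neg\exists w\,Prf_{\mathcal T}(w,\ulcorner 0{=}1\urcorner)$, and one existential-introduction step. This yields an $(\mathcal S{+}Con_{\mathcal T})$-proof of $0{=}1$ of length at most $q(n)$ for a fixed polynomial $q$.

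Formalizing this transformation in $S^1_2{\subseteq}\mathcal S$ gives $\mathcal S{\vdash}\forall n\,\bigl(Con_{\mathcal S{+}Con_{\mathcal T}}(q(n)){\rightarrow}Con_{\mathcal T}(n)\bigr)$. Instantiating at the numeral $n$ then costs only polynomial size. This is exactly the same move as in the proof of Theorem~\ref{thm:interpretability-implies-simulation}.

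The argument then concludes as follows. By assumption, $Con_{\mathcal S{+}Con_{\mathcal T}}(q(n))$ has $\mathcal S$-proofs of size $q(n)^c$, which is polynomial in $n$. Combining these with the instantiated implication by modus ponens gives $\mathcal S{\sststile{}{n^{O(1)}}}Con_{\mathcal T}(n)$. This contradicts the hypothesis that $Con_{\mathcal T}(n)$ is hard for $\mathcal S$ for every exponent. Consistency of $\mathcal T$ is only needed so that the hypothesis is non-vacuous and $Con_{\mathcal T}(n)$ is true. Soundness of $\mathcal S$ is not used.

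The step I expect to be the main obstacle is the uniform formalization in the second paragraph. One must check that $S^1_2$ proves, for all $w$, that a proof of $Prf_{\mathcal T}(w,\ulcorner0{=}1\urcorner)$ of polynomial length exists and is a correct $\mathcal S$-proof. This is a formalized feasible $\Sigma^b_1$-completeness statement, a standard but delicate ingredient of Pudl\'ak's and Buss's work. I would cite it rather than reprove it, and use the paper's convention that the presentations of $\mathcal T$ and of the inclusion $S^1_2{\subseteq}\mathcal S$ are given by elementary syntactic maps.
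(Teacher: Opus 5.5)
Your proposal is correct and takes essentially the same route as the paper. Both use a witness-uniform, $S^1_2$-formalized $\Sigma_1$-completeness transformation that turns a $\mathcal T$-proof of $0{=}1$ of length at most $n$ into an $(\mathcal S{+}Con_{\mathcal T})$-proof of $0{=}1$ of length at most $q(n)$, giving short $\mathcal S$-proofs of $Con_{\mathcal S{+}Con_{\mathcal T}}(q(n)){\rightarrow}Con_{\mathcal T}(n)$, followed by substitution of $q(n)$ and composition. The differences are only expository (you spell out the existential-introduction step and the numeral instantiation that the paper leaves implicit), and you are right that the reduction itself uses neither consistency of $\mathcal T$ nor soundness of $\mathcal S$.
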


\begin{proof}
There is a uniform, $S^1_2$-formalizable transformation converting any $\mathcal T$-proof of $0{=}1$ of length at most $n$ into an
$(\mathcal S{+}Con_{\mathcal T})$-proof of $0{=}1$ of length at most $q(n)$, for a fixed polynomial $q$: given such a proof $\rho$, provable $\Sigma_1$-completeness yields an $S^1_2$-proof, of size polynomial in $n$, of the $\Sigma_1$ fact $\neg Con_{\mathcal T}$ witnessed by $\rho$; combining with the axiom $Con_{\mathcal T}$ gives a contradiction proof in $\mathcal S{+}Con_{\mathcal T}$. Formalizing the transformation in $S^1_2{\subseteq}\mathcal S$, there is a constant $d$ with $\mathcal S{\sststile{}{n^d}} Con_{\mathcal S{+}Con_{\mathcal T}}(q(n)){\rightarrow}Con_{\mathcal T}(n)$. If $\mathcal S{\sststile{}{n^{O(1)}}}Con_{\mathcal S{+}Con_{\mathcal T}}(n)$ held, then substituting $q(n)$ and composing would give $\mathcal S{\sststile{}{n^{O(1)}}}Con_{\mathcal T}(n)$, contrary to
hypothesis.
\end{proof}

\begin{corollary}\label{cor:kp-extension-form}
If no optimal Cook--Reckhow proof system exists then, for every sound $\mathcal S{\supseteq}S^1_2$ with polynomial-time decidable axioms, there is a true sentence $\phi$---which may be taken of the form $Con_{\mathcal T}$ for a true theory $\mathcal T$---such that, for every constant $c$, $\mathcal S{\centernot{\sststile{}{n^c}}}Con_{\mathcal S{+}\phi}(n)$.
\end{corollary}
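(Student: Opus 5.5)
The plan is to combine the Kraj\'{\i}\v{c}ek--Pudl\'ak correspondence with Lemma~\ref{lem:extension-normal-form}. Fix a sound $\mathcal S{\supseteq}S^1_2$ with polynomial-time decidable axioms, and assume no optimal Cook--Reckhow proof system exists.

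\emph{Step 1: produce a hard consistent target theory.} Kraj\'{\i}\v{c}ek and Pudl\'ak~\cite[Theorem~2.1]{Krajicek} show that, when no optimal proof system exists, $\mathcal S$ (viewed as a proof system) is not optimal. Their argument then supplies a true theory $\mathcal T$ with polynomial-time decidable axioms such that, for every $c$, $\mathcal S{\centernot{\sststile{}{n^c}}}Con_{\mathcal T}(n)$. One can in fact take $\mathcal T{=}\mathcal S{+}\psi$ for a true sentence $\psi$. Concretely:
\begin{itemize}
\item non-optimality gives a sound proof system $P$ that $\mathcal S$ does not $p$-simulate;
\item one then takes $\psi$ to be the $\Pi_1$ reflection statement $\mathrm{Rfn}_P$ for $P$;
\item $\mathcal S{+}\mathrm{Rfn}_P$ then simulates $P$;
\item the standard equivalence between $p$-simulating $\mathcal S{+}\psi$ and proving $Con_{\mathcal S{+}\psi}(n)$ in polynomial size gives the hardness of $Con_{\mathcal T}(n)$.
\end{itemize}
Since $\mathcal T$ is true, it is consistent.

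\emph{Step 2: apply the normal form.} Lemma~\ref{lem:extension-normal-form}, applied to this $\mathcal T$, gives directly that for every constant $c$, $\mathcal S{\centernot{\sststile{}{n^c}}}Con_{\mathcal S{+}Con_{\mathcal T}}(n)$. Set $\phi{:=}Con_{\mathcal T}$. It is true because $\mathcal T$ is a true, hence consistent, theory. It also has the advertised form, which proves the corollary.

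\emph{Main obstacle.} The delicate point is Step 1: extracting from the Kraj\'{\i}\v{c}ek--Pudl\'ak correspondence a target theory $\mathcal T$ whose axiom set is polynomial-time decidable and whose bounded consistency statements are hard for $\mathcal S$. The cited theorem is phrased in terms of $Con_{\mathcal S{+}\phi}(n)$ for a true $\phi$, so the simplest route is the following.
\begin{itemize}
\item Cite that form directly, with $\mathcal T{:=}\mathcal S{+}\phi$; it inherits polynomial-time decidable axioms from $\mathcal S$ since one sentence is added.
\item Note that hardness of $Con_{\mathcal T}(n)$ is then exactly the hypothesis of Lemma~\ref{lem:extension-normal-form}.
\end{itemize}
In this form the corollary simply recodes the cited extension $\mathcal S{+}\phi$ as the one-sentence consistency extension $\mathcal S{+}Con_{\mathcal S{+}\phi}$. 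The lemma makes that recoding rigorous without any interpretability assumption on $\mathcal S$.
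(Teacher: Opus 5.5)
Your proposal is correct and follows essentially the same route as the paper: obtain from the Kraj\'{\i}\v{c}ek--Pudl\'ak correspondence a true theory $\mathcal T$ with polynomial-time decidable axioms whose bounded consistency statements $Con_{\mathcal T}(n)$ have no polynomial-size $\mathcal S$-proofs, then apply Lemma~\ref{lem:extension-normal-form} with $\phi{:=}Con_{\mathcal T}$, which is true because $\mathcal T$ is. Your reflection-based unpacking of the cited step (taking $\mathcal T{=}\mathcal S{+}\mathrm{Rfn}_P$ for a system $P$ that $\mathcal S$ fails to simulate) correctly elaborates what the paper simply cites, provided you read ``$p$-simulate'' throughout in the proof-length sense; that is what non-optimality supplies and what the bounded-consistency argument yields.
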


\begin{proof}
By the Kraj\'{\i}\v{c}ek--Pudl\'ak correspondence~\cite{Krajicek}, if no optimal proof system exists there is a true theory $\mathcal T$ with polynomial-time decidable axioms such that, for every $c$,
$\mathcal S{\centernot{\sststile{}{n^c}}}Con_{\mathcal T}(n)$. Apply Lemma~\ref{lem:extension-normal-form}, noting that $Con_{\mathcal T}$ is true because $\mathcal T$ is.
\end{proof}
The extension normal form and the Busy Beaver transfer combine to give a propositional payoff: under the no-optimal-system hypothesis, the exact Busy Beaver families supply canonical hard tautologies for every proof system whose translated bounded-consistency proofs can be pulled back to a corresponding arithmetic theory.

\begin{corollary}\label{cor:canonical-hard-families}
Suppose no optimal Cook--Reckhow proof system exists. Let $P$ be a proof system, and let $\mathcal S_P{\supseteq}S^1_2$ be a sound theory with polynomial-time decidable axioms such that polynomial-size $P$-proofs of the propositional translations $\langle Con_{\mathcal U}(n)\rangle_n$ yield polynomial-size $\mathcal S_P$-proofs of $Con_{\mathcal U}(n)$ for every theory $\mathcal U$ with polynomial-time decidable axioms; for instance, $\mathcal S_P{=}S^1_2{+}\mathrm{RFN}(P)$ with the standard translation~\cite{Krajicek,Krajicekproof}. Then, for all sufficiently large $k$, the tautology families $\langle Con_{S^1_2{+}\phi_{BB}(k)}(n)\rangle_n$ have no polynomial-size $P$-proofs.
\end{corollary}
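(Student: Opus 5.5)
The plan is to combine the Busy Beaver transfer (Theorem~\ref{thm:busy-beaver-transfer}) with the Kraj\'{\i}\v{c}ek--Pudl\'ak correspondence, applied to the theory $\mathcal S_P$, and then pull the resulting arithmetic hardness back to $P$ using the stated translation hypothesis. The argument runs by contraposition at the last step.

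\textbf{Step 1: a hard true extension of $\mathcal S_P$.} Since no optimal Cook--Reckhow proof system exists, Corollary~\ref{cor:kp-extension-form}, applied to the sound theory $\mathcal S_P{\supseteq}S^1_2$ with polynomial-time decidable axioms, gives a true sentence $\phi$ (which may be taken of the form $Con_{\mathcal T}$) such that, for every constant $c$, $\mathcal S_P\centernot{\sststile{}{n^c}}Con_{\mathcal S_P{+}\phi}(n)$.

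\textbf{Step 2: transfer to the Busy Beaver family.} Direction $(1){\rightarrow}(2)$ of Theorem~\ref{thm:busy-beaver-transfer}, instantiated with $\mathcal S{:=}\mathcal S_P$, yields a threshold $K$. For every $k{\ge}K$ and every constant $c$, we get $\mathcal S_P\centernot{\sststile{}{n^c}}Con_{S^1_2{+}\phi_{BB}(k)}(n)$. This threshold depends only on $P$ through $\mathcal S_P$, which is what ``sufficiently large $k$'' should mean in the statement.

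\textbf{Step 3: pull back to $P$.} Fix $k{\ge}K$ and put $\mathcal U{:=}S^1_2{+}\phi_{BB}(k)$. Its axioms are polynomial-time decidable: those of $S^1_2$ plus one fixed sentence. Suppose toward a contradiction that the propositional family $\langle Con_{\mathcal U}(n)\rangle_n$ had polynomial-size $P$-proofs. The hypothesis on $\mathcal S_P$ then gives $\mathcal S_P{\sststile{}{n^{O(1)}}}Con_{\mathcal U}(n)$, which contradicts Step 2. Hence for every $k{\ge}K$ the family $\langle Con_{S^1_2{+}\phi_{BB}(k)}(n)\rangle_n$ has no polynomial-size $P$-proofs.

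\textbf{Main obstacle.} The only delicate point is matching quantifiers. The hypothesis on $\mathcal S_P$ is stated for polynomial-size proofs, while Step 2 gives a statement of the form ``for every $c$, not $n^c$''. I will read ``$P$ has polynomial-size proofs'' as ``there exists $c$ with proofs of size $n^c$ for all large $n$''. Under that reading, the pullback produces some exponent $c'$ with $\mathcal S_P{\sststile{}{n^{c'}}}Con_{\mathcal U}(n)$, and this is exactly what Step 2 forbids.

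The parenthetical example $\mathcal S_P{=}S^1_2{+}\mathrm{RFN}(P)$ only needs to satisfy the hypothesis. It is standard (reflection plus polynomial-size proofs of the translation being a tautology), and it is not needed for the argument since the statement assumes it. Finally, soundness of $\mathcal S_P$ is required for both Corollary~\ref{cor:kp-extension-form} and Theorem~\ref{thm:busy-beaver-transfer} to apply, and it is part of the assumption.
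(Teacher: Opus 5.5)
Your proposal is correct and follows the paper's proof step for step. Corollary~\ref{cor:kp-extension-form} applied to $\mathcal S_P$ gives a hard true extension; the $(1){\rightarrow}(2)$ direction of Theorem~\ref{thm:busy-beaver-transfer} then yields a threshold $K$ beyond which $Con_{S^1_2{+}\phi_{BB}(k)}(n)$ is hard for $\mathcal S_P$; and the translation hypothesis with $\mathcal U{:=}S^1_2{+}\phi_{BB}(k)$ turns polynomial-size $P$-proofs into a contradiction, with your quantifier reading matching the paper's conventions.
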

\begin{proof}
By Corollary~\ref{cor:kp-extension-form}, there is a true sentence $\phi$ such that, for every constant $c$, $\mathcal S_P{\centernot{\sststile{}{n^c}}}Con_{\mathcal S_P{+}\phi}(n)$. Theorem~\ref{thm:busy-beaver-transfer}, applied to $\mathcal S_P$, supplies a threshold $K$ such that, for every $k{\geq}K$ and every constant $c$, $\mathcal S_P{\centernot{\sststile{}{n^c}}}Con_{S^1_2{+}\phi_{BB}(k)}(n)$. If, for some $k{\geq}K$, the family $\langle Con_{S^1_2{+}\phi_{BB}(k)}(n)\rangle_n$ had polynomial-size $P$-proofs, then the transfer hypothesis with $\mathcal U{:=}S^1_2{+}\phi_{BB}(k)$ would yield polynomial-size $\mathcal S_P$-proofs of $Con_{S^1_2{+}\phi_{BB}(k)}(n)$, a contradiction.
\end{proof}

For $\mathcal S_P{=}S^1_2{+}\mathrm{RFN}(P)$, the transfer hypothesis is the standard reflection argument: $\mathcal S_P$ verifies a given $P$-proof of $\langle Con_{\mathcal U}(n)\rangle$ with a proof of size polynomial in the $P$-proof, applies $\mathrm{RFN}(P)$ to conclude that the translation is a tautology, and recovers the $\Pi^b_1$ statement $Con_{\mathcal U}(n)$ from its translation with polynomial overhead~\cite{Krajicekproof}. Thus, under the no-optimal-system hypothesis, every proof system acquires a canonical Busy Beaver family of hard tautologies.
\subsection{A Busy Beaver Hardness Conjecture}
\label{sec:busy-beaver-hardness}

The preceding subsection shows that, if non-simulation occurs at all, then non-simulation can be forced into exact Busy Beaver extensions. This is a useful canonicalization result, but it leaves open a sharper question. Can one formulate a Busy Beaver hardness principle that is specific about the parameter $k$? In other words, instead of saying only that sufficiently large Busy Beaver extensions can witness non-simulation whenever some hard true extension exists, can we say when the particular extension $\mathcal S{+}\phi_{\mathrm{BB}}(k)$ should be hard for $\mathcal S$?

The most naive answer is too strong. One might conjecture that, whenever $\mathcal S{\not\vdash}\phi_{\mathrm{BB}}(k)$, the theory $\mathcal S$ cannot simulate $\mathcal S{+}\phi_{\mathrm{BB}}(k)$. The Je\v{r}\'abek--Pudl\'ak proof-translation mechanism shows that an unprovable true sentence $\phi$ can nevertheless yield simulation, provided there is a sufficiently strong explanation of the extension by interpretation or relative consistency; in the formulation used here, the relevant weak-base explanation is $EA{\vdash}Con_{\mathcal S}{\rightarrow}Con_{\mathcal S{+}\phi}$. Thus mere failure of $\mathcal S$ to prove $\phi_{\mathrm{BB}}(k)$ is not the right boundary.

The heuristic reason Busy Beaver statements are nevertheless compelling is that exact Busy Beaver values appear to be unusually tight carriers of hidden information. A correct assertion $\phi_{\mathrm{BB}}(k)$ does not merely add an arbitrary true sentence; it constrains all $k$-state computations. For large $k$, this information can encode consistency consequences of very strong theories while remaining inaccessible to a weaker base. This suggests that exact Busy Beaver extensions should be hard precisely when their simulation would require using information that the weak base cannot explain.

This leads to the following coarse Busy Beaver special case of the information-constraint viewpoint.

\begin{conjecture}\label{conj:busy-beaver-hardness}
(\textbf{Busy Beaver Hardness}) Let $\mathcal S{\supseteq}S^1_2$ be a sound, finitely axiomatized sequential theory. Then, for every $k$, $\mathcal S$ does not simulate $\mathcal S{+}\phi_{\mathrm{BB}}(k)$ if and only if $EA{\not\vdash}Con_{\mathcal S}{\rightarrow}Con_{\mathcal S{+}\phi_{\mathrm{BB}}(k)}$.
\end{conjecture}

\begin{theorem}\label{thm:busy-beaver-hardness-implies-reflection}
Assume \textbf{Busy Beaver Hardness} for $\mathcal S$. Then, for every $k$, $\mathcal S$ satisfies the following Busy-Beaver-restricted reflection property: if $\mathcal S{\sststile{}{n^{O(1)}}}Con_{\mathcal S{+}\phi_{\mathrm{BB}}(k)}(n)$, then $EA{\vdash}Con_{\mathcal S}{\rightarrow}Con_{\mathcal S{+}\phi_{\mathrm{BB}}(k)}$.
\end{theorem}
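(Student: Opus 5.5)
The plan is a one-step contrapositive argument applied to the biconditional in \textbf{Busy Beaver Hardness}, with $k$ fixed throughout. Assume, for a fixed $k$, that $\mathcal S{\sststile{}{n^{O(1)}}}Con_{\mathcal S{+}\phi_{\mathrm{BB}}(k)}(n)$. By the definition in Section~\ref{sec:preliminaries}, this says exactly that $\mathcal S$ simulates $\mathcal S{+}\phi_{\mathrm{BB}}(k)$. So the left-hand side of the conjectured equivalence, ``$\mathcal S$ does not simulate $\mathcal S{+}\phi_{\mathrm{BB}}(k)$,'' is false.

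Next we use the right-to-left direction of the conjecture at this $k$: if $EA{\not\vdash}Con_{\mathcal S}{\rightarrow}Con_{\mathcal S{+}\phi_{\mathrm{BB}}(k)}$, then $\mathcal S$ does not simulate $\mathcal S{+}\phi_{\mathrm{BB}}(k)$. Taking the contrapositive, the failure of non-simulation forces $EA{\vdash}Con_{\mathcal S}{\rightarrow}Con_{\mathcal S{+}\phi_{\mathrm{BB}}(k)}$, which is the desired conclusion. Since $k$ was arbitrary, the reflection property holds for every $k$.

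There is no real obstacle. The only point needing care is to check that ``simulate'' in Conjecture~\ref{conj:busy-beaver-hardness} means the same notion as the hypothesis $\mathcal S{\sststile{}{n^{O(1)}}}Con_{\mathcal S{+}\phi_{\mathrm{BB}}(k)}(n)$. The definition in the preliminaries fixes this, so no translation overhead arises. I would also note that only one direction of the biconditional is used. The other direction, which says that relative consistency implies simulation, is already unconditional by Theorem~\ref{thm:relative-consistency-implies-simulation} applied with $\phi{=}\phi_{\mathrm{BB}}(k)$, since $\mathcal S$ is finitely axiomatized and sequential. Consequently, under the conjecture, the reflection property is in fact an equivalence for each $k$.
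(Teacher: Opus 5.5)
Your argument is correct and matches the paper's proof: both contrapose the biconditional in \textbf{Busy Beaver Hardness} at a fixed $k$ and identify ``$\mathcal S$ simulates $\mathcal S{+}\phi_{\mathrm{BB}}(k)$'' with $\mathcal S{\sststile{}{n^{O(1)}}}Con_{\mathcal S{+}\phi_{\mathrm{BB}}(k)}(n)$. Your remark that only the right-to-left direction is needed is a correct minor refinement, since the other direction is supplied unconditionally by Theorem~\ref{thm:relative-consistency-implies-simulation}.
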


\begin{proof}
By \textbf{Busy Beaver Hardness}, for every $k$, $\mathcal S$ does not simulate $\mathcal S{+}\phi_{\mathrm{BB}}(k)$ if and only if $EA{\not\vdash}Con_{\mathcal S}{\rightarrow}Con_{\mathcal S{+}\phi_{\mathrm{BB}}(k)}$. Taking contrapositives gives that, for every $k$, $\mathcal S$ simulates $\mathcal S{+}\phi_{\mathrm{BB}}(k)$ if and only if $EA{\vdash}Con_{\mathcal S}{\rightarrow}Con_{\mathcal S{+}\phi_{\mathrm{BB}}(k)}$. In the bounded-consistency formulation of simulation, this is exactly the equivalence between $\mathcal S{\sststile{}{n^{O(1)}}}Con_{\mathcal S{+}\phi_{\mathrm{BB}}(k)}(n)$ and $EA{\vdash}Con_{\mathcal S}{\rightarrow}Con_{\mathcal S{+}\phi_{\mathrm{BB}}(k)}$.
\end{proof}

The logical relationship with Pudl\'ak's Conjecture is asymmetric. By Theorem~\ref{thm:pudlak-implies-busy-beaver-hardness}, Pudl\'ak's Conjecture implies the hard direction of \textbf{Busy Beaver Hardness} for every $k{\geq}k_{\mathcal S}^{\mathrm{cs}}$. Indeed, $\mathcal S{+}\phi_{\mathrm{BB}}(k){\vdash}Con_{\mathcal S}$, so easiness of the Busy Beaver extension would imply easiness of $\mathcal S{+}Con_{\mathcal S}$.
\section{Heuristic Constraints}\label{sec:desiderata}
This section develops several heuristic constraints suggested by the above results. These constraints aim to limit the use of unprovable information in simulations, particularly information drawn from immune sets such as Busy Beavers and Kolmogorov-random strings.
\subsection{Tightness}
A natural heuristic constraint on criteria for non-simulation is to assert that the best known sufficient condition for simulation is already the best possible. Under this heuristic, the sufficient condition from Theorem~\ref{thm:relative-consistency-implies-simulation} is also necessary, and therefore becomes a characterization of simulation. Equivalently, failure of that condition becomes the predicted criterion for non-simulation. We adopt this only as a working principle: the paper does not prove that Theorem~\ref{thm:relative-consistency-implies-simulation} is optimal, which would in itself resolve open problems.
\subsection{Feasible Reflection}

A natural structural constraint on simulation is that polynomial-size proofs of bounded consistency should have an internal explanation in a weak base theory. This motivates the following conjecture.

\begin{conjecture}\label{conj:feasible-reflection}
(\textbf{Feasible Reflection (FR)}) Let $\mathcal S{\supseteq}S^1_2$ be a sound, finitely axiomatized sequential theory, and let $\phi$ be a true sentence. If $\mathcal S{\sststile{}{n^{O(1)}}}Con_{\mathcal S{+}\phi}(n)$, then $EA{\vdash}Con_{\mathcal S}{\rightarrow}Con_{\mathcal S{+}\phi}$.
\end{conjecture}

\textbf{FR} asserts that simulation should already admit a relative-consistency explanation over a weak base theory. The point is not that proving each bounded-consistency statement automatically yields a proof of the corresponding universal consistency statement. Rather, polynomial-size proofs of all the statements $Con_{\mathcal S{+}\phi}(n)$ should not exist unless the relative consistency of the extension is already visible over $EA$. In this sense, \textbf{FR} is not an induction principle but a structural constraint on proofs that might otherwise exploit true but unprovable information.

Together with Theorem~\ref{thm:relative-consistency-implies-simulation}, \textbf{FR} would characterize simulation: for every true sentence $\phi$, one would have $\mathcal S{\sststile{}{n^{O(1)}}}Con_{\mathcal S{+}\phi}(n)$ if and only if $EA{\vdash}Con_{\mathcal S}{\rightarrow}Con_{\mathcal S{+}\phi}$.

The conjecture may also be read as a no-hidden-information principle. A simulation should not possess unexplained access to the safety of adjoining $\phi$: if polynomial-size bounded-consistency proofs exist, their success should be accounted for by the weak-base relative-consistency implication. In this sense, \textbf{FR} rules out simulations whose efficiency depends on information that is invisible to the public proof-theoretic description of the theories.
\subsection{Simulations Employing Unproven or Inaccessible Facts}

A natural refinement of the Busy Beaver phenomenon is obtained by considering Kolmogorov-random strings. We fix a universal Turing machine $U$. For each string $x{\in}\{0,1\}^*$, let $K_U(x)$ denote the plain Kolmogorov complexity of $x$, that is, the length of the shortest program $p$ such that $U(p){=}x$. Fix once and for all a constant $c_U$, and let $R$ denote the set of strings $x$ such that $K_U(x){\geq}|x|{-}c_U$. Thus $R$ is the set of Kolmogorov-random strings relative to $U$, up to the fixed additive constant $c_U$. It is well known that $R$ is infinite and immune; any fixed sound effectively axiomatized theory proves $x{\in}R$ for at most finitely many true instances, by Chaitin's Incompleteness Theorem~\cite{ChaitinIncompleteness} (see also Li and Vit{\'{a}}nyi~\cite{LiVitanyiBook}).

Kolmogorov-random axioms $x{\in}R$ suggest a particularly concrete source of hardness within the relative-consistency viewpoint. The guiding intuition is that simulation should not be able to exploit true random information that the base theory cannot itself recover. This leads to the following information-theoretic conjecture. It should not be viewed as a theorem derived from the structural discussion above, but as the distinguished random-axiom instance of the same general obstruction.

One cannot expect such hardness for every true random axiom. By Theorem~\ref{thm:relative-consistency-implies-simulation}, simulation is possible whenever $EA{\vdash}Con_{\mathcal S}{\rightarrow}Con_{\mathcal S{+}(x{\in}R)}$. The natural question is therefore whether, in the random-axiom case, the exact obstruction to simulation is failure of this weak-base relative-consistency implication.

\begin{conjecture}\label{conj:kolmogorov-hardness}
(\textbf{Kolmogorov Hardness (KH)}) Let $\mathcal S{\supseteq}S^1_2$ be a sound, finitely axiomatized sequential theory. For every $x{\in}R$ in the standard model, one has $\mathcal S{\sststile{}{n^{O(1)}}}Con_{\mathcal S{+}(x{\in}R)}(n)$ if and only if $EA{\vdash}Con_{\mathcal S}{\rightarrow}Con_{\mathcal S{+}(x{\in}R)}$.
\end{conjecture}
We expect the hardness direction of \textbf{KH} to extend to sound computably axiomatized theories, including strong ambient theories, such as $ZFC$. Chaitin incompleteness still leaves sufficiently long true randomness facts inaccessible even when no matching positive characterization is known.

\begin{lemma}\label{lem:consistency-implies-randomness}
For each fixed string $x$, $EA$ proves $Con_{\mathcal S{+}(x{\in}R)}{\to}x{\in}R$. Hence, if $EA{\vdash}Con_{\mathcal S}{\to}\allowbreak Con_{\mathcal S{+}(x{\in}R)}$, then $EA{+}Con_{\mathcal S}{\vdash}x{\in}R$.
\end{lemma}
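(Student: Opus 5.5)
We will prove the first claim by contraposition inside $EA$, using that non-membership in $R$ is a $\Sigma_1$ statement. For a fixed string $x$, the statement $x{\notin}R$ says $K_U(x){<}|x|{-}c_U$. This means there exists a program $p$ with $|p|{<}|x|{-}c_U$ and a halting computation of $U$ on $p$ outputting $x$. This is a $\Sigma_1$ sentence $\exists w\,\theta(w)$ with $\theta$ elementary (indeed $\Delta_0$ in the graph of exponentiation). The plan is to show, arguing in $EA$, that if $x{\notin}R$ then $\mathcal S{+}(x{\in}R)$ proves $0{=}1$. Contraposing then gives $Con_{\mathcal S{+}(x{\in}R)}{\to}x{\in}R$.

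The key step is formalized $\Sigma_1$-completeness. $EA$ proves, for each $\Sigma_1$ sentence $\sigma$, the implication $\sigma{\to}\mathrm{Prov}_{S^1_2}(\ulcorner\sigma\urcorner)$. Concretely, given a witness $w$ (a halting computation trace of $U$ on $p$), one builds elementarily in $w$ an $S^1_2$-proof (indeed a $Q$-proof) of the closed instance $\theta(\overline w)$, and then of $\exists w\,\theta(w)$. Here we use that the fixed presentation has $S^1_2{\subseteq}\mathcal S$ represented by an elementary syntactic map, as stipulated in the Preliminaries. This yields, in $EA$, the implication $x{\notin}R{\to}\mathrm{Prov}_{\mathcal S}(\ulcorner x{\notin}R\urcorner)$. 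Since $x{\in}R$ is the axiom adjoined, $EA$ further proves that an $\mathcal S$-proof of $x{\notin}R$ extends, by a fixed elementary step (adding the axiom $x{\in}R$ and propositional reasoning), to an $(\mathcal S{+}(x{\in}R))$-proof of $0{=}1$. Hence $EA{\vdash}x{\notin}R{\to}\neg Con_{\mathcal S{+}(x{\in}R)}$, which is the first claim. Note that $x$ is a fixed standard string, so $\ulcorner x{\in}R\urcorner$ is a fixed numeral and no uniformity in $x$ is needed.

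The second claim is then pure logic. From $EA{\vdash}Con_{\mathcal S}{\to}Con_{\mathcal S{+}(x{\in}R)}$ and $EA{\vdash}Con_{\mathcal S{+}(x{\in}R)}{\to}x{\in}R$, transitivity gives $EA{\vdash}Con_{\mathcal S}{\to}x{\in}R$, hence $EA{+}Con_{\mathcal S}{\vdash}x{\in}R$.

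The main obstacle is only bookkeeping in the $\Sigma_1$-completeness step. The witness length, a halting time of $U$, is unbounded, so the claim is genuinely needed in $EA$ rather than as a bounded fact. We must check that the proof of $\theta(\overline w)$ has size elementary in $w$, so that $EA$, which has exponentiation, can construct it and verify it. This holds by the standard induction on $\Delta_0$ formulas, and we would cite it rather than reprove it.
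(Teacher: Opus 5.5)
Your proposal is correct and follows essentially the same route as the paper. Both arguments note that $\neg(x{\in}R)$ is $\Sigma_1$, use formalized $\Sigma_1$-completeness in $EA$ to turn a witness into an $(\mathcal S{+}(x{\in}R))$-proof of $\neg(x{\in}R)$ and hence of $0{=}1$, contrapose, and then chain with the hypothesis to get the second claim. The only cosmetic difference is that you route $\Sigma_1$-completeness through $S^1_2$ and its elementary inclusion into $\mathcal S$, whereas the paper applies it directly to the polynomial-time axiomatized theory $\mathcal S{+}(x{\in}R)$.
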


\begin{proof}
Fix a string $x$. The assertion $\neg(x{\in}R)$ is $\Sigma_1$: it is witnessed by a program $p$ and a time $t$ such that $|p|{<}|x|{-}c_U$ and $U(p)$ halts in exactly $t$ steps with output $x$. By provable $\Sigma_1$-completeness over $EA$ for the polynomial-time axiomatized theory $\mathcal S{+}(x{\in}R)$ (H\'ajek--Pudl\'ak~\cite{HajekPudlak}), $EA$ proves that any such witness yields an $\mathcal S{+}(x{\in}R)$-proof of $\neg(x{\in}R)$; since $x{\in}R$ is an axiom of $\mathcal S{+}(x{\in}R)$, formalized modus ponens turns this into an $\mathcal S{+}(x{\in}R)$-proof of contradiction. The argument is internal and uniform in the witness, with no case split on whether $x{\in}R$ holds in the standard model. Thus $EA$ proves $\neg(x{\in}R){\rightarrow}\neg Con_{\mathcal S{+}(x{\in}R)}$, and therefore $EA$ proves $Con_{\mathcal S{+}(x{\in}R)}{\rightarrow}x{\in}R$. The consequence follows by composing this implication with $EA{\vdash}Con_{\mathcal S}{\rightarrow}Con_{\mathcal S{+}(x{\in}R)}$.
\end{proof}

The point of this formulation is that weak-base relative consistency is the exact kind of information that can make simulation possible. Lemma~\ref{lem:consistency-implies-randomness} shows that provable relative consistency also forces $EA{+}Con_{\mathcal S}$ to prove the random axiom. Thus failure of $EA{+}Con_{\mathcal S}$ to prove $x{\in}R$ unconditionally rules out the weak-base relative-consistency explanation. This no-access obstruction is incorporated into the master consequence theorem for \textbf{HRC} below, even though the biconditional threshold in Conjecture~\ref{conj:kolmogorov-hardness} is stated in terms of relative consistency.

\begin{lemma}\label{lem:no-access-no-relative-consistency}
For each fixed string $x$, if $EA{+}Con_{\mathcal S}{\not\vdash}x{\in}R$, then $EA{\not\vdash}Con_{\mathcal S}{\rightarrow}$ $Con_{\mathcal S{+}(x{\in}R)}$.
\end{lemma}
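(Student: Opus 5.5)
This is the contrapositive of the second sentence of Lemma~\ref{lem:consistency-implies-randomness}, so the plan is to argue by contraposition.

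Fix a string $x$ and suppose, toward the contrapositive, that $EA{\vdash}Con_{\mathcal S}{\rightarrow}Con_{\mathcal S{+}(x{\in}R)}$. Lemma~\ref{lem:consistency-implies-randomness} gives, for this fixed $x$, that $EA{\vdash}Con_{\mathcal S{+}(x{\in}R)}{\rightarrow}x{\in}R$. Chaining the two implications inside $EA$ by ordinary propositional reasoning yields $EA{\vdash}Con_{\mathcal S}{\rightarrow}x{\in}R$.

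By the deduction theorem, which holds for sentences in first-order logic, this gives $EA{+}Con_{\mathcal S}{\vdash}x{\in}R$. That contradicts the hypothesis $EA{+}Con_{\mathcal S}{\not\vdash}x{\in}R$, so $EA{\not\vdash}Con_{\mathcal S}{\rightarrow}Con_{\mathcal S{+}(x{\in}R)}$.

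There is no real obstacle, because all the arithmetical content sits in Lemma~\ref{lem:consistency-implies-randomness}. The one thing to check is bookkeeping. The sentences $x{\in}R$ and $Con_{\mathcal S{+}(x{\in}R)}$ must be the same fixed formalizations in the hypothesis, in that lemma, and in the conclusion. Also, the lemma is used only for the single standard string $x$, so no uniformity in $x$ is required.
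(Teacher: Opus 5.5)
Your proof is correct and matches the paper's, which just notes that the statement is the contrapositive of Lemma~\ref{lem:consistency-implies-randomness}; you also unpack the chaining step that the paper performs inside that lemma's own proof. One small nit: passing from $EA{\vdash}Con_{\mathcal S}{\rightarrow}x{\in}R$ to $EA{+}Con_{\mathcal S}{\vdash}x{\in}R$ is just modus ponens, not the deduction theorem, which gives the converse direction.
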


\begin{proof}
This is the contrapositive of Lemma~\ref{lem:consistency-implies-randomness}.
\end{proof}

\section{Proposed Characterization of Simulation}\label{sec:higher-relative-consistency}
Theorem~\ref{thm:relative-consistency-implies-simulation} shows that, for $\mathcal S$ finitely axiomatized and sequential, relative consistency ($EA{\vdash}Con_{\mathcal S}{\rightarrow}Con_{\mathcal S{+}\phi}$) implies simulation ($\mathcal S{\sststile{}{n^{O(1)}}}Con_{\mathcal S{+}\phi}(n)$). Taking this theorem to be tight leads to the following structural candidate characterization of simulation. This section develops that proposal and its main consequences. In particular, the case $\phi{=}Con_{\mathcal S}$ and the Kolmogorov-randomness axioms arise directly as natural instances of the same general obstruction, while the Busy Beaver reduction shows that arbitrary hard true extensions can be transferred to exact Busy Beaver value statements.
\begin{conjecture}\label{conj:higher-relative-consistency}
(\textbf{Higher Relative Consistency (HRC)}) Let $\mathcal S{\supseteq}S^1_2$ be a sound, finitely axiomatized sequential theory, and let $\phi$ be a true sentence. If $EA{\not\vdash}Con_{\mathcal S}{\rightarrow}Con_{\mathcal S{+}\phi}$, then for every constant $c{>}0$, $\mathcal S\centernot{\sststile{}{n^c}}Con_{\mathcal S{+}\phi}(n)$.
\end{conjecture}

This conjecture is intended as the converse of Theorem~\ref{thm:relative-consistency-implies-simulation}, and hence as a characterization of simulation in this setting. \textbf{HRC} (Conjecture~\ref{conj:higher-relative-consistency}) and \textbf{FR} (Conjecture~\ref{conj:feasible-reflection}) are contrapositives. Figure~\ref{fig:motivating-hrc-fr} presents the two formulations of the same proposed converse. Their formal content is not merely that simulation should imply interpretability, but that feasible relative consistency and weak-base relative consistency should coincide. Under the horizontal equivalences in the figure, \textbf{HRC}/\textbf{FR} also yields the corresponding converse from simulation to interpretability.

The finite-axiomatizability and sequentiality hypotheses delimit the setting in which the known positive mechanism supports a biconditional characterization; they are not intended to limit the underlying hardness claim. We expect the hardness directions of \textbf{HRC}, \textbf{KH}, and the finite-scale and pairwise conjectures to extend to sound effective theories more generally, even when the corresponding positive direction is unavailable.

\begin{figure}[t]
\centering
\small
\vspace{0.6em}
\resizebox{0.985\textwidth}{!}{%
\begin{tikzpicture}[box/.style={draw,rounded corners,minimum width=4.8cm,minimum height=1.2cm,align=center,font=\large},hlabel/.style={font=\small,fill=white,inner sep=1.5pt}]
\node[box] (ULbox) at (0,3) {$\mathcal S$ interprets $\mathcal S{+}\phi$};
\node[box] (LLbox) at (0,0) {$\mathcal S{\sststile{}{n^{O(1)}}}Con_{\mathcal S{+}\phi}(n)$};
\node[box] (URbox) at (8,3) {$EA{\vdash}Con_{\mathcal S}{\to}Con_{\mathcal S{+}\phi}$};
\node[box] (LRbox) at (8,0) {$\mathcal S{\sststile{}{n^{O(1)}}}Con_{\mathcal S}(p(n)){\to}Con_{\mathcal S{+}\phi}(n)$};
\draw[->,very thick] (ULbox) -- node[right,font=\small] {Je\v{r}\'abek via Pudl\'ak} (LLbox);
\draw[<->,very thick] (ULbox) -- node[midway,above=18pt,hlabel] {Friedman--Visser~\cite{Visser2017}} (URbox);
\draw[<->,very thick] (LLbox) -- node[midway,below=18pt,hlabel] {Theorem~\ref{thm:feasible-relative-consistency}} (LRbox);
\draw[->,very thick] ([xshift=-10pt]URbox.south) -- node[left,font=\small] {Theorem~\ref{thm:relative-consistency-implies-simulation}} ([xshift=-10pt]LRbox.north);
\draw[->,very thick,dashed,hardred] ([xshift=10pt]LRbox.north) -- node[right,font=\small\bfseries,text=hardred] {HRC/FR} ([xshift=10pt]URbox.south);
\end{tikzpicture}%
}
\caption{Motivating \textbf{HRC}/\textbf{FR}. The dashed arrow is the proposed converse to Theorem~\ref{thm:relative-consistency-implies-simulation}: polynomial-size proofs of the bounded-consistency statements for $\mathcal S{+}\phi$ should require a weak-base proof of the corresponding relative-consistency implication. Under the horizontal equivalences, this also yields the corresponding converse from simulation to interpretability. The figure assumes that $\mathcal S{\supseteq}S^1_2$ is sound, finitely axiomatized, and sequential, and that $\phi$ is true.}
\label{fig:motivating-hrc-fr}
\end{figure}

\subsection{Consequences of \textbf{HRC}}
The next theorem summarizes the main properties of \textbf{HRC} and shows that it performs well against the constraints developed earlier. It is tight relative to the best-known simulation result (Theorem~\ref{thm:relative-consistency-implies-simulation}); it is equivalent to \textbf{FR}; in the finitely axiomatized sequential case, it implies Pudl\'ak's Conjecture and \textbf{KH}; and it packages the no-access random-axiom obstruction from Lemma~\ref{lem:no-access-no-relative-consistency}.

\begin{theorem}\label{thm:hrc-characterization}
Let $\mathcal S{\supseteq}S^1_2$ be a sound, finitely axiomatized sequential theory. Then:
\begin{enumerate}
\item The instance of \textbf{HRC} for $\mathcal S$ is equivalent to the instance of \textbf{FR} for $\mathcal S$.
\item \textbf{HRC} implies Pudl\'ak's Conjecture for $\mathcal S$: for every constant $c{>}0$, $\mathcal S\centernot{\sststile{}{n^c}}Con_{\mathcal S{+}Con_{\mathcal S}}(n)$.
\item Assuming \textbf{HRC}, for every true sentence $\phi$,
\[
EA{\vdash}Con_{\mathcal S}{\rightarrow}Con_{\mathcal S{+}\phi}
\quad\Longleftrightarrow\quad
\mathcal S{\sststile{}{n^{O(1)}}}Con_{\mathcal S{+}\phi}(n).
\]
\item \textbf{HRC} implies the instance of \textbf{KH} for $\mathcal S$.
\item Assume \textbf{HRC}. Let $x{\in}R$ hold in the standard model and suppose
\[
EA{+}Con_{\mathcal S}{\not\vdash}x{\in}R.
\]
Then, for every constant $c{>}0$,
\[
\mathcal S\centernot{\sststile{}{n^c}}
Con_{\mathcal S{+}(x{\in}R)}(n).
\]
\end{enumerate}
\end{theorem}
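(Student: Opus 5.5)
The plan is to prove the five items in order, since each follows from \textbf{HRC} together with earlier results by short contrapositive or composition arguments.

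For item 1, observe that \textbf{HRC} for $\mathcal S$ says: for every true $\phi$, if $EA{\not\vdash}Con_{\mathcal S}{\rightarrow}Con_{\mathcal S{+}\phi}$, then for every $c{>}0$, $\mathcal S\centernot{\sststile{}{n^c}}Con_{\mathcal S{+}\phi}(n)$. The conclusion "for every $c$, not $n^c$-provable" is the negation of $\mathcal S{\sststile{}{n^{O(1)}}}Con_{\mathcal S{+}\phi}(n)$. Hence, for each fixed $\phi$, the instance of \textbf{HRC} is literally the contrapositive of the instance of \textbf{FR}. The only care needed is to check that the quantifier conventions of the two notations match, and the footnote fixing them in the introduction gives exactly this.

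For item 3, the forward direction is Theorem~\ref{thm:relative-consistency-implies-simulation}, which needs no hypothesis beyond finite axiomatizability and sequentiality. The backward direction is \textbf{FR}, which is available by item 1.

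For item 2, apply \textbf{HRC} with $\phi{:=}Con_{\mathcal S}$, which is true because $\mathcal S$ is sound. It then suffices to show $EA{\not\vdash}Con_{\mathcal S}{\rightarrow}Con_{\mathcal S{+}Con_{\mathcal S}}$. Suppose instead that $EA$ proved this implication. Since $EA\subseteq\mathcal S$ up to the standard definitional translation (the base $S^1_2$ plus finitely many axioms suffices to interpret $EA$'s reasoning for this fixed sentence; I would instead go through Theorem~\ref{thm:relative-consistency-implies-simulation}'s interpretation route), we would get that $\mathcal S$ interprets $\mathcal S{+}Con_{\mathcal S}$, by the Friedman--Visser criterion used there. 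A finitely axiomatized sequential theory cannot interpret itself plus its own consistency: by the standard consequence of the second incompleteness theorem for interpretations (Feferman/Pudl\'ak), $\mathcal S{+}Con_{\mathcal S}$ would then prove $Con_{\mathcal S{+}Con_{\mathcal S}}$ relativized appropriately, yielding a contradiction. More directly, $\mathcal S{+}Con_{\mathcal S}$ proves $Con_{\mathcal S}$, and interpretability would give $\mathcal S{\vdash}Con_{\mathcal S}^{\,i}$, which contradicts the relativized G\"odel theorem for sequential theories. This step is the main obstacle: the formally cleanest citation is the fact that a consistent finitely axiomatized sequential theory does not interpret $\mathcal S{+}Con_{\mathcal S}$ (H\'ajek--Pudl\'ak). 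I would cite that fact, and then conclude via the Friedman--Visser equivalence that the $EA$-implication fails.

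For item 5, take $x{\in}R$ true with $EA{+}Con_{\mathcal S}{\not\vdash}x{\in}R$. Lemma~\ref{lem:no-access-no-relative-consistency} gives $EA{\not\vdash}Con_{\mathcal S}{\rightarrow}Con_{\mathcal S{+}(x{\in}R)}$, and \textbf{HRC}, applied with the true sentence $\phi{:=}(x{\in}R)$, yields the stated hardness.

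For item 4, note that \textbf{KH} for $\mathcal S$ is precisely the specialization of item 3 to the true sentences $\phi{:=}(x{\in}R)$, for $x{\in}R$ in the standard model. It therefore follows immediately from item 3.
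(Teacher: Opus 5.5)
Your proposal is correct and follows the paper's proof essentially step for step. Items 1, 3, 4 and 5 use the same contrapositive and specialization arguments. Item 2 runs, as in the paper, from a hypothetical $EA$-proof of $Con_{\mathcal S}{\rightarrow}Con_{\mathcal S{+}Con_{\mathcal S}}$ through the Friedman--Visser criterion to an interpretation of $\mathcal S{+}Con_{\mathcal S}$ in $\mathcal S$, which contradicts Pudl\'ak's non-interpretability theorem.

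You should drop the aside suggesting $EA\subseteq\mathcal S$. It fails for weak $\mathcal S$ such as $S^1_2$, and the paper deliberately assumes no such inclusion; since your argument never uses it, nothing is lost.
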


\begin{proof}
For part~(1), the two principles are contrapositives. The instance of \textbf{HRC} for $\mathcal S$ says that, for every true sentence $\phi$, if $EA{\not\vdash}Con_{\mathcal S}{\rightarrow}Con_{\mathcal S{+}\phi}$, then $\mathcal S\centernot{\sststile{}{n^{O(1)}}}Con_{\mathcal S{+}\phi}(n)$. Its contrapositive says that if $\mathcal S{\sststile{}{n^{O(1)}}}Con_{\mathcal S{+}\phi}(n)$, then $EA{\vdash}Con_{\mathcal S}{\rightarrow}Con_{\mathcal S{+}\phi}$, which is exactly the instance of \textbf{FR} for $\mathcal S$.

For part~(2), it suffices to show that $EA{\not\vdash}Con_{\mathcal S}{\rightarrow}Con_{\mathcal S{+}Con_{\mathcal S}}$. Otherwise, the Friedman--Visser interpretability criterion, in the direction supplied by Visser's Interpretation Existence Lemma~\cite{Visser2017}, would imply that $\mathcal S$ interprets $\mathcal S{+}Con_{\mathcal S}$. This contradicts Pudl\'ak's theorem~\cite{Pudlak1985Cuts} that no consistent sequential theory interprets the theory obtained by adjoining its own consistency statement. \textbf{HRC} therefore gives, for every constant $c{>}0$, $\mathcal S\centernot{\sststile{}{n^c}}Con_{\mathcal S{+}Con_{\mathcal S}}(n)$.

For part~(3), suppose first that $EA{\vdash}Con_{\mathcal S}{\rightarrow}Con_{\mathcal S{+}\phi}$. Theorem~\ref{thm:relative-consistency-implies-simulation} gives $\mathcal S{\sststile{}{n^{O(1)}}}Con_{\mathcal S{+}\phi}(n)$. Conversely, if $\mathcal S{\sststile{}{n^{O(1)}}}Con_{\mathcal S{+}\phi}(n)$, then the contrapositive of \textbf{HRC} gives $EA{\vdash}Con_{\mathcal S}{\rightarrow}Con_{\mathcal S{+}\phi}$.

For part~(4), specialize part~(3) to the true sentence $\phi{\equiv}x{\in}R$. For every $x{\in}R$ in the standard model, one obtains $\mathcal S{\sststile{}{n^{O(1)}}}Con_{\mathcal S{+}(x{\in}R)}(n)$ if and only if $EA{\vdash}Con_{\mathcal S}{\rightarrow}Con_{\mathcal S{+}(x{\in}R)}$. This is the instance of Conjecture~\ref{conj:kolmogorov-hardness} for $\mathcal S$.

For part~(5), Lemma~\ref{lem:no-access-no-relative-consistency} gives that $EA{+}Con_{\mathcal S}{\not\vdash}x{\in}R$ implies $EA{\not\vdash}Con_{\mathcal S}{\rightarrow}$ $Con_{\mathcal S{+}(x{\in}R)}$. Since $x{\in}R$ holds in the standard model, \textbf{HRC}, applied to $\phi{\equiv}x{\in}R$, gives that, for every constant $c{>}0$, $\mathcal S\centernot{\sststile{}{n^c}}Con_{\mathcal S{+}(x{\in}R)}(n)$.
\end{proof}
By Chaitin's Incompleteness Theorem applied to the theory $EA{+}Con_{\mathcal S}$, only finitely many true statements of the form $x{\in}R$ are provable in that theory. Hence for all sufficiently long Kolmogorov-random strings $x$, the condition $EA{+}Con_{\mathcal S}{\not\vdash}x{\in}R$ automatically holds, and part~(5) gives $\mathcal S\centernot{\sststile{}{n^{O(1)}}}Con_{\mathcal S{+}(x{\in}R)}(n)$.

\begin{figure}[t]
\centering
\small
\begin{tikzpicture}[x=1cm,y=1cm,>=Latex,chartlabel/.style={font=\bfseries\small,align=center},box/.style={draw,rounded corners,fill=white,align=center,inner sep=4pt},redbox/.style={draw=hardred,rounded corners,fill=white,align=center,inner sep=4pt,text=hardred},hardnode/.style={draw=oxblue,rounded corners,fill=white,align=center,inner sep=4pt},dashbox/.style={draw=hardred,dashed,very thick,rounded corners,fill=white,align=center,inner sep=4pt,text=hardred}]
  \fill[camblue!35] (0.1,0.25) rectangle (3.1,6.65);
  \fill[midgray!55] (3.1,0.25) rectangle (7.75,6.65);
  \fill[hardred!10] (7.75,0.25) rectangle (11.4,6.65);

  \draw[very thick,oxblue] (0,0) -- (0,6.9);
  \draw[very thick,oxblue] (3.1,0) -- (3.1,6.9);
  \draw[very thick,decorate,decoration={zigzag,segment length=6pt,amplitude=3pt},hardred] (7.75,0) -- (7.75,6.9);
  \draw[-{Latex[length=2.6mm]},thick,black!60] (0.1,0.05) -- (11.55,0.05);

  \node[anchor=east,font=\Large\bfseries] at (-0.18,3.45) {$\mathcal S$};
  \node[chartlabel,text width=2.6cm] at (1.55,6.1) {Proved easy mechanism};
  \node[chartlabel,text width=3.7cm] at (5.4,6.1) {\textbf{HRC}/\textbf{FR}: All other $\phi$ are hard};
  \node[chartlabel,text width=3.0cm] at (9.65,6.1) {Canonical hard information};

  \node[box,text width=2.55cm,font=\scriptsize] at (1.55,4.7) {$EA{\vdash}Con_{\mathcal S}{\to}$ $Con_{\mathcal S{+}\phi}$};
  \node[font=\scriptsize\bfseries,align=center,text width=2.5cm] at (1.55,3.28) {$\Downarrow$\\simulation};

  \node[circle,draw=oxblue,very thick,fill=white,minimum size=2.75cm,align=center] (circle) at (4.42,3.55) {};
  \node[font=\small\bfseries,align=center,text width=2.45cm] at (4.42,4.55) {\textbf{KH}:};
  \node[font=\scriptsize,align=center,text width=2.45cm] at (4.42,3.55) {$\phi{\equiv}(x{\in}R)$ is hard unless \\$EA{\vdash}Con_{\mathcal S}{\to}$ $Con_{\mathcal S{+}(x{\in}R)}$};

  \node[hardnode,text width=1.85cm,font=\scriptsize] at (6.75,4.65) {Computable jumps, $Con_{\mathcal S}$};
  \node[hardnode,text width=1.85cm,font=\scriptsize] at (6.75,2.20) {Fast-growing hierarchy, etc.};

  \node[redbox,text width=2.6cm,font=\scriptsize] at (9.65,4.75) {$k$ large enough such that $EA{\not\vdash}Con_{\mathcal S}{\to}$ $Con_{\mathcal S{+}\phi_{\mathrm{BB}}(k)}$};

  \node[font=\scriptsize\bfseries,align=center,text width=2.95cm] at (1.55,0.72) {Easy $\phi$};
  \node[font=\scriptsize\bfseries,align=center,text width=3.55cm] at (5.4,0.72) {Hard $\phi$};
  \node[font=\scriptsize\bfseries,align=center,text width=3.05cm] at (9.65,0.72) {Hard canonical instances};
\end{tikzpicture}
\caption{Under \textbf{HRC}/\textbf{FR}, complexity theorists have complete information: all $\phi$ are hard except when $EA{\vdash}Con_{\mathcal S}{\to}Con_{\mathcal S{+}\phi}$; $x{\in}R$ is hard with finite exceptions except when $EA{\vdash}Con_{\mathcal S}{\to}Con_{\mathcal S{+}(x{\in}R)}$ (\textbf{KH}); and $\phi_{\mathrm{BB}}(k)$ is hard with finite exceptions except when $EA{\vdash}Con_{\mathcal S}{\to}Con_{\mathcal S{+}\phi_{\mathrm{BB}}(k)}$.}
\label{fig:information-picture}
\end{figure}
Figure~\ref{fig:information-picture} illustrates that complexity theorists have complete information under \textbf{HRC}/\textbf{FR}: the only easy extensions are those whose relative consistency is already visible in the weak base, and every other true extension gives rise to hard bounded-consistency families. More explicitly:
\begin{itemize}
\item For an arbitrary true sentence $\phi$, the proposed criterion is exact: $\mathcal S$ has feasible proofs of $Con_{\mathcal S{+}\phi}(n)$ only in the explained case $EA{\vdash}Con_{\mathcal S}{\to}Con_{\mathcal S{+}\phi}$; otherwise $\phi$ lies in the hard region.
\item For random axioms $\phi{\equiv}(x{\in}R)$, this specialization is \textbf{KH}: except for the finite set of weak-base-accessible random facts, $x{\in}R$ yields hard bounded-consistency instances unless $EA{\vdash}Con_{\mathcal S}{\to}Con_{\mathcal S{+}(x{\in}R)}$.
\item For exact Busy Beaver facts, the same information constraint gives a canonical hard tail: for sufficiently large $k$, $\phi_{\mathrm{BB}}(k)$ is hard except in the exceptional cases where $EA{\vdash}Con_{\mathcal S}{\to}Con_{\mathcal S{+}\phi_{\mathrm{BB}}(k)}$.
\item For Pudl\'ak's conjecture, computable jump operators, and Buss-style jumps, the same picture identifies another source of hard candidates. Khaniki~\cite{Khaniki} shows that if a computable jump operator exists, then Pudl\'ak's Conjecture holds. A Buss-style jump should be understood as raising absolute consistency strength when it is represented by a true sentence or schema $\phi_{\mathrm{Buss}}$ with $\mathcal S{+}\phi_{\mathrm{Buss}}{\vdash}Con_{\mathcal S}$. It raises relative-consistency strength, in the sense relevant to \textbf{HRC}/\textbf{FR}, only if $EA{\not\vdash}Con_{\mathcal S}{\to}Con_{\mathcal S{+}\phi_{\mathrm{Buss}}}$. Under that additional relative-consistency failure, \textbf{HRC}/\textbf{FR} place $\phi_{\mathrm{Buss}}$ in the hard region.
\item Thus \textbf{HRC}/\textbf{FR} turn the problem of finding hard tautology families for $\mathcal S$ into a relative-consistency classification problem: the easy cases are exactly the weak-base-explained cases, while the unexplained cases produce hard bounded-consistency statements and hence hard propositional translations.
\end{itemize}
\subsection{Robustness of \textbf{HRC}/\textbf{FR} and \textbf{KH}}

The formulation of \textbf{HRC}/\textbf{FR} is robust under future improvements to the positive simulation theorem. It is conceivable that either the Friedman--Visser interpretability step or the Je\v{r}\'abek--Pudl\'ak proof-translation step could be strengthened so that more true sentences $\phi$ become easy. Such a strengthening would not undermine the information-theoretic thesis; it would merely enlarge the class of acceptable explanations.

The invariant principle is that \emph{inexplicable efficiency does not exist}. The \textbf{KH} consequence is even more stable: any enlarged explanation mechanism still constrained by Chaitin incompleteness can add only finitely many random-axiom exceptions.

\begin{definition}\label{def:explanation-scheme}
An \emph{explanation scheme} $\mathfrak E$ assigns to each pair $(\mathcal S,\phi)$ a condition $\mathsf{Expl}_{\mathfrak E}(\mathcal S,\phi)$, read as saying that $\mathfrak E$ explains why $\mathcal S$ should simulate $\mathcal S{+}\phi$. It is \emph{sound for simulation} if $\mathsf{Expl}_{\mathfrak E}(\mathcal S,\phi)$ implies $\mathcal S{\sststile{}{n^{O(1)}}}Con_{\mathcal S{+}\phi}(n)$. It is \emph{Chaitin-bounded for random axioms over $\mathcal S$} if there is a fixed sound effectively axiomatized theory $\mathcal B_{\mathfrak E}^{\mathcal S}{\supseteq}EA$ such that $\mathsf{Expl}_{\mathfrak E}(\mathcal S,x{\in}R)$ implies $\mathcal B_{\mathfrak E}^{\mathcal S}{\vdash}x{\in}R$.
\end{definition}

The original weak-base relative-consistency explanation is the special case $\mathsf{Expl}_{EA}(\mathcal S,\phi)$ iff $EA{\vdash}Con_{\mathcal S}{\to}Con_{\mathcal S{+}\phi}$. It is sound for simulation on the class of finitely axiomatized sequential theories by Theorem~\ref{thm:relative-consistency-implies-simulation}. For random axioms it is Chaitin-bounded: if $EA{\vdash}Con_{\mathcal S}{\to}Con_{\mathcal S{+}(x{\in}R)}$, then $EA{+}Con_{\mathcal S}{\vdash}x{\in}R$ by Lemma~\ref{lem:consistency-implies-randomness}. Thus one may take $\mathcal B_{\mathfrak E}^{\mathcal S}{=}EA{+}Con_{\mathcal S}$.

\begin{conjecture}\label{conj:inexplicable-efficiency}
(\textbf{No Inexplicable Efficiency}, relative to $\mathfrak E$.) If $\phi$ is true and $\mathcal S{\sststile{}{n^{O(1)}}}Con_{\mathcal S{+}\phi}(n)$, then $\mathsf{Expl}_{\mathfrak E}(\mathcal S,\phi)$.
\end{conjecture}

\begin{theorem}\label{thm:robustness}
Let $\mathfrak E$ be sound for simulation and assume \textbf{No Inexplicable Efficiency} relative to $\mathfrak E$. Then, for every true $\phi$, $\mathcal S{\sststile{}{n^{O(1)}}}Con_{\mathcal S{+}\phi}(n)$ iff $\mathsf{Expl}_{\mathfrak E}(\mathcal S,\phi)$. If moreover $\mathfrak E$ is Chaitin-bounded for random axioms over $\mathcal S$, then $\{x{\in}R:\mathsf{Expl}_{\mathfrak E}(\mathcal S,x{\in}R)\}$ is finite, and hence for all but finitely many true $x{\in}R$, $\mathcal S\centernot{\sststile{}{n^{O(1)}}}Con_{\mathcal S{+}(x{\in}R)}(n)$.
\end{theorem}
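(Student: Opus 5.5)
The biconditional follows by combining the two hypotheses, and the finiteness claim is an application of Chaitin's Incompleteness Theorem to the theory $\mathcal B_{\mathfrak E}^{\mathcal S}$.

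For the first assertion, fix a true sentence $\phi$. If $\mathsf{Expl}_{\mathfrak E}(\mathcal S,\phi)$ holds, then soundness of $\mathfrak E$ for simulation (Definition~\ref{def:explanation-scheme}) gives $\mathcal S{\sststile{}{n^{O(1)}}}Con_{\mathcal S{+}\phi}(n)$. Conversely, if $\mathcal S{\sststile{}{n^{O(1)}}}Con_{\mathcal S{+}\phi}(n)$, then \textbf{No Inexplicable Efficiency} relative to $\mathfrak E$ (Conjecture~\ref{conj:inexplicable-efficiency}) gives $\mathsf{Expl}_{\mathfrak E}(\mathcal S,\phi)$. This mirrors part~(3) of Theorem~\ref{thm:hrc-characterization}, with $\mathsf{Expl}_{EA}$ replaced by a general scheme.

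For the finiteness assertion, assume $\mathfrak E$ is Chaitin-bounded for random axioms over $\mathcal S$, and let $\mathcal B:=\mathcal B_{\mathfrak E}^{\mathcal S}$ be the fixed sound, effectively axiomatized theory extending $EA$ supplied by the definition. If $x{\in}R$ and $\mathsf{Expl}_{\mathfrak E}(\mathcal S,x{\in}R)$, then $\mathcal B{\vdash}x{\in}R$. The set in question is therefore contained in $\{x{\in}R:\mathcal B{\vdash}x{\in}R\}$. By Chaitin's Incompleteness Theorem~\cite{ChaitinIncompleteness}, as recalled before Conjecture~\ref{conj:kolmogorov-hardness}, a fixed sound effectively axiomatized theory proves $x{\in}R$ for only finitely many $x$, so this set is finite.

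For the final clause, take any true $x{\in}R$ outside this finite set. Then $\mathsf{Expl}_{\mathfrak E}(\mathcal S,x{\in}R)$ fails, and since $x{\in}R$ is a true sentence, the biconditional just proved gives $\mathcal S\centernot{\sststile{}{n^{O(1)}}}Con_{\mathcal S{+}(x{\in}R)}(n)$.

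The only point needing care is the appeal to Chaitin's theorem. I would check two things. First, $\mathcal B$ must be a single theory, fixed independently of $x$; the definition provides this. Second, the formalization of ``$x{\in}R$'' in $\mathcal B$ must be the intended one, with soundness of $\mathcal B$ ensuring that any proved instance is true. With $U$ and $c_U$ fixed, Chaitin's bound on the length of strings $x$ for which $\mathcal B{\vdash}K_U(x){\geq}|x|{-}c_U$ then gives finiteness directly. Everything else is bookkeeping.
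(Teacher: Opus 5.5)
Your proposal is correct and matches the paper's proof essentially step for step. The biconditional comes from soundness of $\mathfrak E$ together with \textbf{No Inexplicable Efficiency}, finiteness comes from placing the explained random axioms inside the set of $x{\in}R$ provable in the fixed sound theory $\mathcal B_{\mathfrak E}^{\mathcal S}$ and applying Chaitin's theorem, and the cofinite non-simulation tail follows from the \textbf{No Inexplicable Efficiency} direction. Your two extra checks, that $\mathcal B_{\mathfrak E}^{\mathcal S}$ does not depend on $x$ and that Chaitin's bound forces $|x|$ to be bounded, are points the paper leaves implicit.
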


\begin{proof}
The equivalence is immediate: the forward direction is \textbf{No Inexplicable Efficiency}, and the reverse direction is soundness of $\mathfrak E$ for simulation. For the random-axiom claim, let $\mathcal B_{\mathfrak E}^{\mathcal S}$ witness Chaitin-boundedness. If $\mathsf{Expl}_{\mathfrak E}(\mathcal S,x{\in}R)$ holds, then $\mathcal B_{\mathfrak E}^{\mathcal S}{\vdash}x{\in}R$. Since $\mathcal B_{\mathfrak E}^{\mathcal S}$ is fixed, sound, and effectively axiomatized, Chaitin incompleteness implies that it proves only finitely many true assertions of the form $x{\in}R$. Thus only finitely many true random axioms are explained by $\mathfrak E$. Outside this finite set, \textbf{No Inexplicable Efficiency} gives the stated non-simulation.
\end{proof}

Thus a stronger positive simulation theorem would merely replace the current weak-base explanation scheme by a larger one. As long as its random-axiom explanations are mediated by some fixed sound effective theory, the cofinite hard tail predicted by \textbf{KH} remains unchanged; only the finite exceptional set can grow.
\section{Formalized Simulation and Certification Barriers}\label{sec:certified-simulation}

The conjectures \textbf{HRC}/\textbf{FR} concern \emph{external} simulation: the standard-model existence of polynomial-size $\mathcal S$-proofs of the bounded consistency statements for an extension. This section studies a different, base-relative question: what follows when a specified theory $\mathcal B$ proves an arithmetized assertion that such proofs exist? The principal theorem is unconditional. For any appropriately presented target theory $\mathcal U$, if $\mathcal B$ certifies that $\mathcal S$ proves $Con_{\mathcal U}(n)$ uniformly for all sufficiently large $n$, then $\mathcal B{\vdash} Con_{\mathcal S}{\rightarrow} Con_{\mathcal U}$. For $\mathcal B{=}EA$ and $\mathcal U{=}\mathcal S{+}\phi$, this is the exact conclusion of \textbf{FR}. For a stronger base it is the corresponding $\mathcal B$-relative conclusion.

Certification is always relative to the named base. A counterexample to \textbf{FR} would have to be an externally true polynomial simulation for which no witnessing fixed-bound simulation sentence is provable in $EA$. It could nevertheless be certified in a stronger sound theory; indeed, if a fixed simulation sentence is true, adjoining that sentence to a sound base produces a sound theory that certifies it. The canonical results below are therefore barriers to certification in specified theories, not proofs that the external short-proof families fail to exist.

\subsection{Formalized Simulation and Its Complexity}

\begin{definition}\label{def:formalized-simulation}
Let $\mathcal S{\supseteq}S^1_2$ and $\mathcal U$ have polynomial-time decidable axioms, and let $c,n_0$ be standard natural numbers. The \emph{fixed-bound formalized simulation statement} is
\[
\mathrm{Sim}^{c,n_0}_{\mathcal S,\mathcal U}\ :\equiv\
\forall n{\geq}n_0\;\exists \pi\,
\bigl(|\pi|\leq n^{c}\ \land\ Prf_{\mathcal S}(\pi,
\ulcorner Con_{\mathcal U}(\dot n)\urcorner)\bigr).
\]
Here $\ulcorner Con_{\mathcal U}(\dot n)\urcorner$ abbreviates the value at $n$ of the fixed elementary substitution function applied to the code of $Con_{\mathcal U}(v)$. We write
\[
\mathrm{Sim}^{n_0}_{\mathcal S,\mathcal U}\ :\equiv\
\forall n{\geq}n_0\;\exists \pi\,
Prf_{\mathcal S}(\pi,\ulcorner Con_{\mathcal U}(\dot n)\urcorner)
\]
for the version with no proof-size bound. When $\mathcal U{=}\mathcal S{+}\phi$, we abbreviate these formulas by $\mathrm{Sim}^{c,n_0}_{\mathcal S,\phi}$ and $\mathrm{Sim}^{n_0}_{\mathcal S,\phi}$. A theory $\mathcal B$ \emph{certifies} a polynomial simulation if $\mathcal B{\vdash}\mathrm{Sim}^{c,n_0}_{\mathcal S,\mathcal U}$ for some standard $c,n_0$.
\end{definition}

With binary proof coding, the condition $|\pi|{\leq}n^c$ bounds the numerical proof code by an exponential term available in $EA$. Hence, for fixed standard $c,n_0$, $\mathrm{Sim}^{c,n_0}_{\mathcal S,\mathcal U}$ is a $\Pi^0_1$ sentence after bounded normalization. The assertion $\exists c\,\exists n_0\,\mathrm{Sim}^{c,n_0}_{\mathcal S,\mathcal U}$ is $\Sigma^0_2$, while the unbounded formula $\mathrm{Sim}^{n_0}_{\mathcal S,\mathcal U}$ is $\Pi^0_2$. External polynomial simulation of $\mathcal U$ by $\mathcal S$ is equivalent to the truth in $\mathbb N$ of some fixed-bound sentence $\mathrm{Sim}^{c,n_0}_{\mathcal S,\mathcal U}$. Provability of that sentence in a sound base is an additional requirement.

\subsection{Base-Relative Certification Implies Relative Consistency}

We first isolate the syntactic facts used in the main argument. This also keeps the $EA$ and $S^1_2$ formalizations separate rather than treating either theory as an extension of the other.

\begin{lemma}\label{lem:uniform-syntactic-transformations}
Let $\mathcal S{\supseteq}S^1_2$ and $\mathcal U$ have polynomial-time decidable axioms, with the standard presentations fixed in Section~\ref{sec:preliminaries}. There are uniform proof-code functions $f$ and $g$ such that the following implications are formalizable in $EA$ and, separately, in $S^1_2$ under the customary coding in each language:
\begin{align}
Prf_{\mathcal U}(\rho,\ulcorner 0{=}1\urcorner)\land |\rho|{\leq}n
&\ \longrightarrow\
Prf_{\mathcal S}\bigl(f(\rho,n),
\ulcorner\neg Con_{\mathcal U}(\dot n)\urcorner\bigr),
\label{eq:witness-to-proof}\\
Prf_{\mathcal S}(p,a)\land
Prf_{\mathcal S}(q,\operatorname{neg}(a))
&\ \longrightarrow\
Prf_{\mathcal S}\bigl(g(p,q,a),\ulcorner 0{=}1\urcorner\bigr).
\label{eq:combine-opposite-proofs}
\end{align}
The first function is polynomial-time; the second is an elementary proof-concatenation function.
\end{lemma}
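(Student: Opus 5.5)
The plan is to construct $f$ and $g$ explicitly and then verify the two implications separately, first in $EA$ and then in $S^1_2$. Implication~\eqref{eq:combine-opposite-proofs} is the easy one, so I would treat it first. Let $g(p,q,a)$ be the concatenation of $p$ and $q$, followed by a fixed short derivation of $0{=}1$ from $a$ and $\operatorname{neg}(a)$. That derivation is propositional: from $a$ and $\neg a$, ex falso or a fixed tautology instance plus two applications of modus ponens yields $0{=}1$. The length of $g(p,q,a)$ is $|p|+|q|+O(|a|)$, and $g$ is built from elementary string operations.

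The claim that $Prf_{\mathcal S}(p,a)\land Prf_{\mathcal S}(q,\operatorname{neg}(a))\rightarrow Prf_{\mathcal S}(g(p,q,a),\ulcorner 0{=}1\urcorner)$ reduces to a short check. Every line of $g(p,q,a)$ is either a line of $p$, a line of $q$, or one of the appended lines. Each appended line is a logical axiom or follows by modus ponens from earlier lines. Since proof checking is $\Delta^b_1$-definable and line-local, this argument uses only $\Sigma^b_1$-induction (indeed only open properties of concatenation), so it goes through in $S^1_2$. It also goes through in $EA$, where proof predicates and concatenation are elementary.

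For~\eqref{eq:witness-to-proof} I would use formalized provable $\Sigma_1$-completeness, more precisely its bounded form as in the proof of Lemma~\ref{lem:extension-normal-form}. The sentence $\neg Con_{\mathcal U}(\dot n)$ is the $\Sigma^b_1$ formula $\exists \rho{\le}2^{|n|^{O(1)}}\ldots$, that is, $\exists\rho\,(|\rho|{\le}n\land Prf_{\mathcal U}(\rho,\ulcorner 0{=}1\urcorner))$. The function $f(\rho,n)$ outputs an $S^1_2$-proof, hence an $\mathcal S$-proof via the displayed inclusion, consisting of three parts:
\begin{enumerate}
\item the numeral-level computation verifying $|\rho|{\le}n$;
\item a line-by-line verification of the polynomial-time predicate $Prf_{\mathcal U}(\rho,\ulcorner 0{=}1\urcorner)$ on the numerals for $\rho$ and $n$, which uses the polynomial-time decidability of the axioms of $\mathcal U$;
\item an $\exists$-introduction step producing $\neg Con_{\mathcal U}(\dot n)$.
\end{enumerate}
This is the standard construction showing that $S^1_2$ proves true $\Sigma^b_1$ sentences with proofs polynomial in the witness size. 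The resulting $f$ is polynomial-time because the verifying computation has polynomially many steps, each rendered by a proof fragment of bounded size.

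The main obstacle is the formalized correctness of $f$: one must show, inside $S^1_2$ and inside $EA$, that whenever the hypothesis holds, $f(\rho,n)$ really is an $\mathcal S$-proof of the formula coded by the substitution $\ulcorner Con_{\mathcal U}(\dot n)\urcorner$ negated. I would argue this by induction on the steps of the polynomial-time computation of $Prf_{\mathcal U}$. Each configuration transition corresponds to a proved instance of the defining $\Sigma^b_1$ formula. In $S^1_2$ this is a $\Sigma^b_1$-$\mathrm{PIND}$ (or $\mathrm{LIND}$) argument over the computation length, following Buss's formalized $\Sigma^b_1$-completeness \cite{Buss1986bounded}. In $EA$ it is a $\Delta_0(\exp)$ induction. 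One further point needs checking: the numeral for $n$ must match the one produced by the fixed elementary substitution function. This requires choosing efficient (dyadic) numerals in both languages, or inserting a polynomial-size proof that the two numerals are equal. I would handle this by fixing dyadic numerals in the standard presentations, as permitted by Section~\ref{sec:preliminaries}.
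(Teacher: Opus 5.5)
Your proposal is correct and follows the same route as the paper. You take $g$ to be concatenation followed by a fixed propositional derivation of contradiction, and $f$ to be the witness-uniform formalized $\Sigma_1$-completeness construction, an $S^1_2$-derivation regarded as an $\mathcal S$-proof. You formalize both separately in $EA$ (via $\Delta_0(\exp)$-induction) and in $S^1_2$. You also add detail the paper's sketch omits, notably the numeral-matching point: dyadic numerals make $f$ polynomial-time in the binary size of $n$ and make its output coincide with the substitution instance $\ulcorner\neg Con_{\mathcal U}(\dot n)\urcorner$.

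One small slip: $|\rho|{\leq}n$ means $\rho{<}2^n$, not $\rho{\leq}2^{|n|^{O(1)}}$, so $\neg Con_{\mathcal U}(\dot n)$ is not literally a $\Sigma^b_1$ formula in $n$. This is harmless, because you actually apply completeness to the $\Delta^b_1$ matrix at the concrete witness $\rho$ and then $\exists$-introduce.
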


\begin{proof}
Given a witness $\rho$ to an $\mathcal U$-proof of contradiction of length at most $n$, the function $f$ writes an $S^1_2$-derivation verifying the bounded $\Sigma_1$ fact $\neg Con_{\mathcal U}(n)$ and then regards that derivation as an $\mathcal S$-proof. This is the usual witness-uniform form of formalized $\Sigma_1$-completeness. Because the proof predicate for $\mathcal U$ is polynomial-time decidable, the construction is polynomial-time and its correctness is formalizable in $S^1_2$. The same construction is elementary, so with exponential coding its totality and correctness are formalizable in $EA$. The function $g$ concatenates two $\mathcal S$-derivations and appends a fixed propositional derivation of contradiction. Its correctness is elementary in either coding. These are two separate formalizations of the same external syntactic maps; no inclusion between $EA$ and $S^1_2$ is used.
\end{proof}

\begin{theorem}\label{thm:certified-simulation-explained}
Let $\mathcal S{\supseteq}S^1_2$ and $\mathcal U$ have polynomial-time decidable axioms, and let $\mathcal B$ extend either $EA$ or $S^1_2$ in the corresponding formalization of Lemma~\ref{lem:uniform-syntactic-transformations}. If, for some standard $n_0$, $\mathcal B{\vdash} \mathrm{Sim}^{n_0}_{\mathcal S,\mathcal U}$, then $\mathcal B{\vdash} Con_{\mathcal S}{\rightarrow} Con_{\mathcal U}$.
\end{theorem}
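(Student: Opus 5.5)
We argue inside $\mathcal B$, aiming to derive $Con_{\mathcal S}\rightarrow Con_{\mathcal U}$ by contraposition: assume $\neg Con_{\mathcal U}$ and derive $\neg Con_{\mathcal S}$. The idea is that a single $\mathcal U$-proof of contradiction, say of length $\ell$, makes $\neg Con_{\mathcal U}(n)$ provable in $\mathcal S$ for every $n\geq\ell$ by formalized $\Sigma_1$-completeness. Meanwhile, the certified simulation supplies $\mathcal S$-proofs of $Con_{\mathcal U}(n)$ for all $n\geq n_0$. Taking $n$ large enough to satisfy both requirements, the two $\mathcal S$-proofs are of opposite sentences, and combining them yields an $\mathcal S$-proof of $0{=}1$.

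Concretely, the steps in $\mathcal B$ are as follows.

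\emph{Step 1.} Assume $\neg Con_{\mathcal U}$ and fix $\rho$ with $Prf_{\mathcal U}(\rho,\ulcorner 0{=}1\urcorner)$.

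\emph{Step 2.} Set $n:=\max\{|\rho|,n_0\}$. This is a term available in either base, since $n_0$ is standard and $|\rho|$ is elementary (indeed $\Sigma^b_0$-definable in $S^1_2$).

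\emph{Step 3.} Since $|\rho|\leq n$, the formalized implication (\ref{eq:witness-to-proof}) from Lemma~\ref{lem:uniform-syntactic-transformations} gives $Prf_{\mathcal S}\bigl(f(\rho,n),\ulcorner\neg Con_{\mathcal U}(\dot n)\urcorner\bigr)$.

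\emph{Step 4.} Since $n\geq n_0$, instantiating the provable sentence $\mathrm{Sim}^{n_0}_{\mathcal S,\mathcal U}$ at $n$ gives some $\pi$ with $Prf_{\mathcal S}(\pi,\ulcorner Con_{\mathcal U}(\dot n)\urcorner)$.

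\emph{Step 5.} Apply (\ref{eq:combine-opposite-proofs}) with $a:=\ulcorner Con_{\mathcal U}(\dot n)\urcorner$, $p:=\pi$ and $q:=f(\rho,n)$. This yields $Prf_{\mathcal S}(g(\pi,f(\rho,n),a),\ulcorner 0{=}1\urcorner)$, that is, $\neg Con_{\mathcal S}$.

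Discharging the assumption gives $\mathcal B\vdash\neg Con_{\mathcal U}\rightarrow\neg Con_{\mathcal S}$, which is the contrapositive of the claim. The argument takes place in first-order logic over $\mathcal B$ and uses only finitely many fixed theorems of $\mathcal B$: the two implications of Lemma~\ref{lem:uniform-syntactic-transformations} and the hypothesis $\mathrm{Sim}^{n_0}_{\mathcal S,\mathcal U}$.

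The main obstacle is the syntactic bookkeeping in Step 5. We must check that $\operatorname{neg}(\ulcorner Con_{\mathcal U}(\dot n)\urcorner)$ is provably equal to $\ulcorner\neg Con_{\mathcal U}(\dot n)\urcorner$. This equality must hold for the same substitution function used both in Definition~\ref{def:formalized-simulation} and in (\ref{eq:witness-to-proof}), and it must be verified inside $\mathcal B$ for nonstandard $n$, not just for numerals. This is an elementary identity about the fixed substitution and negation functions, provable in $EA$ and, with the customary coding, in $S^1_2$. We also need the functions $f$ and $g$ to be provably total in the chosen base, which Lemma~\ref{lem:uniform-syntactic-transformations} asserts separately for $EA$ and for $S^1_2$. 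Because this lemma keeps the two formalizations separate, we handle the two cases ($\mathcal B\supseteq EA$ and $\mathcal B\supseteq S^1_2$) in parallel and never appeal to an inclusion between $EA$ and $S^1_2$. Finally, note that no proof-size bound is needed, so the statement holds for the unbounded $\mathrm{Sim}^{n_0}$. Since $\mathrm{Sim}^{c,n_0}$ implies $\mathrm{Sim}^{n_0}$ in $\mathcal B$, the fixed-bound version follows as a corollary.
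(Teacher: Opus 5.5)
Your proposal is correct and follows the paper's proof essentially step for step. Like the paper, you fix a $\mathcal U$-proof $\rho$ of $0{=}1$ inside $\mathcal B$ and set $n:=\max(|\rho|,n_0)$; you then get an $\mathcal S$-proof of $\neg Con_{\mathcal U}(n)$ from \eqref{eq:witness-to-proof}, instantiate $\mathrm{Sim}^{n_0}_{\mathcal S,\mathcal U}$ at $n$, and combine the two proofs via \eqref{eq:combine-opposite-proofs}. Your contrapositive framing and the bookkeeping remark about $\operatorname{neg}$ and the substitution function are harmless refinements of that argument.
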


\begin{proof}
Reason inside $\mathcal B$. Assume $Con_{\mathcal S}$ and suppose, toward a contradiction, that $\rho$ is an $\mathcal U$-proof of $0{=}1$. Put $n{:=}\max(|\rho|,n_0)$. By~\eqref{eq:witness-to-proof}, $\mathcal B$ proves that $\mathcal S$ has a proof of $\neg Con_{\mathcal U}(n)$. Instantiating $\mathrm{Sim}^{n_0}_{\mathcal S,\mathcal U}$ at this $n$ gives an $\mathcal S$-proof of $Con_{\mathcal U}(n)$. Equation~\eqref{eq:combine-opposite-proofs} combines the two proof codes into an $\mathcal S$-proof of $0{=}1$, contradicting $Con_{\mathcal S}$. Therefore $Con_{\mathcal U}$ holds, and $\mathcal B$ proves $Con_{\mathcal S}{\rightarrow}Con_{\mathcal U}$.
\end{proof}

\begin{corollary}\label{cor:polynomially-certified-simulation}
Under the hypotheses of Theorem~\ref{thm:certified-simulation-explained}, if
$\mathcal B{\vdash}\mathrm{Sim}^{c,n_0}_{\mathcal S,\mathcal U}$ for some standard $c,n_0$, then $\mathcal B{\vdash} Con_{\mathcal S}{\rightarrow} Con_{\mathcal U}$. The same conclusion holds when $n^c$ is replaced by any $\mathcal B$-provably total size bound.
\end{corollary}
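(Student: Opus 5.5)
The plan is to reduce the corollary to Theorem~\ref{thm:certified-simulation-explained}. I will show that $\mathcal B$ proves the implication $\mathrm{Sim}^{c,n_0}_{\mathcal S,\mathcal U}\rightarrow\mathrm{Sim}^{n_0}_{\mathcal S,\mathcal U}$. With that in hand, $\mathcal B{\vdash}\mathrm{Sim}^{c,n_0}_{\mathcal S,\mathcal U}$ gives $\mathcal B{\vdash}\mathrm{Sim}^{n_0}_{\mathcal S,\mathcal U}$ by modus ponens inside $\mathcal B$, and the theorem then yields $\mathcal B{\vdash}Con_{\mathcal S}{\rightarrow}Con_{\mathcal U}$. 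The implication between the two simulation sentences is a pure weakening. Reason in $\mathcal B$ and fix $n{\geq}n_0$. The bounded sentence provides some $\pi$ with $|\pi|{\leq}n^c$ and $Prf_{\mathcal S}(\pi,\ulcorner Con_{\mathcal U}(\dot n)\urcorner)$. Dropping the conjunct $|\pi|{\leq}n^c$ leaves the matrix of $\mathrm{Sim}^{n_0}_{\mathcal S,\mathcal U}$ with the same witness. This uses only first-order logic: conjunction elimination under the quantifier prefix.

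The step that needs checking is that both sentences are well-formed formulas of the language of $\mathcal B$, with the same proof predicate and the same substitution term $\ulcorner Con_{\mathcal U}(\dot n)\urcorner$. If so, the weakening is literally syntactic. When $\mathcal B{\supseteq}EA$, the term $n^c$, and the bound on codes it induces under binary coding, is an $EA$-term. When $\mathcal B{\supseteq}S^1_2$, the size bound is expressed through $|\pi|$ and the polynomial $n^c$, which is available via $\#$. In either case the formula $\mathrm{Sim}^{c,n_0}_{\mathcal S,\mathcal U}$ is stated in the same formalization that Lemma~\ref{lem:uniform-syntactic-transformations} uses. So no translation between $EA$ and $S^1_2$ is required, and I expect this bookkeeping to be the only real obstacle.

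For the final clause, let $h$ be any function symbol or definable term that $\mathcal B$ proves total. Define $\mathrm{Sim}^{h,n_0}_{\mathcal S,\mathcal U}$ by replacing $|\pi|{\leq}n^c$ with $|\pi|{\leq}h(n)$. The same weakening applies: from $\exists\pi\,(|\pi|{\leq}h(n)\land Prf_{\mathcal S}(\pi,\cdot))$ one infers $\exists\pi\,Prf_{\mathcal S}(\pi,\cdot)$. Provable totality of $h$ is needed only so that the bounded formula is meaningful in $\mathcal B$, with $h(n)$ denoting a value for every $n$. It is not needed for the weakening itself, which does not use the bound at all. I would remark that the size bound plays no role in the certification conclusion. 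It matters only for the $\Pi^0_1$ classification of the sentence noted after Definition~\ref{def:formalized-simulation}.
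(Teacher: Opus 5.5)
Your proposal is correct and matches the paper's proof: drop the size conjunct inside $\mathcal B$ to pass from $\mathrm{Sim}^{c,n_0}_{\mathcal S,\mathcal U}$ to $\mathrm{Sim}^{n_0}_{\mathcal S,\mathcal U}$, apply Theorem~\ref{thm:certified-simulation-explained}, and handle any $\mathcal B$-provably total bound by the same weakening. Your remark that provable totality plays no role in the weakening itself is also accurate.
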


\begin{proof}
Every bounded simulation statement implies the corresponding unbounded statement inside $\mathcal B$, so apply Theorem~\ref{thm:certified-simulation-explained}.
\end{proof}

Thus the exact $EA$-conclusion of \textbf{FR} is unconditional for simulations certified in $EA$. More generally, certification in $\mathcal B$ yields a $\mathcal B$-proof of the corresponding relative-consistency implication. This does not give an absolute certification theorem: a simulation may fail to be certified in $EA$ while being certified in a stronger theory.

The feasible-relative-consistency version uses one additional finite-verification fact, which we state explicitly.

\begin{lemma}\label{lem:finite-consistency-verification}
Let $\mathcal S{\supseteq}S^1_2$ have polynomial-time decidable axioms and the standard presentation fixed above. There is an elementary proof-code function $h$ such that $EA{\vdash}\forall m\,\bigl(Con_{\mathcal S}(m){\rightarrow}
Prf_{\mathcal S}(h(m),\ulcorner Con_{\mathcal S}(\dot m)\urcorner)\bigr)$. In particular, $EA{\vdash}\forall m\,\bigl(Con_{\mathcal S}(m){\rightarrow}
Prov_{\mathcal S}(\ulcorner Con_{\mathcal S}(\dot m)\urcorner)\bigr)$.
No polynomial bound on the length of $h(m)$ is asserted.
\end{lemma}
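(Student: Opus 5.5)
The plan is to let $h(m)$ be a proof by exhaustive case analysis. Inside $\mathcal S$, the statement $Con_{\mathcal S}(\dot m)$ is the bounded sentence $\forall \pi{<}2^{m+1}\,\neg Prf_{\mathcal S}(\pi,\ulcorner 0{=}1\urcorner)$, or the normalized equivalent in which the bound on codes of length at most $m$ is an explicit numeral. Given $m$, the function $h$ lists every string $\pi$ of length at most $m$. For each such $\pi$ it writes an $S^1_2$-derivation of the closed quantifier-free instance $\neg Prf_{\mathcal S}(\bar\pi,\ulcorner 0{=}1\urcorner)$, obtained by evaluating the polynomial-time proof predicate on the numeral $\bar\pi$. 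It then closes with a derivation of the bounded universal from the $2^{m+1}$ instances. We regard the whole derivation as an $\mathcal S$-proof via the displayed inclusion $S^1_2{\subseteq}\mathcal S$. The length of $h(m)$ is exponential in $m$, which is harmless because no polynomial bound is claimed. Clearly $h$ is elementary, being a bounded iteration of a polynomial-time routine over $2^{m+1}$ cases.

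Next I will formalize the correctness of $h$ in $EA$, in the same style as Lemma~\ref{lem:uniform-syntactic-transformations}. The per-instance step uses the witness-uniform form of formalized $\Sigma_1$/$\Delta_0$-completeness from that lemma (H\'ajek--Pudl\'ak~\cite{HajekPudlak}). For each $\pi$, $EA$ proves that if $\neg Prf_{\mathcal S}(\pi,\ulcorner 0{=}1\urcorner)$ holds, then the computation trace of the proof-checker on $\pi$ yields an $\mathcal S$-derivation of the true closed instance. The assumption $Con_{\mathcal S}(m)$ supplies exactly this hypothesis for every $\pi$ below the bound.

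The gluing step turns the $2^{m+1}$ instance derivations into a derivation of the bounded universal. I would handle this by an $EA$-formalized induction on $j$, showing that $\mathcal S$ proves $\forall \pi{<}\bar j\,\neg Prf_{\mathcal S}(\pi,\ulcorner 0{=}1\urcorner)$ via an explicitly constructed derivation. The base case is trivial. The step uses the fixed $S^1_2$-provable scheme $\forall \pi{<}\bar j\,\theta(\pi)\land\theta(\bar j)\rightarrow\forall\pi{<}\overline{j{+}1}\,\theta(\pi)$, which is a uniform derivation depending only on the numeral $\bar j$. Both the induction formula and the statement ``$h(m)$ is an $\mathcal S$-proof'' are $\Delta_0(\exp)$, because $h$ is elementary and its output size is bounded by an $EA$-term. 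Hence $EA$ has enough induction, and from this we get $EA{\vdash}\forall m(Con_{\mathcal S}(m){\rightarrow}Prf_{\mathcal S}(h(m),\ulcorner Con_{\mathcal S}(\dot m)\urcorner))$. The ``in particular'' clause then follows by existential generalization on the proof code.

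The main obstacle is this bookkeeping. Two points need checking. First, the codes of all intermediate objects must be bounded by an elementary function of $m$, so that the induction really is over a $\Delta_0(\exp)$ formula. Second, the numeral forms $\bar j$, $\overline{j{+}1}$ and the substitution $\ulcorner Con_{\mathcal S}(\dot m)\urcorner$ must match the paper's fixed elementary substitution function. I would address both by choosing binary numerals, so that the sizes of the terms are logarithmic, and by appealing to the standard fact that $EA$ proves totality and correctness of any explicitly defined elementary syntactic map.
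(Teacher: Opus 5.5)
Your proposal is correct and follows essentially the same route as the paper. Both build an exponential-size exhaustive case-analysis proof: each candidate code of length at most $m$ is refuted through its failed proof check, the instances are glued into the bounded universal $Con_{\mathcal S}(\dot m)$, and totality and conditional correctness are verified in $EA$ by $\Delta_0(\exp)$ induction. Your explicit induction on $j$ for the gluing step, and your use of the checker's computation trace, fill in details the paper leaves implicit; the trace is what is actually needed, since $Prf_{\mathcal S}$ is only $\Delta^b_1$ in the language of $S^1_2$ and its instances are not literally quantifier-free.
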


\begin{proof}
On input $m$, the construction exhaustively lists the binary strings of length at most $m$, evaluates the polynomial-time predicate saying that a string is an $\mathcal S$-proof of contradiction, and writes an $\mathcal S$-derivation of each failed proof check. Conditional on $Con_{\mathcal S}(m)$, every check fails, and the derivations are combined into a proof of the bounded universal statement $Con_{\mathcal S}(m)$. There are at most exponentially many candidates and each verification is polynomial-time, so the resulting proof-code function is elementary. Its totality and the displayed conditional correctness are provable in $EA$ by $\Delta_0(\exp)$ induction. The proof may be exponentially or otherwise elementarily large, which is why the lemma supplies certification but no feasible proof-length bound.
\end{proof}

\begin{theorem}\label{thm:certified-frc-explained}
Let $\mathcal S{\supseteq}S^1_2$ and $\mathcal U$ have polynomial-time decidable axioms, let $\mathcal B{\supseteq}EA$, let $p$ be a standard polynomial, and let $n_0$ be standard. If $\mathcal B{\vdash}\forall n{\geq}n_0\,
Prov_{\mathcal S}\bigl(\ulcorner
Con_{\mathcal S}(p(\dot n)){\rightarrow} Con_{\mathcal U}(\dot n)
\urcorner\bigr)$, then
$\mathcal B{\vdash} Con_{\mathcal S}{\rightarrow} Con_{\mathcal U}$. In particular, the conclusion holds if $\mathcal B$ proves the existence of a polynomially bounded family of $\mathcal S$-proofs of the displayed implications.
\end{theorem}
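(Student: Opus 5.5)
The plan is to mimic the proof of Theorem~\ref{thm:certified-simulation-explained}, reasoning inside $\mathcal B$. The one extra ingredient is Lemma~\ref{lem:finite-consistency-verification}, which supplies the antecedent $Con_{\mathcal S}(p(n))$ as an $\mathcal S$-provable sentence whenever $Con_{\mathcal S}$ holds. Work in $\mathcal B$ and assume $Con_{\mathcal S}$. Suppose toward a contradiction that $\rho$ is a $\mathcal U$-proof of $0{=}1$, and put $n{:=}\max(|\rho|,n_0)$.

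The argument then runs in four steps.
\begin{enumerate}
\item By~\eqref{eq:witness-to-proof} of Lemma~\ref{lem:uniform-syntactic-transformations}, formalized in $EA\subseteq\mathcal B$, we obtain an $\mathcal S$-proof $f(\rho,n)$ of $\neg Con_{\mathcal U}(n)$.
\item The hypothesis, instantiated at this $n$, gives an $\mathcal S$-proof of $Con_{\mathcal S}(p(n)){\rightarrow}Con_{\mathcal U}(n)$.
\item Since $Con_{\mathcal S}$ implies $Con_{\mathcal S}(p(n))$ (trivially, as a bounded instance, provably in $EA$), Lemma~\ref{lem:finite-consistency-verification} with $m{:=}p(n)$ yields an $\mathcal S$-proof $h(p(n))$ of $Con_{\mathcal S}(p(n))$. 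Here $p(n)$ is an elementary term, so substituting it into $h$ and into the elementary substitution function for $\ulcorner Con_{\mathcal S}(\dot m)\urcorner$ is unproblematic in $EA$; one only needs the syntactic identity $\ulcorner Con_{\mathcal S}(\dot m)\urcorner$ at $m{=}p(n)$ to coincide with the antecedent code $\ulcorner Con_{\mathcal S}(p(\dot n))\urcorner$. This holds up to an $EA$-verifiable evaluation of the numeral term, which costs a short $S^1_2$-derivation.
\item Formalized modus ponens combines the proofs from steps~2 and~3 into an $\mathcal S$-proof of $Con_{\mathcal U}(n)$. Equation~\eqref{eq:combine-opposite-proofs}, with $a{=}\ulcorner Con_{\mathcal U}(\dot n)\urcorner$, then combines it with the proof from step~1 into an $\mathcal S$-proof of $0{=}1$.
\end{enumerate}
Step~4 contradicts $Con_{\mathcal S}$. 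Hence $\mathcal B$ proves $Con_{\mathcal S}{\rightarrow}Con_{\mathcal U}$.

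For the ``in particular'' clause, note that provable existence of a polynomially bounded family of $\mathcal S$-proofs of the displayed implications entails, inside $\mathcal B$, the unbounded provability statement used in step~2. This is just like Corollary~\ref{cor:polynomially-certified-simulation}: drop the size conjunct.

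The main obstacle I expect is step~3, and specifically the bookkeeping of codes rather than the logic. The paper insists that $EA$ and $S^1_2$ formalizations are kept separate, and Lemma~\ref{lem:finite-consistency-verification} is stated only over $EA$, which is why the theorem requires $\mathcal B{\supseteq}EA$. I would check two things. First, that the formalized modus ponens and the identification of the term $p(\dot n)$ with the numeral of $p(n)$ are elementary proof-code operations whose correctness $EA$ proves by $\Delta_0(\exp)$ induction. Second, that nowhere is a feasible length bound on $h(p(n))$ required. The second is fine, since $Prov_{\mathcal S}$ is unbounded and the contradiction is derived only against the unbounded $Con_{\mathcal S}$.
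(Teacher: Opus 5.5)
Your proposal is correct and is essentially the paper's proof: both work inside $\mathcal B$, use the witness-to-proof map to get an $\mathcal S$-proof of $\neg Con_{\mathcal U}(n)$, instantiate the hypothesis at $n=\max(|\rho|,n_0)$, and use Lemma~\ref{lem:finite-consistency-verification} to obtain an $\mathcal S$-proof of $Con_{\mathcal S}(p(n))$ from $Con_{\mathcal S}$. The only difference is the order of the final step: the paper first derives an $\mathcal S$-proof of $\neg Con_{\mathcal S}(p(n))$ by modus tollens and clashes it with $h(p(n))$, whereas you apply modus ponens and clash at $Con_{\mathcal U}(n)$ via the proof-combination map; your remark about identifying the term $p(\dot n)$ with the numeral of $p(n)$ covers a detail the paper leaves implicit.
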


\begin{proof}
Reason inside $\mathcal B$. Assume $Con_{\mathcal S}$ and suppose that $\rho$ is an $\mathcal U$-proof of contradiction. Put $n{:=}\max(|\rho|,n_0)$. Lemma~\ref{lem:uniform-syntactic-transformations} gives an $\mathcal S$-proof of $\neg Con_{\mathcal U}(n)$. The hypothesis gives an $\mathcal S$-proof of
$Con_{\mathcal S}(p(n)){\rightarrow}Con_{\mathcal U}(n)$, so formalized propositional reasoning gives an $\mathcal S$-proof of $\neg Con_{\mathcal S}(p(n))$. On the other hand, $Con_{\mathcal S}$ implies $Con_{\mathcal S}(p(n))$, and Lemma~\ref{lem:finite-consistency-verification} then gives an $\mathcal S$-proof of $Con_{\mathcal S}(p(n))$. Combining the last two proofs produces an $\mathcal S$-proof of contradiction, contrary to $Con_{\mathcal S}$.
\end{proof}

Theorems~\ref{thm:certified-simulation-explained} and~\ref{thm:certified-frc-explained} show that either lower-row assertion in Figure~\ref{fig:motivating-hrc-fr}, once certified in a base $\mathcal B$, yields the upper-right relative-consistency assertion inside $\mathcal B$. For $\mathcal B{=}EA$, any counterexample to \textbf{FR} must therefore evade $EA$-certification of every fixed polynomial bound that actually witnesses the external simulation. The converse does not follow: failure of $EA$ to prove a simulation sentence is not by itself a counterexample to \textbf{FR}, because $EA$ might still prove the relative-consistency implication.

\subsection{Certification Barriers at the Canonical Extensions}

Specializing the certifying theory yields unconditional analogs of the paper's three canonical hardness conjectures. These results delimit what the named bases can certify. They do not assert that the underlying polynomial proof families fail to exist externally.

\begin{theorem}\label{thm:no-self-certification}
Let $\mathcal S{\supseteq}S^1_2$ be sound with polynomial-time decidable axioms. Then, for all standard $c,n_0$:
\begin{enumerate}
\item $\mathcal S{\not\vdash}\mathrm{Sim}^{c,n_0}_{\mathcal S,Con_{\mathcal S}}$;
\item if $\mathcal S$ is moreover finitely axiomatized and sequential, then $EA{\not\vdash}\mathrm{Sim}^{c,n_0}_{\mathcal S,Con_{\mathcal S}}$, and indeed
$EA{+}Con_{\mathcal S}{\not\vdash}\mathrm{Sim}^{c,n_0}_{\mathcal S,Con_{\mathcal S}}$.
\end{enumerate}
The same statements hold for $\mathrm{Sim}^{n_0}_{\mathcal S,Con_{\mathcal S}}$.
\end{theorem}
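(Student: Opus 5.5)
The plan is to derive each part from Corollary~\ref{cor:polynomially-certified-simulation} (and, for the unbounded version, Theorem~\ref{thm:certified-simulation-explained}) with $\mathcal U{=}\mathcal S{+}Con_{\mathcal S}$. Each alleged certification would then yield a provable relative-consistency implication $Con_{\mathcal S}{\rightarrow}Con_{\mathcal S{+}Con_{\mathcal S}}$ in the certifying theory, and we refute that implication by G\"odel's second incompleteness theorem or by Pudl\'ak's non-interpretability theorem.

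For part~(1), take $\mathcal B{=}\mathcal S$, which extends $S^1_2$, so the $S^1_2$-formalization of Lemma~\ref{lem:uniform-syntactic-transformations} applies. Suppose $\mathcal S{\vdash}\mathrm{Sim}^{c,n_0}_{\mathcal S,Con_{\mathcal S}}$ (or the unbounded version). Then $\mathcal S{\vdash}Con_{\mathcal S}{\rightarrow}Con_{\mathcal S{+}Con_{\mathcal S}}$. Formalized deduction gives $\mathcal S{\vdash}Con_{\mathcal S{+}Con_{\mathcal S}}{\rightarrow}Con_{\mathcal S}$: a proof of $0{=}1$ in $\mathcal S$ is also one in $\mathcal S{+}Con_{\mathcal S}$. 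The key observation is that $Con_{\mathcal S{+}Con_{\mathcal S}}$ is equivalent over $S^1_2$ to $\neg Prov_{\mathcal S}(\ulcorner\neg Con_{\mathcal S}\urcorner)$. Hence $\mathcal S{+}Con_{\mathcal S}{\vdash}\neg Prov_{\mathcal S}(\ulcorner\neg Con_{\mathcal S}\urcorner)$, i.e.\ $\mathcal S{+}Con_{\mathcal S}{\vdash}Con_{\mathcal S{+}Con_{\mathcal S}}$. The theory $\mathcal T{=}\mathcal S{+}Con_{\mathcal S}$ is consistent because $\mathcal S$ is sound, and it extends $S^1_2$ with polynomial-time axioms, so G\"odel's second theorem for $\mathcal T$ gives the contradiction. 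The delicate point is the equivalence of $Con_{\mathcal S{+}\psi}$ with $\neg Prov_{\mathcal S}(\ulcorner\neg\psi\urcorner)$ inside $S^1_2$, which needs the formalized deduction theorem. I will cite it as standard (H\'ajek--Pudl\'ak).

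For part~(2), take $\mathcal B{=}EA{+}Con_{\mathcal S}$; this also covers $EA$, since $EA{\vdash}\sigma$ implies $EA{+}Con_{\mathcal S}{\vdash}\sigma$. Certification gives $EA{+}Con_{\mathcal S}{\vdash}Con_{\mathcal S}{\rightarrow}Con_{\mathcal S{+}Con_{\mathcal S}}$, hence $EA{\vdash}Con_{\mathcal S}{\rightarrow}Con_{\mathcal S{+}Con_{\mathcal S}}$. Because $\mathcal S$ is finitely axiomatized and sequential, the Friedman--Visser criterion (Visser's Interpretation Existence Lemma) then makes $\mathcal S$ interpret $\mathcal S{+}Con_{\mathcal S}$. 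This contradicts Pudl\'ak's theorem~\cite{Pudlak1985Cuts}, exactly as in part~(2) of Theorem~\ref{thm:hrc-characterization}. Alternatively, one could run a G\"odel-style argument directly in $EA{+}Con_{\mathcal S}$, but the interpretability route reuses the paper's machinery verbatim.

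The unbounded statements $\mathrm{Sim}^{n_0}$ follow the same way, invoking Theorem~\ref{thm:certified-simulation-explained} directly instead of the corollary. The main obstacle I expect is in part~(1): checking that the G\"odel contradiction is legitimately obtained inside $\mathcal S$ itself. The worry is that $\mathcal S$ need not contain $EA$, so everything must go through the $S^1_2$-formalizations, including the deduction-theorem equivalence above. I would handle this by noting that the second incompleteness theorem holds for every consistent theory containing $S^1_2$ with polynomial-time axioms.
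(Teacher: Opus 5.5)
Your proposal is correct and follows essentially the same route as the paper. You apply Corollary~\ref{cor:polynomially-certified-simulation} (or Theorem~\ref{thm:certified-simulation-explained} in the unbounded case) with $\mathcal U{=}\mathcal S{+}Con_{\mathcal S}$, then refute the resulting relative-consistency implication by G\"odel's second theorem for $\mathcal S{+}Con_{\mathcal S}$ in part~(1), and by Friedman--Visser plus Pudl\'ak's non-interpretability theorem in part~(2), with the deduction theorem reducing the $EA{+}Con_{\mathcal S}$ case to the $EA$ case. The only wrinkle is that the deduction-theorem equivalence you flag as the delicate point in part~(1) is unnecessary: $\mathcal S{+}Con_{\mathcal S}{\vdash}Con_{\mathcal S{+}Con_{\mathcal S}}$ already follows from $\mathcal S{\vdash}Con_{\mathcal S}{\rightarrow}Con_{\mathcal S{+}Con_{\mathcal S}}$ by modus ponens alone.
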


\begin{proof}
For part~(1), suppose $\mathcal S{\vdash}\mathrm{Sim}^{c,n_0}_{\mathcal S,Con_{\mathcal S}}$. Corollary~\ref{cor:polynomially-certified-simulation}, with $\mathcal B{:=}\mathcal S$, gives $\mathcal S{\vdash}Con_{\mathcal S}{\rightarrow}Con_{\mathcal S{+}Con_{\mathcal S}}$. Hence $\mathcal S{+}Con_{\mathcal S}{\vdash}Con_{\mathcal S{+}Con_{\mathcal S}}$. Since $\mathcal S$ is sound, $\mathcal S{+}Con_{\mathcal S}$ is a consistent computably axiomatized extension of $S^1_2$, contradicting G\"odel's second incompleteness theorem.

For part~(2), suppose $EA{\vdash}\mathrm{Sim}^{c,n_0}_{\mathcal S,Con_{\mathcal S}}$. The corollary gives $EA{\vdash}Con_{\mathcal S}{\rightarrow}$ $Con_{\mathcal S{+}Con_{\mathcal S}}$. By the Friedman--Visser criterion for finitely axiomatized sequential theories~\cite{Visser2017}, $\mathcal S$ then interprets $\mathcal S{+}Con_{\mathcal S}$, contradicting Pudl\'ak's theorem that no consistent sequential theory interprets the result of adjoining its own consistency statement~\cite{Pudlak1985Cuts}.

If instead $EA{+}Con_{\mathcal S}$ proved the simulation sentence, the corollary would give $EA{+}Con_{\mathcal S}{\vdash}
Con_{\mathcal S}{\rightarrow}Con_{\mathcal S{+}Con_{\mathcal S}}$. The deduction theorem reduces this to the same forbidden $EA$ implication. The unbounded cases use Theorem~\ref{thm:certified-simulation-explained} directly.
\end{proof}

\begin{theorem}\label{thm:no-bb-certification}
Let $\mathcal S{\supseteq}S^1_2$ be sound with polynomial-time decidable axioms, and let
$k{\geq}k^{\mathrm{cs}}_{\mathcal S{+}Con_{\mathcal S}}$. Then, for all standard $c,n_0$:
\begin{enumerate}
\item $\mathcal S{\not\vdash}\mathrm{Sim}^{c,n_0}_{\mathcal S,\phi_{BB}(k)}$;
\item if $\mathcal S$ is finitely axiomatized and sequential, then
$EA{\not\vdash}\mathrm{Sim}^{c,n_0}_{\mathcal S,\phi_{BB}(k)}$.
\end{enumerate}
The same statements hold for $\mathrm{Sim}^{n_0}_{\mathcal S,\phi_{BB}(k)}$.
\end{theorem}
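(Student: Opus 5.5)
The plan follows the proof of Theorem~\ref{thm:no-self-certification}, with the Busy Beaver axiom reduced to consistency via Lemma~\ref{lem:busy-beaver-proves-consistency}. Since $\mathcal S$ is sound, $\mathcal T{:=}\mathcal S{+}Con_{\mathcal S}$ is a true computably axiomatized theory extending $S^1_2$. Because $k{\geq}k^{\mathrm{cs}}_{\mathcal T}$, the lemma gives $S^1_2{+}\phi_{BB}(k){\vdash}Con_{\mathcal S{+}Con_{\mathcal S}}$, and hence $\mathcal S{+}\phi_{BB}(k){\vdash}Con_{\mathcal S{+}Con_{\mathcal S}}$. Since $\mathcal S{+}Con_{\mathcal S}{\supseteq}\mathcal S$, we also have $\mathcal S{+}\phi_{BB}(k){\vdash}Con_{\mathcal S}$. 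The key intermediate step is a formalized implication
\[
Con_{\mathcal S{+}\phi_{BB}(k)}{\rightarrow}Con_{\mathcal S{+}Con_{\mathcal S}},
\]
provable in $EA$ and in $\mathcal S$. To get it, take a fixed proof $\pi_k$ of $Con_{\mathcal S}$ in $\mathcal S{+}\phi_{BB}(k)$. Replacing each use of the axiom $Con_{\mathcal S}$ by $\pi_k$ is an elementary proof transformation, and $EA$ (and likewise $S^1_2{\subseteq}\mathcal S$) proves that it sends $(\mathcal S{+}Con_{\mathcal S})$-proofs of $0{=}1$ to $(\mathcal S{+}\phi_{BB}(k))$-proofs of $0{=}1$.

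For part~(1), suppose $\mathcal S{\vdash}\mathrm{Sim}^{c,n_0}_{\mathcal S,\phi_{BB}(k)}$. Corollary~\ref{cor:polynomially-certified-simulation} with $\mathcal B{:=}\mathcal S$ and $\mathcal U{:=}\mathcal S{+}\phi_{BB}(k)$ gives $\mathcal S{\vdash}Con_{\mathcal S}{\rightarrow}Con_{\mathcal S{+}\phi_{BB}(k)}$. Composing with the displayed implication yields $\mathcal S{\vdash}Con_{\mathcal S}{\rightarrow}Con_{\mathcal S{+}Con_{\mathcal S}}$, so $\mathcal S{+}Con_{\mathcal S}$ proves its own consistency. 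This contradicts G\"odel's second incompleteness theorem, because $\mathcal S{+}Con_{\mathcal S}$ is consistent by soundness. For part~(2), take $\mathcal B{:=}EA$ in the same way to obtain $EA{\vdash}Con_{\mathcal S}{\rightarrow}Con_{\mathcal S{+}Con_{\mathcal S}}$. The Friedman--Visser criterion~\cite{Visser2017} then says that $\mathcal S$ interprets $\mathcal S{+}Con_{\mathcal S}$, which contradicts Pudl\'ak's theorem~\cite{Pudlak1985Cuts}, exactly as in Theorem~\ref{thm:no-self-certification}. The unbounded versions $\mathrm{Sim}^{n_0}$ go the same way, using Theorem~\ref{thm:certified-simulation-explained} directly in place of the corollary.

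The main obstacle is making sure the composition step is provable inside the relevant base, not just true externally. This needs the proof-replacement map to be formalized in $EA$, and separately in $\mathcal S$ via $S^1_2$, with $\pi_k$ a fixed standard object. That is routine because the map is elementary and its correctness is a $\Delta_0(\exp)$ induction over proof lines. A secondary check is the threshold: the hypothesis $k{\geq}k^{\mathrm{cs}}_{\mathcal S{+}Con_{\mathcal S}}$ is used to get $Con_{\mathcal S{+}Con_{\mathcal S}}$ itself, not just $Con_{\mathcal S}$. Strictly, part~(1) only needs $\mathcal S{+}\phi_{BB}(k){\vdash}Con_{\mathcal S}$. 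The stronger threshold is harmless, and in part~(1) it also rules out the degenerate case where $\mathcal S$ proves $\phi_{BB}(k)$.
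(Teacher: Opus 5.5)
Your proof is correct. Its skeleton matches the paper's. Corollary~\ref{cor:polynomially-certified-simulation} (or Theorem~\ref{thm:certified-simulation-explained}) turns certification into $\mathcal B{\vdash}Con_{\mathcal S}{\rightarrow}Con_{\mathcal S{+}\phi_{BB}(k)}$. A bridge then yields $\mathcal B{\vdash}Con_{\mathcal S}{\rightarrow}Con_{\mathcal S{+}Con_{\mathcal S}}$, and the contradictions of Theorem~\ref{thm:no-self-certification} finish the argument.

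The difference lies in the bridge $Con_{\mathcal S{+}\phi_{BB}(k)}{\rightarrow}Con_{\mathcal S{+}Con_{\mathcal S}}$. The paper factors it through the $\Pi_1$ upper-bound component $\phi^{\leq}_{BB}(k)$. Formalized $\Sigma_1$-completeness gives $Con_{\mathcal S{+}\phi_{BB}(k)}{\rightarrow}\phi^{\leq}_{BB}(k)$, which is the Busy Beaver counterpart of Lemma~\ref{lem:consistency-implies-randomness}. A $k$-state contradiction search for $\mathcal S{+}Con_{\mathcal S}$ then gives $\phi^{\leq}_{BB}(k){\rightarrow}Con_{\mathcal S{+}Con_{\mathcal S}}$. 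That second step is what forces the threshold $k^{\mathrm{cs}}_{\mathcal S{+}Con_{\mathcal S}}$. The paper's route also keeps the Busy Beaver barrier parallel to the random-axiom barrier of Theorem~\ref{thm:certified-kh}, where the bridge likewise runs through the $\Pi_1$ content of the new axiom.

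You instead use Lemma~\ref{lem:busy-beaver-proves-consistency} only externally, to fix a standard proof $\pi_k$, and internalize nothing beyond elementary axiom elimination. That bridge needs only $\mathcal S{+}\phi_{BB}(k){\vdash}Con_{\mathcal S}$, in part~(2) as much as in part~(1). So your argument actually proves the theorem for every $k{\geq}k^{\mathrm{cs}}_{\mathcal S}$. More generally, it shows that neither $\mathcal S$ nor (for finitely axiomatized sequential $\mathcal S$) $EA$ certifies simulation of $\mathcal S{+}\phi$ whenever $\mathcal S{+}\phi{\vdash}Con_{\mathcal S}$. This is a certification analogue of Theorem~\ref{thm:higher-absolute-consistency-hardness}.

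Your closing remark therefore undersells your own route by confining the weaker threshold to part~(1). The aside about excluding $\mathcal S{\vdash}\phi_{BB}(k)$ is also unneeded, since the contradiction is reached without any such case distinction.
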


\begin{proof}
Let $\phi^{\leq}_{BB}(k)$ denote the $\Pi_1$ component of $\phi_{BB}(k)$ asserting that no $k$-state machine halts after step $t_k$, and let $E$ be a $k$-state realization of the contradiction search for $\mathcal S{+}Con_{\mathcal S}$. Both of the following implications are provable in $EA$ and, separately, in every theory extending $S^1_2$:
\begin{align}
Con_{\mathcal S{+}\phi_{BB}(k)}
&\longrightarrow \phi^{\leq}_{BB}(k),
\tag{$*$}\\
\phi^{\leq}_{BB}(k)
&\longrightarrow Con_{\mathcal S{+}Con_{\mathcal S}}.
\tag{$**$}
\end{align}
For~($*$), if the upper-bound component were false, a witness to a $k$-state machine halting after $t_k$ would, by formalized $\Sigma_1$-completeness, yield a proof of its negation in $\mathcal S{+}\phi_{BB}(k)$ and hence a contradiction with the axiom $\phi_{BB}(k)$. For~($**$), a fixed finite computation verifies that $E$ does not halt within $t_k$ steps; the upper-bound component then says that $E$ never halts, which is $Con_{\mathcal S{+}Con_{\mathcal S}}$ by the definition of $E$.

Suppose $\mathcal B$ proves the displayed simulation sentence, with $\mathcal B{:=}\mathcal S$ in part~(1) and $\mathcal B{:=}EA$ in part~(2). Corollary~\ref{cor:polynomially-certified-simulation} gives
$\mathcal B{\vdash}Con_{\mathcal S}{\rightarrow}Con_{\mathcal S{+}\phi_{BB}(k)}$.
Chaining this implication with~($*$) and~($**$) yields $\mathcal B{\vdash} Con_{\mathcal S}{\rightarrow}
Con_{\mathcal S{+}Con_{\mathcal S}}$. The contradiction is then exactly the one used in the corresponding part of Theorem~\ref{thm:no-self-certification}. The unbounded cases use Theorem~\ref{thm:certified-simulation-explained}.
\end{proof}

\begin{theorem}\label{thm:certified-kh}
Let $\mathcal S{\supseteq}S^1_2$ be sound with polynomial-time decidable axioms, and let $\mathcal B{\supseteq}EA$ be any fixed sound computably axiomatized theory. Then the set $\bigl\{x{\in} R{:}\ \mathcal B{\vdash}
\mathrm{Sim}^{c,n_0}_{\mathcal S,(x\in R)}
\textup{ for some standard }c,n_0\bigr\}$ is finite. Thus, for all but finitely many true random axioms $x{\in}R$, the fixed theory $\mathcal B$ does not certify that $\mathcal S$ polynomially simulates $\mathcal S{+}(x{\in}R)$.
\end{theorem}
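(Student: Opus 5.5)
The plan is to reduce certification of a simulation to provability of the random axiom in a single fixed sound theory, and then invoke Chaitin's Incompleteness Theorem. Fix $\mathcal B$ as in the statement and a true random axiom $x{\in}R$ such that $\mathcal B{\vdash}\mathrm{Sim}^{c,n_0}_{\mathcal S,(x\in R)}$ for some standard $c,n_0$. Since $\mathcal B{\supseteq}EA$, Corollary~\ref{cor:polynomially-certified-simulation} applies with $\mathcal U{:=}\mathcal S{+}(x{\in}R)$. This target has polynomial-time decidable axioms, because it adds a single fixed sentence to $\mathcal S$. The corollary then gives
\[
\mathcal B{\vdash}Con_{\mathcal S}{\rightarrow}Con_{\mathcal S{+}(x\in R)}.
\]

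Next apply the first half of Lemma~\ref{lem:consistency-implies-randomness}, namely $EA{\vdash}Con_{\mathcal S{+}(x\in R)}{\rightarrow}x{\in}R$. This is the part proved uniformly by formalized $\Sigma_1$-completeness, with no case split on the truth of $x{\in}R$. Since $\mathcal B{\supseteq}EA$, chaining the two implications yields $\mathcal B{\vdash}Con_{\mathcal S}{\rightarrow}x{\in}R$, and hence
\[
\mathcal B{+}Con_{\mathcal S}{\vdash}x{\in}R .
\]
So every certified random axiom is a theorem of the single theory $\mathcal B':=\mathcal B{+}Con_{\mathcal S}$, which does not depend on $x$, $c$, or $n_0$.

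Finally, check that $\mathcal B'$ meets the hypotheses of Chaitin's theorem. It is computably axiomatized, since $\mathcal B$ is and only one sentence is added. It is sound, since $\mathcal B$ is sound and $Con_{\mathcal S}$ is true because $\mathcal S$ is sound. Chaitin's Incompleteness Theorem then says $\mathcal B'$ proves $x{\in}R$ for only finitely many $x$. The set in the statement is contained in this finite set, which gives the finiteness claim and the ``all but finitely many'' reformulation.

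The step I expect to need the most care is applying the Chaitin bound to $\mathcal B'$ rather than $\mathcal B$. We only obtain the implication $Con_{\mathcal S}{\rightarrow}x{\in}R$ inside $\mathcal B$, so the fixed theory must absorb $Con_{\mathcal S}$, and one must confirm that this absorption preserves soundness. It does, because $\mathcal S$ is sound. A second point to verify is that the corollary's hypotheses (the $EA$ formalization of Lemma~\ref{lem:uniform-syntactic-transformations} for $\mathcal U{=}\mathcal S{+}(x\in R)$) hold uniformly. They only need to hold for each $x$ separately, and the bound comes from the single theory $\mathcal B'$, so uniformity in $x$ is not required.
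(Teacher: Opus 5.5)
Your proposal is correct and follows the paper's proof. Corollary~\ref{cor:polynomially-certified-simulation} gives $\mathcal B{\vdash}Con_{\mathcal S}{\rightarrow}Con_{\mathcal S{+}(x\in R)}$, Lemma~\ref{lem:consistency-implies-randomness} turns this into $\mathcal B{+}Con_{\mathcal S}{\vdash}x{\in}R$, and Chaitin's theorem for the fixed sound theory $\mathcal B{+}Con_{\mathcal S}$ gives finiteness, uniformly in $c,n_0$. Your remark that the lemma is needed only for each fixed $x$ is also right, because the finiteness comes from that single theory.
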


\begin{proof}
If $\mathcal B{\vdash}\mathrm{Sim}^{c,n_0}_{\mathcal S,(x\in R)}$, then Corollary~\ref{cor:polynomially-certified-simulation} gives $\mathcal B{\vdash}Con_{\mathcal S}{\rightarrow}Con_{\mathcal S{+}(x\in R)}$. By Lemma~\ref{lem:consistency-implies-randomness}, whose proof is uniform in $x$ over $EA$, $\mathcal B{\vdash}Con_{\mathcal S}{\rightarrow}x{\in}R$, and hence $\mathcal B{+}Con_{\mathcal S}{\vdash}x{\in}R$. The theory $\mathcal B{+}Con_{\mathcal S}$ is fixed, sound, and computably axiomatized, so Chaitin's incompleteness theorem~\cite{ChaitinIncompleteness} implies that it proves only finitely many true assertions of the form $x{\in}R$. The displayed set is contained in that finite set, uniformly over $c$ and $n_0$.
\end{proof}

Theorems~\ref{thm:no-self-certification}--\ref{thm:certified-kh} establish certification analogs of Pudl\'ak's Conjecture, \textbf{Busy Beaver Hardness}, and \textbf{KH}. They do not imply any external lower bound of the form $\mathcal S{\centernot{\sststile{}{n^c}}}Con_{\mathcal S{+}\phi}(n)$. If \textbf{FR} fails, then for every standard pair $c,n_0$ that actually witnesses the counterexample's external simulation, the fixed $\Pi^0_1$ sentence $\mathrm{Sim}^{c,n_0}_{\mathcal S,\phi}$ is true but not provable in $EA$. The statement that some polynomial bound exists is $\Sigma^0_2$, and the version without a size bound is $\Pi^0_2$. Conversely, the mere $EA$-unprovability of a simulation sentence does not refute \textbf{FR}.
\subsection{Simulation Verifiability}

The preceding results motivate a strengthening of \textbf{FR} that asks the weak base to verify a witnessing proof family, rather than merely to prove the resulting relative-consistency implication.

\begin{conjecture}[Simulation Verifiability, \textbf{SV}]\label{conj:sv}
Let $\mathcal S{\supseteq}S^1_2$ be sound, finitely axiomatized, and sequential, and let $\phi$ be a true sentence. If $\mathcal S{\sststile{}{n^{O(1)}}}Con_{\mathcal S{+}\phi}(n)$, then
$EA{\vdash}\mathrm{Sim}^{c,n_0}_{\mathcal S,\phi}$ for some standard $c,n_0$.
\end{conjecture}

\begin{assumption}\label{ass:ea-positive-mechanism}
For every $\mathcal S$ and $\phi$ in the scope of \textbf{FR}, the positive proof of Theorem~\ref{thm:relative-consistency-implies-simulation} can be formalized in $EA$: from an $EA$-proof of $Con_{\mathcal S}{\rightarrow}Con_{\mathcal S{+}\phi}$, $EA$ verifies the interpretation supplied by the Interpretation Existence Lemma, the polynomial overhead of the associated proof translation, the finite-consistency proof generator for $\mathcal S$, and their composition into $\mathrm{Sim}^{c,n_0}_{\mathcal S,\phi}$ for suitable standard $c,n_0$.
\end{assumption}

\begin{theorem}\label{thm:sv-fr}
\textbf{SV} implies \textbf{FR}. Conversely, \textbf{FR} together with Assumption~\ref{ass:ea-positive-mechanism} implies \textbf{SV}.
\end{theorem}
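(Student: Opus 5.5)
The proof splits into the two directions of the statement. Both directions are essentially compositions of results already established in this section.

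For the first direction, assume \textbf{SV} and fix $\mathcal S$ and $\phi$ in the scope of \textbf{FR}, with $\mathcal S{\sststile{}{n^{O(1)}}}Con_{\mathcal S{+}\phi}(n)$. \textbf{SV} supplies standard $c,n_0$ with $EA{\vdash}\mathrm{Sim}^{c,n_0}_{\mathcal S,\phi}$. Here $\mathrm{Sim}^{c,n_0}_{\mathcal S,\phi}$ is $\mathrm{Sim}^{c,n_0}_{\mathcal S,\mathcal U}$ with $\mathcal U{=}\mathcal S{+}\phi$, and $\mathcal U$ has polynomial-time decidable axioms because $\phi$ is a single sentence added to a polynomial-time axiomatized theory. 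So Corollary~\ref{cor:polynomially-certified-simulation} applies with $\mathcal B{:=}EA$ and gives $EA{\vdash}Con_{\mathcal S}{\rightarrow}Con_{\mathcal S{+}\phi}$. This is exactly the conclusion of \textbf{FR}. No further work is needed, since the certification theorem (Theorem~\ref{thm:certified-simulation-explained}) was proved unconditionally for any base extending $EA$.

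For the converse, assume \textbf{FR} and Assumption~\ref{ass:ea-positive-mechanism}, and again let $\mathcal S{\sststile{}{n^{O(1)}}}Con_{\mathcal S{+}\phi}(n)$ with $\phi$ true and $\mathcal S$ sound, finitely axiomatized and sequential. \textbf{FR} yields $EA{\vdash}Con_{\mathcal S}{\rightarrow}Con_{\mathcal S{+}\phi}$. Assumption~\ref{ass:ea-positive-mechanism} is then applied to this $EA$-proof. It says that $EA$ formalizes the whole chain behind Theorem~\ref{thm:relative-consistency-implies-simulation}:
\begin{enumerate}
\item the Interpretation Existence Lemma, producing an interpretation of $\mathcal S{+}\phi$ in $\mathcal S$;
\item the polynomially bounded proof translation of Theorem~\ref{thm:interpretability-implies-simulation};
\item Pudl\'ak's polynomial-size proofs of $Con_{\mathcal S}(n)$;
\item the composition of these into a family of $\mathcal S$-proofs of $Con_{\mathcal S{+}\phi}(n)$ of size at most $n^c$ for $n{\ge}n_0$.
\end{enumerate}
Together these give $EA{\vdash}\mathrm{Sim}^{c,n_0}_{\mathcal S,\phi}$ for suitable standard $c,n_0$, which is \textbf{SV}.

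The only delicate point is the bookkeeping in the converse. The assumption is stated as producing $\mathrm{Sim}^{c,n_0}_{\mathcal S,\phi}$ directly, so I would check that its hypothesis, an $EA$-proof of the relative-consistency implication, is precisely what \textbf{FR} delivers. I would also check that the scope conditions match: soundness, finite axiomatization and sequentiality of $\mathcal S$, and truth of $\phi$. Once those match, the theorem is a composition of \textbf{FR} with the assumption, and I expect no step beyond this verification to be an obstacle.
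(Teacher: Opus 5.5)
Your proposal is correct and matches the paper's proof essentially step for step: \textbf{SV} gives $EA{\vdash}\mathrm{Sim}^{c,n_0}_{\mathcal S,\phi}$, and Corollary~\ref{cor:polynomially-certified-simulation} with $\mathcal B{:=}EA$ yields the \textbf{FR} conclusion. Conversely, \textbf{FR} supplies the $EA$-proof of $Con_{\mathcal S}{\rightarrow}Con_{\mathcal S{+}\phi}$ that Assumption~\ref{ass:ea-positive-mechanism} turns into $EA{\vdash}\mathrm{Sim}^{c,n_0}_{\mathcal S,\phi}$, and your extra checks (polynomial-time axioms for $\mathcal S{+}\phi$, matching scope conditions) are harmless bookkeeping that the paper leaves implicit.
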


\begin{proof}
Assume \textbf{SV} and suppose $\mathcal S{\sststile{}{n^{O(1)}}}Con_{\mathcal S{+}\phi}(n)$. Then $EA{\vdash}\mathrm{Sim}^{c,n_0}_{\mathcal S,\phi}$ for some $c,n_0$, and Corollary~\ref{cor:polynomially-certified-simulation} yields $EA{\vdash}Con_{\mathcal S}{\rightarrow}Con_{\mathcal S{+}\phi}$, which is \textbf{FR}.

Conversely, assume \textbf{FR}, Assumption~\ref{ass:ea-positive-mechanism}, and the displayed external simulation. By \textbf{FR}, $EA{\vdash}Con_{\mathcal S}{\rightarrow}Con_{\mathcal S{+}\phi}$. The assumption formalizes the positive mechanism inside $EA$ and yields $EA{\vdash}\mathrm{Sim}^{c,n_0}_{\mathcal S,\phi}$ for suitable $c,n_0$, which is \textbf{SV}.
\end{proof}

Accordingly, \textbf{SV} is presently a strengthening of \textbf{FR}, not an established equivalent reformulation. Equivalence follows only under Assumption~\ref{ass:ea-positive-mechanism}, and the required line-by-line $EA$ formalization is not supplied here. A failure of \textbf{FR} necessarily produces true, $EA$-unprovable fixed-bound simulation sentences of the special $\Pi^0_1$ form above. A failure of \textbf{SV} could be weaker: it could occur even when $EA$ proves the relative-consistency implication and \textbf{FR} therefore holds.

\section{Further Conjectural Extensions}\label{sec:further-extensions}

Section~\ref{sec:higher-relative-consistency} addressed the asymptotic hard families predicted by the no-optimal-proof-system picture, while Section~\ref{sec:certified-simulation} separated those external hardness claims from base-relative certification. We now ask for a strictly stronger finite-scale version: do even the strongest proof systems have ubiquitous small hard tautologies, even if chosen at random? This conjecture can be read as a formalization of the folklore intuition that ``for any proof system most formulas are hard,'' while avoiding the naive proposal that Kraj\'{\i}\v{c}ek describes as ``void'' for random DNFs~\cite[Ch.~19]{Krajicekproof}; see also Pitassi~\cite{Pitassi2023bYoutube}. The tautologies here are not sampled formulas. They are indexed by true strings $x{\in}R$, and the conjecture gives a common hardness onset for all such strings of each fixed sufficiently large length.

It is awkward to require hardness at one predetermined small value of $n$ for a specific sentence $\phi$, such as $\phi{=}Con_{\mathcal S}$ or $\phi_{\mathrm{BB}}(k)$, since a pathological theory $\mathcal S$ might contain an axiom asserting $Con_{\mathcal S{+}\phi}(10^{100})$. We therefore fix the length $m$ of a true random axiom $x{\in}R$ and, for each desired exponential saving $\epsilon$, allow a threshold $N_{R,m,\epsilon}^{\mathcal S}$ beyond which hardness is required for every bounded-consistency parameter $n$. The threshold is uniform over all true random axioms of the same length.

A related pairwise question asks whether independently random axioms can help one another. If $x,y{\in}R$ have the same length and each remains random even conditional on the other, should adjoining $y{\in}R$ help prove $Con_{\mathcal S{+}(x{\in}R)}(n)$? This requires a genuinely pairwise strengthening and is treated after the single-axiom density consequence.

These finite-scale principles are not consequences of \textbf{HRC}/\textbf{FR} or \textbf{KH}; they are stronger conjectures motivated by that picture. Their purpose is to indicate what additional proof-complexity phenomena would follow if the random-axiom obstruction were strengthened from asymptotic polynomial non-simulation to exponential hardness beyond a uniform length-dependent onset.

Fix a sound arithmetical theory $\mathcal S{\supseteq}S^1_2$ with polynomial-time decidable axioms. Let $P_R^{\mathcal S}{:=}\{x{\in}R{:}EA{+}Con_{\mathcal S}{\vdash}x{\in}R\}$ and let $C_{\mathrm{rel}}^{\mathcal S}{:=}\{x{\in}R{:}EA{\vdash}Con_{\mathcal S}{\rightarrow}$ $Con_{\mathcal S{+}(x{\in}R)}\}$. By Chaitin's incompleteness theorem, $P_R^{\mathcal S}$ is finite. Let $k_R^{\mathcal S}$ be any integer exceeding the length of every string in $P_R^{\mathcal S}$. By Lemma~\ref{lem:consistency-implies-randomness}, one has $C_{\mathrm{rel}}^{\mathcal S}{\subseteq}P_R^{\mathcal S}$. Consequently, whenever $m{>}k_R^{\mathcal S}$ and $x{\in}R{\cap}\{0,1\}^m$, one has $EA{\not\vdash}Con_{\mathcal S}{\rightarrow}Con_{\mathcal S{+}(x{\in}R)}$.

For each $m{>}k_R^{\mathcal S}$ and each $\epsilon$ with $0{<}\epsilon{<}1$, the threshold $N_{R,m,\epsilon}^{\mathcal S}$ below may depend on $\mathcal S$, the fixed predicate $R$, the axiom length $m$, and $\epsilon$, but not on the particular string $x$ or on the later bounded-consistency parameter $n$. It may absorb the least length of a valid proof string and the syntactic cost of mentioning an axiom $x{\in}R$ of length $m$.

\begin{conjecture}\label{conj:seth-k-finite}
(\textbf{SETH-K-Finite}) Assume $\mathcal S$ is finitely axiomatized and sequential, and assume $\mathcal S{\sststile{}{n^{O(1)}}}Con_{\mathcal S}(n)$. For every $m{>}k_R^{\mathcal S}$ and every $\epsilon$ with $0{<}\epsilon{<}1$, there exists $N_{R,m,\epsilon}^{\mathcal S}$ such that, for every true $x{\in}R\cap\{0,1\}^m$ and every $n{>}N_{R,m,\epsilon}^{\mathcal S}$, one has $\mathcal S{\vdash}_{\leq 2^{(1-\epsilon)n}}Con_{\mathcal S{+}(x{\in}R)}(n)$ if and only if $EA{\vdash}Con_{\mathcal S}{\rightarrow}Con_{\mathcal S{+}(x{\in}R)}$.
\end{conjecture}
The right-hand side is exactly the weak-base relative-consistency threshold proposed by \textbf{HRC}/\textbf{FR}. By the definition of $k_R^{\mathcal S}$, it is false throughout the range $m{>}k_R^{\mathcal S}$. Thus, on the quantified tail, \textbf{SETH-K-Finite} is equivalently the assertion that, for every $m{>}k_R^{\mathcal S}$ and every $\epsilon$ with $0{<}\epsilon{<}1$, there exists $N_{R,m,\epsilon}^{\mathcal S}$ such that, for every true $x{\in}R\cap\{0,1\}^m$ and every $n{>}N_{R,m,\epsilon}^{\mathcal S}$, one has $\mathcal S\nvdash_{\leq 2^{(1-\epsilon)n}}Con_{\mathcal S{+}(x{\in}R)}(n)$. We retain the biconditional because it identifies the same information boundary as \textbf{HRC}/\textbf{FR}.

The threshold $N_{R,m,\epsilon}^{\mathcal S}$ may depend on $\mathcal S$, the fixed predicate $R$, the axiom length $m$, and the desired exponential saving $\epsilon$, but not on the particular string $x$ or on the later bounded-consistency parameter $n$. This is the usual \textbf{SETH}-style quantifier pattern: for every fixed exponential saving, hardness holds beyond an onset that may depend on that saving.
\subsection{Density of Hard Sentences}
\label{sec:density-hard-sentences}

The first consequence of \textbf{SETH-K-Finite} concerns the abundance of hard single-axiom extensions. For density arguments, the fixed-deficiency set $R$ is not the most convenient one, because it need not occupy a fraction tending to $1$ inside each length. We therefore apply the same conjectural schema to the logarithmic-deficiency set $R^{\log}$, defined by $x{\in}R^{\log}$ if and only if $K_U(x){\geq}|x|{-}d\log |x|$. Define $k_{R^{\log}}^{\mathcal S}$ and $N_{R^{\log},m,\epsilon}^{\mathcal S}$ exactly as above, with $R^{\log}$ in place of $R$.

\begin{theorem}\label{thm:density-seth-k-finite}
Assume \textbf{SETH-K-Finite} with $R$ replaced by $R^{\log}$. For every $m{>}k_{R^{\log}}^{\mathcal S}$ and every $\epsilon$ with $0{<}\epsilon{<}1$, there exists $N_{R^{\log},m,\epsilon}^{\mathcal S}$ such that, for every $n{>}N_{R^{\log},m,\epsilon}^{\mathcal S}$, at least $(1{-}O(m^{-d}))2^m$ strings $x{\in}\{0,1\}^m$ satisfy $x{\in}R^{\log}$ and $\mathcal S\nvdash_{\leq 2^{(1-\epsilon)n}}Con_{\mathcal S{+}(x{\in}R^{\log})}(n)$.
\end{theorem}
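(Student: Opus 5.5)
The plan is to combine two ingredients. The first is a counting bound showing that $R^{\log}$ fills all but an $O(m^{-d})$ fraction of $\{0,1\}^m$. The second is \textbf{SETH-K-Finite} for $R^{\log}$, in its tail form, which will give uniform hardness for every true axiom of length $m$ beyond a common onset. Fix $m{>}k_{R^{\log}}^{\mathcal S}$ and $0{<}\epsilon{<}1$, and take $N_{R^{\log},m,\epsilon}^{\mathcal S}$ to be exactly the threshold supplied by the conjecture.

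\emph{Counting step.} A string $x{\in}\{0,1\}^m$ fails to lie in $R^{\log}$ only if $K_U(x){<}m{-}d\log m$. In that case it is the output of some program of length less than $m{-}d\log m$. There are at most $\sum_{j<m-d\log m}2^j<2^{m-d\log m}=2^m m^{-d}$ such programs, and each outputs at most one string. Hence $|\{0,1\}^m\setminus R^{\log}|<m^{-d}2^m$, so $|R^{\log}\cap\{0,1\}^m|\geq(1-m^{-d})2^m$. This is unconditional, and it is the source of the $O(m^{-d})$ term. I will take logarithms base $2$ and round $d\log m$ down if needed; this only changes the constant inside $O(\cdot)$.

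\emph{Hardness step.} Since $m{>}k_{R^{\log}}^{\mathcal S}$, the argument preceding Conjecture~\ref{conj:seth-k-finite}, now run with $R^{\log}$, applies. That argument is Chaitin's theorem for $EA{+}Con_{\mathcal S}$ combined with Lemma~\ref{lem:consistency-implies-randomness}, and the proof of that lemma goes through verbatim for $R^{\log}$: the statement $\neg(x{\in}R^{\log})$ is still $\Sigma_1$, witnessed by a short program and a halting computation. It follows that for every true $x{\in}R^{\log}\cap\{0,1\}^m$ we have $EA{\not\vdash}Con_{\mathcal S}{\rightarrow}Con_{\mathcal S{+}(x{\in}R^{\log})}$. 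The right-hand side of the biconditional in \textbf{SETH-K-Finite} is therefore false. So for every such $x$ and every $n{>}N_{R^{\log},m,\epsilon}^{\mathcal S}$, the conjecture gives $\mathcal S\nvdash_{\leq 2^{(1-\epsilon)n}}Con_{\mathcal S{+}(x{\in}R^{\log})}(n)$. Because the threshold does not depend on $x$, one $N$ works simultaneously for all of the at least $(1-m^{-d})2^m$ strings from the counting step, which yields the theorem.

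The only delicate point I expect is checking that Chaitin finiteness still holds for $R^{\log}$, which is needed so that $k_{R^{\log}}^{\mathcal S}$ is well defined. The standard argument should adapt. A sound theory proving $x{\in}R^{\log}$ for arbitrarily long $x$ yields a search program of length $O(\log|x|)$ that outputs such an $x$. Its output would then satisfy $K_U(x)\le O(\log |x|)$, contradicting $K_U(x)\geq|x|-d\log|x|$ for large $|x|$. Once that is verified, the rest is bookkeeping.
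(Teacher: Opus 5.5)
Your proposal is correct and follows the paper's proof essentially verbatim. It uses the same counting bound on strings of length $m$ outside $R^{\log}$; the paper writes $2^{m-d\log m+1}$, and that factor of $2$ is exactly what your rounding remark covers when $m-d\log m$ is not an integer. It then uses the same observation that $m>k_{R^{\log}}^{\mathcal S}$ makes the right-hand side of \textbf{SETH-K-Finite} false for every true $x\in R^{\log}\cap\{0,1\}^m$, so the $x$-independent threshold gives hardness for all of them at once. Your extra checks are sound and are left implicit in the paper: that Lemma~\ref{lem:consistency-implies-randomness} and Chaitin finiteness carry over to $R^{\log}$, so that $k_{R^{\log}}^{\mathcal S}$ is well defined.
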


\begin{proof}
If $x{\notin}R^{\log}$ and $|x|{=}m$, then $K_U(x){<}m{-}d\log m$. Hence the number of strings of length $m$ outside $R^{\log}$ is at most $2^{m-d\log m+1}{=}O(2^m/m^d)$. Therefore at least $(1{-}O(m^{-d}))2^m$ strings of length $m$ lie in $R^{\log}$. Since $m{>}k_{R^{\log}}^{\mathcal S}$, every true $x{\in}R^{\log}\cap\{0,1\}^m$ satisfies $EA{\not\vdash}Con_{\mathcal S}{\rightarrow}Con_{\mathcal S{+}(x{\in}R^{\log})}$. For every $n{>}N_{R^{\log},m,\epsilon}^{\mathcal S}$, the conclusion follows from \textbf{SETH-K-Finite} with $R$ replaced by $R^{\log}$.
\end{proof}

Thus, for every sufficiently large axiom length $m$ and every $\epsilon$ with $0{<}\epsilon{<}1$, once $n{>}N_{R^{\log},m,\epsilon}^{\mathcal S}$, hard bounded-consistency instances occupy a $(1{-}O(m^{-d}))$ fraction of $\{0,1\}^m$. If these bounded-consistency statements are translated into tautologies in the usual way, then the resulting tautologies have size $(m{+}n)^{O(1)}$ and are indexed by a density-$1{-}O(m^{-d})$ set of strings of length $m$.

The density theorem is a direct consequence of the single-axiom conjecture. The next subsection asks the stronger pairwise question of whether one independently random axiom can help prove the bounded consistency of another random-axiom extension.

\subsection{Pairwise Hardness and No Mutual Help}
\label{sec:pairwise-hardness}

Ordinary \textbf{SETH-K-Finite} concerns proofs in $\mathcal S$ of $Con_{\mathcal S{+}(x{\in}R)}(n)$. It does not determine whether the different extension $\mathcal S{+}(y{\in}R)$ can help prove that same statement. Genuine no mutual help therefore requires a pairwise strengthening.

Fix a universal conditional machine compatible with $U$ and a constant $c_U^{\mathrm{cond}}$. For strings $x,y{\in}\{0,1\}^m$, write $x{\in}R{\mid} y$ if $K_U(x{\mid} y){\geq}m{-}c_U^{\mathrm{cond}}$. Write $x{\mathrel{\perp_R}}y$ if $x,y{\in}R$, $x{\in}R{\mid} y$, and $y{\in}R{\mid} x$. Thus $x{\mathrel{\perp_R}}y$ says that each string remains random when the other is supplied as auxiliary information.

The following theorem identifies the information that a weak-base relative-consistency implication would provide.

\begin{theorem}\label{thm:pairwise-relative-consistency-access}
For every pair of strings $x,y$, if $EA{\vdash}Con_{\mathcal S{+}(y{\in}R)}
{\rightarrow}Con_{\mathcal S{+}(x{\in}R)}$, then $EA{+}Con_{\mathcal S{+}(y{\in}R)}{\vdash}x{\in}R$.
\end{theorem}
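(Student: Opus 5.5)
The plan is to reuse Lemma~\ref{lem:consistency-implies-randomness} verbatim and then compose with the hypothesis. That lemma shows, for each fixed string $x$, that $EA{\vdash}Con_{\mathcal S{+}(x{\in}R)}{\rightarrow}x{\in}R$. Its proof uses only three ingredients:
\begin{enumerate}
\item $\neg(x{\in}R)$ is a $\Sigma_1$ sentence, witnessed by a short program $p$ and a halting time $t$ with $U(p){=}x$ in exactly $t$ steps and $|p|{<}|x|{-}c_U$;
\item provable $\Sigma_1$-completeness over $EA$ for the polynomial-time axiomatized theory $\mathcal S{+}(x{\in}R)$;
\item the fact that $x{\in}R$ is an axiom of that theory.
\end{enumerate}
Nothing about $y$ enters, so the lemma applies as stated.

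The argument then runs as follows. First, invoke Lemma~\ref{lem:consistency-implies-randomness} for the string $x$ to get
\[
EA{\vdash}Con_{\mathcal S{+}(x{\in}R)}{\rightarrow}x{\in}R.
\]
Second, chain this with the hypothesis $EA{\vdash}Con_{\mathcal S{+}(y{\in}R)}{\rightarrow}Con_{\mathcal S{+}(x{\in}R)}$ by transitivity of implication inside $EA$. This yields
\[
EA{\vdash}Con_{\mathcal S{+}(y{\in}R)}{\rightarrow}x{\in}R.
\]
Third, the deduction theorem (used in the trivial direction, by adding the antecedent as an axiom and applying modus ponens) gives $EA{+}Con_{\mathcal S{+}(y{\in}R)}{\vdash}x{\in}R$, which is the claim.

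The only point needing care is the first step: the $\Sigma_1$-completeness must be applied to the extension $\mathcal S{+}(x{\in}R)$. It should not be applied to $\mathcal S{+}(y{\in}R)$, which cannot refute $x{\in}R$ from a witness. I expect this to be routine, since the earlier lemma is stated for arbitrary fixed strings and does not depend on whether $x{\in}R$ is actually true. No truth assumption on $x$ or $y$ and no independence hypothesis $x{\mathrel{\perp_R}}y$ is needed. The argument is purely syntactic and uniform in the pair.
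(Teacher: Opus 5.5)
Your proposal is correct and matches the paper's proof: apply Lemma~\ref{lem:consistency-implies-randomness} to $x$, chain it inside $EA$ with the hypothesis, and pass to $EA{+}Con_{\mathcal S{+}(y{\in}R)}$ by modus ponens. One side remark is slightly off but unused: $\mathcal S{+}(y{\in}R)$ can refute $x{\in}R$ from a witness by $\Sigma_1$-completeness; what it cannot do is derive $0{=}1$ from that, since $x{\in}R$ is not one of its axioms.
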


\begin{proof}
By Lemma~\ref{lem:consistency-implies-randomness}, one has $EA{\vdash}Con_{\mathcal S{+}(x{\in}R)}{\rightarrow}x{\in}R$. Composing this implication with $EA{\vdash}Con_{\mathcal S{+}(y{\in}R)}{\rightarrow}Con_{\mathcal S{+}(x{\in}R)}$ gives $EA{\vdash}Con_{\mathcal S{+}(y{\in}R)}{\rightarrow}x{\in}R$, which is equivalent to $EA{+}Con_{\mathcal S{+}(y{\in}R)}{\vdash}x{\in}R$.
\end{proof}

\begin{corollary}\label{cor:pairwise-factual-no-access}
For every pair of strings $x,y$, if $EA{+}Con_{\mathcal S{+}(y{\in}R)}{\not\vdash}x{\in}R$, then $EA{\not\vdash}Con_{\mathcal S{+}(y{\in}R)}{\rightarrow}Con_{\mathcal S{+}(x{\in}R)}$.
\end{corollary}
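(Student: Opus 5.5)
This corollary is the contrapositive of Theorem~\ref{thm:pairwise-relative-consistency-access}, so the plan is simply to take that theorem as given and negate both sides. Fix strings $x,y$ and assume $EA{+}Con_{\mathcal S{+}(y{\in}R)}{\not\vdash}x{\in}R$. Suppose, toward a contradiction, that $EA{\vdash}Con_{\mathcal S{+}(y{\in}R)}{\rightarrow}Con_{\mathcal S{+}(x{\in}R)}$. Theorem~\ref{thm:pairwise-relative-consistency-access} then yields $EA{+}Con_{\mathcal S{+}(y{\in}R)}{\vdash}x{\in}R$, contradicting the hypothesis. Hence $EA{\not\vdash}Con_{\mathcal S{+}(y{\in}R)}{\rightarrow}Con_{\mathcal S{+}(x{\in}R)}$.

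The only point worth checking is that the theorem holds for every pair $x,y$ unconditionally. In particular, it must not depend on whether $x{\in}R$ or $y{\in}R$ is true in the standard model. This is the case because its proof composes the implication of Lemma~\ref{lem:consistency-implies-randomness} with the assumed implication. That lemma holds for each fixed string, and its argument is internal to $EA$ with no case split on the truth of $x{\in}R$.

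The remaining step is the passage between $EA{\vdash}Con_{\mathcal S{+}(y{\in}R)}{\rightarrow}x{\in}R$ and $EA{+}Con_{\mathcal S{+}(y{\in}R)}{\vdash}x{\in}R$. This is just the deduction theorem for the single sentence $Con_{\mathcal S{+}(y{\in}R)}$, which the theorem already uses. So I expect no real obstacle: the contrapositive is purely propositional.
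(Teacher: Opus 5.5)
Your proposal is correct and matches the paper's proof, which likewise obtains the corollary as the contrapositive of Theorem~\ref{thm:pairwise-relative-consistency-access}. Your additional checks, that the theorem is uniform in $x,y$ and that the deduction-theorem step is valid, are sound but not needed beyond what the theorem already supplies.
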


\begin{proof}
This is the contrapositive of Theorem~\ref{thm:pairwise-relative-consistency-access}.
\end{proof}

The remaining step is not supplied by conditional Kolmogorov randomness alone. Although $x{\in}R{\mid} y$ says that the bits of $y$ do not provide a short description of $x$, the sentence $Con_{\mathcal S{+}(y{\in}R)}$ might in principle carry information not reducible to the literal bits of $y$. The required factual nonaccess principle is therefore stated explicitly.

\begin{conjecture}\label{conj:conditional-factual-no-access}
(\textbf{Conditional Factual No Access}) Fix a sound arithmetical theory $\mathcal S{\supseteq}S^1_2$ with polynomial-time decidable axioms. There exists $k_{R,\mathrm{access}}^{\mathcal S}$ such that, for every $m{>}k_{R,\mathrm{access}}^{\mathcal S}$ and every $x,y{\in}\{0,1\}^m$ satisfying $x{\mathrel{\perp_R}}y$, one has $EA{+}Con_{\mathcal S{+}(y{\in}R)}{\not\vdash}x{\in}R$ and $EA{+}Con_{\mathcal S{+}(x{\in}R)}{\not\vdash}y{\in}R$.
\end{conjecture}

Thus the conjectural content is that, even after the weak base is given the consistency of the extension containing $y{\in}R$, it still lacks access to the independent fact $x{\in}R$, and conversely. The desired relative-consistency obstruction follows formally.

\begin{theorem}\label{thm:conditional-no-access}
Assume \textbf{Conditional Factual No Access}. Then there exists $k_{R,\mathrm{access}}^{\mathcal S}$ such that, for every $m{>}k_{R,\mathrm{access}}^{\mathcal S}$ and every $x,y{\in}\{0,1\}^m$ satisfying $x{\mathrel{\perp_R}}y$, one has $EA{\not\vdash}Con_{\mathcal S{+}(y{\in}R)}{\rightarrow}$ $Con_{\mathcal S{+}(x{\in}R)}$ and $EA{\not\vdash}Con_{\mathcal S{+}(x{\in}R)}{\rightarrow}$ $Con_{\mathcal S{+}(y{\in}R)}$.
\end{theorem}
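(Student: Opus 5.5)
The plan is to derive the statement directly from \textbf{Conditional Factual No Access} together with Corollary~\ref{cor:pairwise-factual-no-access}, taking the threshold to be the same $k_{R,\mathrm{access}}^{\mathcal S}$ that the conjecture supplies. No new construction is needed. The argument is two applications of the contrapositive established earlier, one for each ordered pair $(x,y)$ and $(y,x)$.

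Fix $m{>}k_{R,\mathrm{access}}^{\mathcal S}$ and strings $x,y{\in}\{0,1\}^m$ with $x{\mathrel{\perp_R}}y$. By the conjecture, $EA{+}Con_{\mathcal S{+}(y{\in}R)}{\not\vdash}x{\in}R$. Corollary~\ref{cor:pairwise-factual-no-access} holds for every pair of strings. Applied to the pair $(x,y)$, it yields $EA{\not\vdash}Con_{\mathcal S{+}(y{\in}R)}{\rightarrow}Con_{\mathcal S{+}(x{\in}R)}$.

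For the second clause, note that the relation $x{\mathrel{\perp_R}}y$ is symmetric by its definition. The conjecture also gives $EA{+}Con_{\mathcal S{+}(x{\in}R)}{\not\vdash}y{\in}R$. So the same corollary, applied with the roles of $x$ and $y$ exchanged, yields $EA{\not\vdash}Con_{\mathcal S{+}(x{\in}R)}{\rightarrow}Con_{\mathcal S{+}(y{\in}R)}$.

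There is no substantive obstacle here. The only point to check is that Theorem~\ref{thm:pairwise-relative-consistency-access}, and hence its corollary, used nothing beyond Lemma~\ref{lem:consistency-implies-randomness}. That lemma holds for each fixed string, with no randomness or length hypothesis, so the corollary applies to arbitrary pairs of length $m$.
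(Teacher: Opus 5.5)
Your proposal is correct and follows the paper's proof: you take the threshold supplied by \textbf{Conditional Factual No Access}, apply Corollary~\ref{cor:pairwise-factual-no-access} to obtain the first nonimplication, and interchange $x$ and $y$ to obtain the second. Because you invoke the conjecture's second clause directly, the symmetry of $\mathrel{\perp_R}$ is not actually needed in your argument.
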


\begin{proof}
Fix $m{>}k_{R,\mathrm{access}}^{\mathcal S}$ and $x,y{\in}\{0,1\}^m$ satisfying $x{\mathrel{\perp_R}}y$. By \textbf{Conditional Factual No Access}, $EA{+}Con_{\mathcal S{+}(y{\in}R)}{\not\vdash}x{\in}R$. Corollary~\ref{cor:pairwise-factual-no-access} therefore gives $EA{\not\vdash}Con_{\mathcal S{+}(y{\in}R)}{\rightarrow}Con_{\mathcal S{+}(x{\in}R)}$. Interchanging $x$ and $y$ gives the reverse nonimplication.
\end{proof}

Theorem~\ref{thm:conditional-no-access} supplies the information-theoretic obstruction. A separate finite-hardness principle is needed to convert failure of weak-base relative consistency into exponential proof lower bounds. The following is the pairwise analog of \textbf{SETH-K-Finite}.

\begin{conjecture}\label{conj:pairwise-seth-k-finite}
(\textbf{Pairwise SETH-K-Finite}) Fix a sound, finitely axiomatized sequential theory $\mathcal S{\supseteq}S^1_2$. There exists $k_{R,\mathrm{hard}}^{\mathcal S}$ such that, for every $m{>}k_{R,\mathrm{hard}}^{\mathcal S}$ and every $\epsilon$ with $0{<}\epsilon{<}1$, there exists $N_{R,m,\epsilon}^{\mathcal S,\mathrm{pair}}$ such that, for every $x,y{\in}\{0,1\}^m$ satisfying $x{\mathrel{\perp_R}}y$ and every $n{>}N_{R,m,\epsilon}^{\mathcal S,\mathrm{pair}}$, one has
\[
\mathcal S{+}(y{\in}R)\vdash_{\leq 2^{(1-\epsilon)n}}Con_{\mathcal S{+}(x{\in}R)}(n)
\quad\Longleftrightarrow\quad
EA{\vdash}Con_{\mathcal S{+}(y{\in}R)}{\rightarrow}Con_{\mathcal S{+}(x{\in}R)}.
\]
\end{conjecture}

Because $x{\mathrel{\perp_R}}y$ is symmetric and the threshold depends only on $m$, the same conjecture applied to the ordered pair $(y,x)$ gives the reverse comparison.

\begin{theorem}\label{thm:no-mutual-help}
Assume \textbf{Conditional Factual No Access} and \textbf{Pairwise SETH-K-Finite}, and let $k_{R,\mathrm{pair}}^{\mathcal S}{:=}\max\{k_{R,\mathrm{access}}^{\mathcal S},k_{R,\mathrm{hard}}^{\mathcal S}\}$. Then, for every $m{>}k_{R,\mathrm{pair}}^{\mathcal S}$ and every $\epsilon$ with $0{<}\epsilon{<}1$, there exists $N_{R,m,\epsilon}^{\mathcal S,\mathrm{pair}}$ such that, for every $x,y{\in}\{0,1\}^m$ satisfying $x{\mathrel{\perp_R}}y$ and every $n{>}N_{R,m,\epsilon}^{\mathcal S,\mathrm{pair}}$, one has $\mathcal S{+}(y{\in}R)\nvdash_{\leq 2^{(1-\epsilon)n}}Con_{\mathcal S{+}(x{\in}R)}(n)$ and $\mathcal S{+}(x{\in}R)\nvdash_{\leq 2^{(1-\epsilon)n}}Con_{\mathcal S{+}(y{\in}R)}(n)$.
\end{theorem}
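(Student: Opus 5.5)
The argument is a direct combination of the two assumed conjectures, in the same pattern as the single-axiom density argument in Theorem~\ref{thm:density-seth-k-finite}. Fix $m{>}k_{R,\mathrm{pair}}^{\mathcal S}$ and $\epsilon$ with $0{<}\epsilon{<}1$. Because $k_{R,\mathrm{pair}}^{\mathcal S}$ dominates $k_{R,\mathrm{hard}}^{\mathcal S}$, \textbf{Pairwise SETH-K-Finite} applies at length $m$ and supplies a threshold $N_{R,m,\epsilon}^{\mathcal S,\mathrm{pair}}$. This threshold depends only on $\mathcal S$, $R$, $m$ and $\epsilon$, and I will use it as the threshold in the conclusion. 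Now fix $x,y{\in}\{0,1\}^m$ with $x{\mathrel{\perp_R}}y$, and fix $n{>}N_{R,m,\epsilon}^{\mathcal S,\mathrm{pair}}$.

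For the first nonprovability claim, the plan is to make the right-hand side of the pairwise biconditional false. Since $m{>}k_{R,\mathrm{access}}^{\mathcal S}$, Theorem~\ref{thm:conditional-no-access}, which rests on \textbf{Conditional Factual No Access}, gives $EA{\not\vdash}Con_{\mathcal S{+}(y{\in}R)}{\rightarrow}Con_{\mathcal S{+}(x{\in}R)}$. The biconditional in \textbf{Pairwise SETH-K-Finite}, instantiated at $(x,y,n)$, then forces its left-hand side to fail. That is,
\[
\mathcal S{+}(y{\in}R)\nvdash_{\leq 2^{(1-\epsilon)n}}Con_{\mathcal S{+}(x{\in}R)}(n).
\]

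For the second claim, I will use the symmetry of $\perp_R$. Its definition is symmetric in $x$ and $y$, so $y{\mathrel{\perp_R}}x$ also holds. Both thresholds in the conjectures depend only on $m$ and $\epsilon$, not on the ordered pair. I therefore rerun the same argument with the roles of $x$ and $y$ exchanged. Theorem~\ref{thm:conditional-no-access} already states the reverse nonimplication $EA{\not\vdash}Con_{\mathcal S{+}(x{\in}R)}{\rightarrow}Con_{\mathcal S{+}(y{\in}R)}$. Combined with the pairwise biconditional for the ordered pair $(y,x)$, this gives $\mathcal S{+}(x{\in}R)\nvdash_{\leq 2^{(1-\epsilon)n}}Con_{\mathcal S{+}(y{\in}R)}(n)$, with the same threshold $N_{R,m,\epsilon}^{\mathcal S,\mathrm{pair}}$.

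The only step needing care is the bookkeeping of quantifiers. The threshold $N_{R,m,\epsilon}^{\mathcal S,\mathrm{pair}}$ must be chosen before $x$ and $y$, and it must serve both orderings. This holds because \textbf{Pairwise SETH-K-Finite} quantifies over all $\perp_R$-pairs of length $m$ after fixing the threshold, and the pair $(y,x)$ is one of those pairs. The hypotheses also match: $\mathcal S$ must be finitely axiomatized and sequential for the pairwise conjecture, and this is implicit in invoking it. Beyond this there is no real obstacle, since the theorem is a formal consequence of the two conjectures.
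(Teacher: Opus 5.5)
Your proposal is correct and follows essentially the same route as the paper. Both arguments use Theorem~\ref{thm:conditional-no-access} to falsify the right-hand side of the \textbf{Pairwise SETH-K-Finite} biconditional, then handle the reverse bound by interchanging $x$ and $y$. Your explicit treatment of the quantifier order (the threshold is fixed before the pair and serves both orderings) just spells out what the paper's shorter proof leaves implicit.
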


\begin{proof}
Fix $m{>}k_{R,\mathrm{pair}}^{\mathcal S}$ and $x,y{\in}\{0,1\}^m$ satisfying $x{\mathrel{\perp_R}}y$. By Theorem~\ref{thm:conditional-no-access}, $EA{\not\vdash}Con_{\mathcal S{+}(y{\in}R)}{\rightarrow}Con_{\mathcal S{+}(x{\in}R)}$. The biconditional in \textbf{Pairwise SETH-K-Finite} therefore gives $\mathcal S{+}(y{\in}R)\nvdash_{\leq 2^{(1-\epsilon)n}}Con_{\mathcal S{+}(x{\in}R)}(n)$ for every $0{<}\epsilon{<}1$ and every $n{>}N_{R,m,\epsilon}^{\mathcal S,\mathrm{pair}}$. Interchanging $x$ and $y$ gives the reverse lower bound.
\end{proof}

The conclusion concerns the original target $Con_{\mathcal S{+}(x{\in}R)}(n)$: adjoining $y{\in}R$ does not give short proofs of that bounded-consistency statement. This is stronger and more specific than saying only that the joint extension $\mathcal S{+}(x{\in}R){+}(y{\in}R)$ remains hard.
\subsection{Density of No-Mutual-Help Pairs}
\label{sec:density-no-mutual-help}

The pairwise conjectures also have a density consequence when randomness is measured with logarithmic deficiency. For equal-length strings $x,y{\in}\{0,1\}^m$, write $x{\mathrel{\perp_{R^{\log}}}}y$ if $x,y{\in}R^{\log}$, $K_U(x{\mid} y){\geq}m{-}d\log m$, and $K_U(y{\mid} x){\geq}m{-}d\log m$.

\begin{lemma}\label{lem:mutually-random-pairs-dense}
For every sufficiently large $m$, at least $(1{-}O(m^{-d}))2^{2m}$ ordered pairs $(x,y){\in}\{0,1\}^m{\times}\{0,1\}^m$ satisfy $x{\mathrel{\perp_{R^{\log}}}}y$.
\end{lemma}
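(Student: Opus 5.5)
The plan is a union bound over the four defining conditions of $x{\mathrel{\perp_{R^{\log}}}}y$, counting the ordered pairs that fail each condition by the standard program-counting argument already used in the proof of Theorem~\ref{thm:density-seth-k-finite}. The complement of the target set is contained in the union of four sets: $A_1=\{(x,y):x\notin R^{\log}\}$, $A_2=\{(x,y):y\notin R^{\log}\}$, $A_3=\{(x,y):K_U(x\mid y)<m-d\log m\}$, and $A_4=\{(x,y):K_U(y\mid x)<m-d\log m\}$. It suffices to show each has size $O(2^{2m}/m^d)$.

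For $A_1$, I will reuse the count from the proof of Theorem~\ref{thm:density-seth-k-finite}. There are fewer than $2^{m-d\log m+1}$ programs of length below $m-d\log m$, so at most $O(2^m/m^d)$ strings $x$ of length $m$ lie outside $R^{\log}$. Since $y$ is unconstrained, $|A_1|\le O(2^m/m^d)\cdot 2^m=O(2^{2m}/m^d)$. The bound for $A_2$ follows symmetrically.

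For $A_3$, I will fix $y$ and count over $x$. For a fixed auxiliary input $y$, the conditional universal machine maps each program $p$ to at most one output, so the number of $x$ with $K_U(x\mid y)<m-d\log m$ is again below $2^{m-d\log m+1}=2\cdot 2^m/m^d$. Summing over the $2^m$ choices of $y$ gives $|A_3|=O(2^{2m}/m^d)$, and $A_4$ is handled symmetrically by fixing $x$ and counting over $y$.

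Adding the four bounds, the complement has size $O(2^{2m}/m^d)$, so at least $(1-O(m^{-d}))2^{2m}$ ordered pairs satisfy $x{\mathrel{\perp_{R^{\log}}}}y$. Here $2^{d\log m}=m^d$ with $\log$ taken base $2$; another base changes only the exponent's constant, and the statement should be read with base-$2$ logarithms. I do not expect a real obstacle. The one point needing care is that the conditional count must be uniform in the auxiliary string, which holds because for each fixed conditioning input the machine is a partial function of the program alone. The phrase ``sufficiently large $m$'' is needed only so that $m-d\log m$ is positive and the $O(\cdot)$ constants are meaningful.
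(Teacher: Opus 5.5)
Your proposal is correct and matches the paper's argument. Both use a union bound over the four defining conditions, count the conditional-complexity failures by fixing the auxiliary string and summing over its $2^m$ values, and handle the $x\notin R^{\log}$ and $y\notin R^{\log}$ cases by the ordinary program-counting bound.
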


\begin{proof}
For each fixed $y{\in}\{0,1\}^m$, fewer than $2^{m-d\log m}$ strings $x$ satisfy $K_U(x{\mid}y){<}m{-}d\log m$. Hence the number of ordered pairs failing $K_U(x{\mid}y){\geq}m{-}d\log m$ is $O(2^{2m}/m^d)$. The same estimate applies with $x$ and $y$ interchanged. The ordinary counting bound gives the same estimate for pairs in which $x{\notin}R^{\log}$ or $y{\notin}R^{\log}$. A union bound gives the conclusion.
\end{proof}

\begin{corollary}\label{cor:dense-no-mutual-help}
Assume \textbf{Conditional Factual No Access} and \textbf{Pairwise SETH-K-Finite} with $R$ replaced by $R^{\log}$, and let
\[
k_{R^{\log},\mathrm{pair}}^{\mathcal S}
{:=}\max\bigl\{k_{R^{\log},\mathrm{access}}^{\mathcal S},
               k_{R^{\log},\mathrm{hard}}^{\mathcal S}\bigr\}.
\]
Then, for every sufficiently large $m{>}k_{R^{\log},\mathrm{pair}}^{\mathcal S}$ and every $\epsilon$ with $0{<}\epsilon{<}1$, there exists $N_{R^{\log},m,\epsilon}^{\mathcal S,\mathrm{pair}}$ such that, for every $n{>}N_{R^{\log},m,\epsilon}^{\mathcal S,\mathrm{pair}}$, at least $(1{-}O(m^{-d}))2^{2m}$ ordered pairs $(x,y){\in}\{0,1\}^m{\times}\{0,1\}^m$ satisfy both
\[
\mathcal S{+}(y{\in}R^{\log})
\nvdash_{\leq 2^{(1-\epsilon)n}}
Con_{\mathcal S{+}(x{\in}R^{\log})}(n)
\]
and
\[
\mathcal S{+}(x{\in}R^{\log})
\nvdash_{\leq 2^{(1-\epsilon)n}}
Con_{\mathcal S{+}(y{\in}R^{\log})}(n).
\]
\end{corollary}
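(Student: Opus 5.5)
The plan is to combine the counting estimate of Lemma~\ref{lem:mutually-random-pairs-dense} with the pointwise lower bounds of Theorem~\ref{thm:no-mutual-help}. That theorem is stated for $R$, so first I would check that it transfers verbatim to $R^{\log}$. Its proof uses only two ingredients: Theorem~\ref{thm:conditional-no-access}, which rests on Corollary~\ref{cor:pairwise-factual-no-access}, and the biconditional in \textbf{Pairwise SETH-K-Finite}.

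Corollary~\ref{cor:pairwise-factual-no-access} rests on Lemma~\ref{lem:consistency-implies-randomness}. Its proof needs only that $\neg(x{\in}R)$ is $\Sigma_1$ with an explicit witness. For $R^{\log}$ the witness is a program $p$ with $|p|{<}|x|{-}d\log|x|$ together with a halting computation producing $x$, which is still $\Sigma_1$. So $EA{\vdash}Con_{\mathcal S{+}(x{\in}R^{\log})}{\to}x{\in}R^{\log}$ holds, and the $R^{\log}$ versions of Theorem~\ref{thm:pairwise-relative-consistency-access}, Corollary~\ref{cor:pairwise-factual-no-access} and Theorem~\ref{thm:conditional-no-access} go through unchanged. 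The relation $\perp_R$ is replaced throughout by $\perp_{R^{\log}}$, which is exactly the relation the hypotheses of the corollary are stated for.

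Next, fix $m{>}k_{R^{\log},\mathrm{pair}}^{\mathcal S}$ large enough that Lemma~\ref{lem:mutually-random-pairs-dense} applies, and fix $\epsilon$. Take $N_{R^{\log},m,\epsilon}^{\mathcal S,\mathrm{pair}}$ to be the threshold supplied by \textbf{Pairwise SETH-K-Finite} for $R^{\log}$. This threshold depends only on $\mathcal S$, $m$ and $\epsilon$, not on the pair, and that uniformity is essential. For any $n$ above the threshold and any pair with $x{\perp_{R^{\log}}}y$, the $R^{\log}$ analog of Theorem~\ref{thm:no-mutual-help} yields both displayed nonprovability statements. The lemma guarantees at least $(1{-}O(m^{-d}))2^{2m}$ such ordered pairs, and every one of them satisfies both displays simultaneously, which gives the bound.

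The main point needing care is that transfer step, namely making sure nothing in the pairwise chain used specific properties of the constant-deficiency set $R$. The only property used is the $\Sigma_1$ form of nonrandomness, and I have checked that above. I would also make explicit that ``sufficiently large $m$'' means larger than both $k_{R^{\log},\mathrm{pair}}^{\mathcal S}$ and the onset in Lemma~\ref{lem:mutually-random-pairs-dense}.
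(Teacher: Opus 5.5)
Your proposal is correct and follows the paper's proof: the paper invokes Lemma~\ref{lem:mutually-random-pairs-dense} for the density count and then applies Theorem~\ref{thm:no-mutual-help} with $R^{\log}$ in place of $R$. You make explicit two points the paper leaves implicit: that the chain through Lemma~\ref{lem:consistency-implies-randomness} uses only the $\Sigma_1$ form of non-randomness, and that the threshold from \textbf{Pairwise SETH-K-Finite} is uniform over all pairs of length $m$.
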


\begin{proof}
By Lemma~\ref{lem:mutually-random-pairs-dense}, a $(1{-}O(m^{-d}))$ fraction of ordered pairs of length $m$ are mutually conditionally random with respect to $R^{\log}$. Apply Theorem~\ref{thm:no-mutual-help} with $R^{\log}$ in place of $R$.
\end{proof}

\section{Remarks on Provability and Standard-Model Uniformity}\label{sec:provability}
A complementary question is whether \textbf{HRC} might itself be unprovable. This section is exploratory: it establishes no independence result for \textbf{HRC}, but distinguishes several logically different gaps that arise when an external simulation claim is internalized. The certification theorem of Section~\ref{sec:certified-simulation} makes the distinction sharper. A counterexample to \textbf{FR} must be externally true yet evade $EA$-certification at every fixed polynomial bound that witnesses it; nevertheless, failure of such certification is only a necessary condition for a counterexample, not a sufficient one.

Nonstandard proof codes are relevant to ordinary arithmetized provability. If $M{\models}\mathcal S$ is nonstandard, an element $p{\in}M$ may satisfy
$M{\models}Prf_{\mathcal S}(p,\ulcorner\varphi\urcorner)$ even though $p$ is nonstandard and does not code a genuine finite proof in the standard model. This familiar phenomenon must, however, be separated from what happens in the fixed polynomially bounded simulation formula.

For fixed standard $c,n_0$, the sentence
$\mathrm{Sim}^{c,n_0}_{\mathcal S,\phi}$ is $\Pi^0_1$, and its proof-code quantifier is bounded by the standard polynomial length bound. At a standard value of $n$, every code $\pi$ satisfying $|\pi|{\leq}n^c$ is itself standard. A nonstandard model therefore cannot make a false standard bounded instance true merely by supplying a nonstandard ``short'' proof code. If the fixed-bound simulation sentence is true in $\mathbb N$ but unprovable in a weak base $\mathcal B$, a model of
$\mathcal B{+}\neg\mathrm{Sim}^{c,n_0}_{\mathcal S,\phi}$ must witness its failure at a nonstandard value of $n$. The gap is one of uniformity: external simulation controls every standard length, whereas an internal universal sentence also ranges over all model-internal, including nonstandard, lengths.

The unbounded formula $\mathrm{Sim}^{n_0}_{\mathcal S,\phi}$ is different. It is $\Pi^0_2$, and its existential proof-code quantifier is not bounded by a standard polynomial. In a nonstandard model, a nonstandard proof code can then witness the internal provability claim even for a standard target formula. Thus nonstandard proof codes directly explain some pathologies of the unbounded version, while the fixed-bound version is governed more precisely by nonstandard values of the length parameter.

This distinction also corrects the interpretation of the random-axiom case. For a fixed polynomial bound and a standard $n$, a model cannot falsely certify a missing short proof of $Con_{\mathcal S{+}(x\in R)}(n)$ by using a nonstandard code below that standard bound. Disagreement with a true fixed-bound simulation sentence occurs at nonstandard lengths. By contrast, the unbounded internal statement can be supported by nonstandard proof codes. The external conjectures \textbf{HRC} and \textbf{KH} concern genuine standard proof families, not either model-theoretic surrogate by itself.

Accordingly, \textbf{HRC} has a standard-model uniformity aspect, but the preceding observations do not show that it is ``merely'' a standardness principle. For a fixed pair $(\mathcal S,\phi)$ and fixed $c,n_0$, the formalized simulation sentence is an ordinary first-order $\Pi^0_1$ sentence. If \textbf{FR} fails for that pair, each witnessing fixed-bound sentence is true and $EA$-unprovable by Corollary~\ref{cor:polynomially-certified-simulation}. It remains possible, however, that $EA$ fails to prove such a sentence while still proving
$Con_{\mathcal S}{\rightarrow}Con_{\mathcal S{+}\phi}$; that situation would satisfy \textbf{FR} and fail only the stronger principle \textbf{SV}.

There is also a complexity-theoretic reason to expect difficulty. A sufficiently uniform proof of \textbf{HRC} would not be a modest metamathematical tidying-up result. Combined with the unconditional simulation theorem and the Busy Beaver reduction, it would yield sweeping lower bounds, including canonical Busy Beaver witnesses to non-simulation and, in the global form, the nonexistence of an optimal theory or proof system. Any proof strong enough to support those applications would therefore settle central open problems.

For a fixed theory such as $ZFC$ and a fixed pair $(\mathcal S,\phi)$, one may ask whether a chosen arithmetized instance is provable, refutable, or independent. Nothing in this paper establishes independence for any such fixed instance. The difficulty increases for uniform principles ranging over all true $\phi$ or all sound $\mathcal S$, because truth and soundness are external semantic conditions and cannot be represented by a single unrestricted first-order truth predicate within the same theory.

Stronger truth or satisfaction frameworks may therefore be useful for expressing the fully global principle. For exact Busy Beaver sentences $\phi_{BB}(k)$, the upper-bound component is $\Pi^0_1$, while the lower-bound component is witnessed by a finite halting computation. A metatheory with an appropriate truth predicate can reason uniformly about those components. This observation concerns the formulation of a global schema; it is not needed to formulate or study any fixed first-order instance.

The appropriate conclusion is consequently limited. The present results identify the exact base-relative certification obstacle that any counterexample to \textbf{FR} must exhibit and distinguish it from ordinary nonstandard-proof phenomena. They do not prove that \textbf{HRC} is independent of $ZFC$, that no sound theory can certify a true simulation, or that external simulation and internal certification must always coincide.

\section{Conclusion}\label{sec:conclusion}
This paper studies the Characterization Problem for simulation between arithmetic theories and advances a specific conjectural answer to its hardness side. It does not solve the problem. Rather, its unconditional results isolate constraints that any successful characterization should satisfy, while its conjectural part proposes one way those constraints might fit together.

The paper's principal unconditional contributions are structural. First, feasible relative consistency suffices for simulation, giving a general upper-bound mechanism that includes the case in which
$EA{\vdash}Con_{\mathcal S}{\rightarrow}Con_{\mathcal S{+}\phi}$. Second, any hard true extension can be replaced by one of Busy Beaver form, showing that canonical incompleteness phenomena already suffice to witness non-simulation. The extension normal form further converts hardness for an arbitrary consistent target theory into hardness for a one-sentence consistency extension of the original theory. Finally, certification of a uniform simulation in a base $\mathcal B$ forces the corresponding relative-consistency implication in that same base. In particular, $EA$-certified simulations satisfy the exact conclusion of \textbf{FR}. The resulting no-certification theorems for consistency, Busy Beaver, and random-axiom extensions are not external proof-length lower bounds; they identify where the named bases cannot verify a putative short-proof family.

Against this backdrop, the paper proposes \textbf{HRC} as its central structural criterion: if $EA{\not\vdash}Con_{\mathcal S}{\rightarrow}Con_{\mathcal S{+}\phi}$, then for every constant $c{>}0$, $\mathcal S\centernot{\sststile{}{n^c}}Con_{\mathcal S{+}\phi}(n)$. This is the paper's main candidate characterization of simulation in terms of relative-consistency transfer in a weak base theory.

Within this framework, the paper also formulates \textbf{KH} as a distinguished random-axiom instance of the same general obstruction. Its guiding idea is that canonical random axioms should instantiate the hardness predicted by failure of weak-base relative-consistency transfer. In this sense, \textbf{KH} is not a competing criterion, but a particularly natural specialization of \textbf{HRC}.

The certification analysis also separates two conjectural claims. \textbf{FR} asks external simulation to imply an $EA$ relative-consistency proof. The stronger \textbf{SV} asks $EA$ to verify a witnessing fixed-bound simulation sentence. The paper proves the implication from \textbf{SV} to \textbf{FR}; the converse remains conditional on an $EA$ formalization of the positive simulation mechanism.

Proving \textbf{HRC} would amount to a major breakthrough. As the missing converse to the strongest known upper-bound mechanism, it would convert relative-consistency transfer in a weak base theory from a sufficient condition into a genuine characterization of simulation. Whether that proposal is correct remains open. The present paper narrows the space of plausible alternatives by isolating a positive mechanism, canonical external hardness forms, and the exact base-relative certification consequence that every internally verified simulation must satisfy.

\bibliographystyle{amsplain}
\bibliography{equivalence}

\end{document}